\keywords{Finite model theory, first-order logic, regular languages}
 \newcommand{\pref}{\sqsubseteq_\mathsf{pref}}
 \newcommand{\suff}{\sqsubseteq_\mathsf{suff}}
 \newcommand{\spref}{\sqsubset_\mathsf{pref}}
 \newcommand{\ssuff}{\sqsubset_\mathsf{suff}}
\newcommand{\proot}{\varrho}
\newcommand{\vars}{X}
\newcommand{\qrank}{\mathsf{qr}}
\mathchardef\mhyphen="2D
\newcommand{\fraisse}{Fra\"{\i}ss\'e}
\newcommand{\ef}{Ehrenfeucht-\fraisse \xspace}
\newcommand{\concrel}{R_\circ}
\newcommand{\qr}{\mathsf{qr}}
\newcommand{\facts}{\mathsf{facts}}
\newcommand{\subs}{\sigma}
\newcommand{\mv}{\mathsf{w}}
\newcommand{\fc}{\mathsf{FC}}
\newcommand{\logeq}{\mathbin{\dot{=}}}
\newcommand{\lang}{\mathcal{L}}
\newcommand{\signature}{\tau}
\newcommand{\emptyword}{\varepsilon}
\newcommand{\union}{\mathrel{\cup}}
\newcommand{\intersect}{\mathrel{\cap}}
\newcommand{\df}{:=}
\renewcommand{\epsilon}{\varepsilon}
\newcommand{\rnc}[1]{\renewcommand{#1}}
\rnc{\leq}{\ensuremath{\leqslant}}
\rnc{\geq}{\ensuremath{\geqslant}}
\rnc{\le}{\leq}
\rnc{\ge}{\geq}
\newcommand{\isdef}{\ensuremath{:=}}
\newcommand{\deff}{\isdef}
\newcommand{\set}[1]{\ensuremath{\{#1\}}}
\newcommand{\setc}[2]{\set{#1  \mid  #2}}
\newcommand{\NN}{\ensuremath{\mathbb{N}}}
\newcommand{\NNpos}{\ensuremath{\mathbb{N}_+}}
\newcommand{\Structure}[1]{\ensuremath{\mathcal{#1}}}
\newcommand{\Aut}{\ensuremath{\mathcal{M}}}
\tikzstyle{rstate}=[state,ellipse]
\tikzset{>={latex}}
\theoremstyle{plain} %
\begin{document}

\title[Characterization and Decidability of FC-Definable Regular Languages]{Characterization and Decidability of FC-Definable Regular Languages}
\thanks{The majority of this work was completed while the first author was at Loughborough University.}	%

\author[S.~M.~Thompson]{Sam M. Thompson\lmcsorcid{0000-0002-3476-6739}}[a]
\author[N.~Schweikardt]{Nicole Schweikardt\lmcsorcid{0000-0001-5705-1675}}[b]
\author[D.~D.~Freydenberger]{Dominik D. Freydenberger\lmcsorcid{0000-0001-5088-0067}}[c]

\address{King's College London, London, UK}
\email{sam.m.thompson.tcs@gmail.com}
\address{Humboldt-Universität zu Berlin, Berlin, Germany}
\email{schweikn@informatik.hu-berlin.de}
\address{Loughborough University, Loughborough, UK}	%
\email{d.d.freydenberger@lboro.ac.uk}

\begin{abstract}
  \noindent $\fc$ is a first-order logic that reasons over all factors of a finite word using concatenation, and can define non-regular languages like that of all squares ($ww$).
In this paper, we establish that there are regular languages that are not $\fc$-definable.
Moreover, we give a decidable characterization of the $\fc$-definable regular languages in terms of algebra, automata, and regular expressions.
The latter of which is natural and concise: Star-free generalized regular expressions extended with the Kleene star of terminal words.
\end{abstract}

\maketitle
\section{Introduction}\label{sec:intro}
The logic $\fc$ was introduced by Freydenberger and Peterfreund~\cite{frey2019finite} as a new approach to first-order logic over finite words.
Commonly, logic treats words as a sequence of positions which are given symbols with symbol predicates (e.g., see~\cite{straubing2012finite}).
Instead, $\fc$ reasons over the set of factors of an input word using concatenation $x \logeq y \cdot z$ along with constant symbols for terminal symbols and for the empty word.
For example, $\exists x , y, z \colon (x \logeq y \cdot y) \land (y \logeq \mathtt{b} \cdot z)$, with $\mathtt{b}$ being a terminal symbol, defines the language of those words that contain a factor of the form $\mathtt{b} v \mathtt{b} v$ with $v \in \Sigma^*$. 
$\fc$ is a finite model variant of the \emph{theory of concatenation} (introduced by Quine~\cite{quine1946concatenation}).
Both combine \emph{word equations} (which have been extensively studied~\cite{day2024closer, karhumaki2000expressibility, karhumaki2001expressibility, lin2016string, plandowski2004satisfiability}) with first-order logic.

A motivation for $\fc$ is the connection to \emph{document spanners}, which were introduced by Fagin, Kimelfeld, Reiss and Vansummeren~\cite{fag:spa}.
Document spanners query text documents by first extracting tables using regular expressions, and then applying relational algebra operators to those tables.

$\fc$ extended with \emph{regular constraints} (atomic formulas which ensure that a variable is replaced with a word from a certain regular language) has the same expressive power as the \emph{generalized core spanners}~\cite{frey2019finite}, a particular class of document spanners.
Moreover, the existential-positive fragment of $\fc$ with regular constraints\footnote{Although, the existential-positive fragment requires an extra constant symbol, which Freydenberger and Peterfreund~\cite{frey2019finite} call the \emph{universe variable}.} has the same expressive power as the \emph{core spanners}~\cite{frey2019finite} -- which were introduced by Fagin et al.~\cite{fag:spa} to capture the core functionality of IBM's \emph{Annotation Query Language}.
This strong connection between $\fc$ and document spanners allows one to bring techniques from finite model theory to text querying, such as formulas with bounded width~\cite{frey2019finite}, acyclic conjunctive queries~\cite{freydenberger2021splitting}, and \ef games~\cite{thompson2023generalized}.
However, the use of regular constraints naturally raises the question as to whether they are needed.

This paper shows that the class of $\fc$-definable regular languages is a proper subset of the regular languages (over non-unary alphabets\footnote{The situation is more straightforward for unary alphabets: $\fc$ over a unary alphabet is exactly the \emph{semi-linear languages}~\cite{thompson2023generalized} and thus can define all the unary regular languages~\cite{parikh1966context}.}).
This demonstrates that an extension by regular constraints is necessary when using $\fc$ as a logic for document spanners.
Moreover, stepping aside from the database theory aspects of $\fc$, this paper provides a comprehensive answer to a fundamental question regarding the expressive power of first-order logic with concatenation.
It follows in the long-standing tradition of characterizing the regular languages definable in various logics (see Straubing~\cite{straubing2012finite} as a starting point), focusing on the logic $\fc$.  
The main results provide a decidable characterization of the $\fc$-definable regular languages in terms of (generalized) regular expressions, automata, and algebra.

\paragraph*{Regular Expression Characterization}
Arguably, the most natural and concise formulation of the $\fc$-definable regular languages is the regular expression formulation:
Take the star-free generalized regular expressions (which are usual regular expressions without Kleene star, but with complement) and add the Kleene star of terminal words. 
For example, $\bigl( \mathtt{abab} \cdot (\mathtt{abab})^* \bigr)^c$, where the superscript $c$ denotes complement, describes the language of words that are not of the form $(\mathtt{ab})^{2n}$ for any $n > 0$.
To formalize this class, we use the \emph{star-free closure} operator; introduced by Place and Zeitoun in~\cite{place2019all}.
However, we rely on the results and formulations given in the more recent version~\cite{place2023closing}.
The star-free closure takes a class of regular languages $\mathcal{C}$, and maps it to $\mathsf{SF}(\mathcal{C})$ which is the smallest class containing $\mathcal{C}$, the singletons $\{ \mathtt a \}$ for terminal symbols, and is closed under concatenation, union, and complement.
A main result of this paper shows
that the class of $\fc$-definable regular languages is exactly $\mathsf{SF}(\mathcal{R})$ where $\mathcal{R} \df \{ w^* \mid w \in \Sigma^* \}$.
Using the notion of star-free closure allows us to draw upon the results and techniques of~\cite{place2019all,place2023closing}, in particular, the algebraic characterization of the star-free closure of so-called \emph{prevarieties}.

\paragraph*{Automata Characterization}
The automata characterization is given by a necessary and sufficient
criterion relying on a new notion 
which we call a \emph{loop-step cycle}.
A minimal DFA has a loop-step cycle if there are two words $w$ and $v$, that are not repetitions of the same word, and a sequence of $n\geq 2$ pairwise distinct states $p_0, p_1, \dots, p_{n-1}$ such that reading $w$ in any state $p_i$ results in again being in state $p_i$, and  reading $v$ in any state $p_i$ result in moving to state $p_{i+1 \pmod{n}}$.  
Notice that the minimal DFA for the language of words with an even number of $\mathtt{a}$ symbols (over the alphabet $\{ \mathtt a, \mathtt b \}$) has a loop-step cycle (see~\autoref{fig:even-a}).
A main result of this paper shows that a regular language is $\fc$-definable if, and only if, its minimal DFA does \emph{not} have a loop-step cycle.
We shall use this automata characterization to show that it is decidable (and $\mathsf{PSPACE}$-complete) to determine whether a regular language (given as a minimal DFA) is $\fc$-definable.

\begin{figure}
    \centering
   \begin{tikzpicture}[>=stealth',shorten >=1pt,auto,node distance= 1cm, scale = 1, transform shape,initial text=, bend angle = 20]
    \tikzstyle{vertex}=[circle,fill=white!25,minimum size=12pt,inner sep=3pt,outer sep=0pt,draw=white]
        \node[state, initial, accepting] (p0)  at (0,0)                   {$p_0$};
        \node[state] (p1) at (2,0) {$p_1$};
        
        \path[->] (p0) edge [bend left]   node [align=center]  {$\mathtt{a}$} (p1)
        (p1) edge [bend left]   node [align=center]  {$\mathtt{a}$} (p0)
        (p0) edge [loop above] node [align=center] {$\mathtt b$} (p0)
        (p1) edge [loop above] node [align=center] {$\mathtt b$} (p1);
    \end{tikzpicture}
    \caption{The minimal DFA for the language of words with an even number of $\mathtt{a}$ symbols.}
    \label{fig:even-a}
\end{figure}

\paragraph*{Algebraic Characterization}
Every regular language $L$ is associated with a finite monoid $M_L$ known as the \emph{syntactic monoid}\footnote{For this introduction, the precise definition of $M_L$ is not important.}, and a function $\eta_L$ called the \emph{syntactic morphism} which maps words to elements of this monoid (i.e., $\eta_L \colon \Sigma^* \rightarrow M_L$), see Pin~\cite{pin2010mathematical} for example.
Then, we have that $L = \eta_L^{-1}(F)$ for some $F \subseteq M_L$.
A main result of this paper characterizes
the $\fc$-definable regular languages by a class of syntactic morphisms which we call \emph{group primitive morphisms}.
A syntactic morphism $\eta_L \colon \Sigma^* \rightarrow M_L$ is group primitive if the inverse of any \emph{periodic element} $m \in M_L$ -- that is, where $m^n \neq m^{n+1}$ for all $n \in \mathbb{N}_+$ -- is a subset of $w^*$ for some $w \in \Sigma^*$.

\paragraph*{Related Work}
We refer to the more common first-order logic over strings (a linear order with symbol predicates) by $\mathsf{FO}[<]$.
Our characterization of the $\fc$-definable regular languages parallels the characterization of $\mathsf{FO}[<]$-definable regular languages~\cite{mcnaughton1971counter, schutzenberger1965finite}.
Each of the formalisms of our characterization naturally generalizes the characterizations of the $\mathsf{FO}[<]$-languages:
The class $\mathsf{SF}(\mathcal{R})$ extends the star-free languages~\cite{schutzenberger1965finite}.
The notion of automata with loop-step cycles is analogous to \emph{finite-automaton cycle existence}~\cite{cho1991finite}.
Lastly, group primitive languages generalize the languages definable by \emph{aperiodic monoids}~\cite{schutzenberger1965finite}.

Regarding the expressive power of $\fc$, Freydenberger and Peterfreund~\cite{frey2019finite} showed that $\setc{\mathtt a^n \mathtt b^n}{n\geq 0}$ is not $\fc$-definable.
Then, using \ef games, Thompson and Freydenberger~\cite{thompson2023generalized} provided a general tool for $\fc$ inexpressibility called the \emph{fooling lemma}.
Conjunctive query fragments of $\fc$ were considered by Thompson and Freydenberger~\cite{thompson2024languages}, where the expressive power was compared to other language generators.
Further work on $\fc$ includes an $\fc$-variant of Datalog~\cite{BellDF25}, acyclic conjunctive queries~\cite{freydenberger2021splitting}, and connections to cyber security~\cite{bell2025parsing}.

\subsection*{Related Version}
The present paper is an extended version of~\cite{ThompsonSF25} which appeared at LICS 2025.
This version includes full proofs as well as a new result in~\autoref{sec:ac0}.

\subsection*{Structure of the Paper}
\autoref{sec:prelims} gives some preliminary definitions.
We start~\autoref{section:main-results} by giving the formal definitions of star-free closure (\autoref{sec:star-free-closure-defn}), group primitive languages (\autoref{sec:group-primitive-defn}), and loop-step cycles (\autoref{sec:loop-step-cycle-defn}); all of which are required for formulating our main result: a decidable characterization of the $\fc$-definable regular languages (\autoref{sec:main-results}).
At the end of~\autoref{sec:main-results}, we give an overview of the proof structure of this main result, while
Sections~\ref{sec:gp-lsc}--\ref{sec:fc-gp} are
dedicated to actually proving this characterization.
\autoref{sec:ac0} shows that the $\fc$-definable regular languages over a two-letter alphabet are a strict subclass of the regular languages in $\mathsf{AC}^0$.
We close the paper in \autoref{sec:conclusion}.

\section{Preliminaries}\label{sec:prelims}
Let $\mathbb{N} \df \{0,1,2,\dots\}$ and let $\mathbb{N}_+ \df \mathbb{N} \setminus \{ 0 \}$ where $\setminus$ denotes set difference.
The cardinality of a set $S$ is denoted by $|S|$.
For $n \in \NNpos$, we use $[n]$ for $\setc{i\in\NN}{1\leq i\leq n}$.
For a vector $\vec a \in A^k$, for some set $A$ and $k \in \mathbb{N}$, we write $x \in \vec a$ to denote that $x$ is a component of $\vec a$.
For a finite set $A \subset \mathbb{N}$, the minimum and maximum elements of $A$ are denoted by $\mathsf{min}(A)$ and $\mathsf{max}(A)$ respectively.

\paragraph*{Words and Languages}
We use $\Sigma$ for a fixed and finite alphabet of terminal symbols
of size
$|\Sigma|\geq 2$. We write $\Sigma^*$ for the set of all words of finite length built from symbols in $\Sigma$, we let $\emptyword$ denote the empty word, and we let $\Sigma^+\deff\Sigma^*\setminus\set{\emptyword}$.
For a word, $w\in\Sigma^*$ we write $w^*$ for the language $\setc{w^n}{n\in\NN}$, where $w^n$ is $n$ consecutive repetitions of $w$.

If $w = w_1 \cdot w_2 \cdot w_3$ where $w, w_1, w_2, w_3 \in \Sigma^*$, then $w_1$ is a \emph{prefix} of $w$ (denoted $w_1 \pref w$), $w_2$ is a \emph{factor} of $w$ (denoted $w_2 \sqsubseteq w$), and $w_3$ is a \emph{suffix} of $w$ (denoted $w_3 \suff w$).
If~$w_2 \neq w$ also holds, then $w_2 \sqsubset w$, and we use the analogous symbols $\ssuff$ and $\spref$.
If $w_1 \neq \emptyword \neq w_3$,
then we call $w_2$ an \emph{internal factor} of $w$.
The set of all factors of $w \in \Sigma^*$ is denoted by $\facts(w) \df \{ u \in \Sigma^* \mid u \sqsubseteq w \}$.
We use $|w|$ for the length of $w \in \Sigma^*$; and for some $\mathtt{a} \in \Sigma$, we use $|w|_\mathtt{a}$ to denote the number of occurrences of $\mathtt{a}$ within $w$.

A word $w \in \Sigma^+$ is \emph{primitive} if for any $u \in \Sigma^+$ and $n \in \mathbb{N}$, we have that $w = u^n$ implies $n = 1$.
In other words, $w \in \Sigma^+$ is primitive if is not a repetition of a smaller word.
A word $w \in \Sigma^+$ is \emph{imprimitive} if it is not primitive.
For any word $w \in \Sigma^+$, the \emph{primitive root} of $w$ is the unique primitive word $\proot(w) \in \Sigma^+$ such that $w = \proot(w)^k$ for some $k \geq 1$.
For a language $L \subseteq \Sigma^*$, let~$\proot(L) \df \{ \proot(w) \mid w \in L \setminus \{ \emptyword \} \} $.

\paragraph*{Algebraic Concepts}
A \emph{monoid} $(M, \cdot, e)$ is a set $M$ with an associative multiplication operation $\cdot$ and an identity element $e$. 
When the multiplication and identity are clear from context, we shall denote a monoid simply by its set $M$.
Notice that $\Sigma^*$ with concatenation is a monoid with $\emptyword$ being the identity.
Given two monoids $M$ and $N$, a \emph{morphism} is a function $f \colon M \rightarrow N$ such that $f(x \cdot y) = f(x) \cdot f(y)$ for all $x,y\in M$.

The \emph{syntactic congruence} of $L \subseteq \Sigma^*$ is the relation $\sim_L$ defined as $u \sim_L v$ if and only if $xuy \in L \Leftrightarrow xvy \in L$ for all $x, y \in \Sigma^*$.
The set of equivalence classes in $\Sigma^*$ with respect to $\sim_L$ is denoted $\Sigma^* / {\sim_L}$. The morphism $\eta_L \colon \Sigma^* \rightarrow \Sigma^* / {\sim_L}$ which maps words to their equivalence class is called the \emph{syntactic morphism of $L$}.

Given a language $L \subseteq \Sigma^*$, the set $\Sigma^* / {\sim_L}$ with the multiplication $\eta_L(u) \cdot \eta_L(v) = \eta_L(u v)$ forms a monoid called the \emph{syntactic monoid of $L$}, which we often denote with $M_L$, see~\cite{pin2010mathematical} for more details.
We say that $x \in M_L$ is \emph{aperiodic} if there exists some $n \in \mathbb{N}_+$ such that $x^n = x^{n+1}$; otherwise,  $x$ is \emph{periodic}.
A monoid $M$ is aperiodic if every element is aperiodic (see Sch{\"u}tzenberger~\cite{schutzenberger1965finite}).
We note that for any finite monoid $M$, and any $x \in M$, there exist $j,p > 0$ such that
$x^{pn+j} = x^j$ for all $n \in \mathbb{N}$ (e.g., see Section~6, Chapter~2 of Pin~\cite{pin2010mathematical}).
We call an element $i \in M$ of a monoid \emph{idempotent} if $i = i \cdot i$. 

For a monoid $M$, a subset $G \subseteq M$ is a \emph{subgroup} if $G$ forms a group (with
the same multiplication as $M$, but with
a potentially different identity element than $M$), and if $|G| = 1$ then $G$ is a \emph{trivial subgroup}.

\paragraph*{The Logic FC}
We assume the reader is familiar with the standard concepts of first-order logic (for example, see Libkin~\cite{libkin2004elements}).
However, we shall look at the particular logic with which this paper is concerned in more detail.
The definitions given here are based on the definitions given by Thompson and Freydenberger in~\cite{thompson2023generalized} -- who define $\fc$ in a slightly more technical way than its original definition by Freydenberger and Peterfreund~\cite{frey2019finite}.

$\fc$ is built on one fixed signature $\signature_\Sigma \df \{ \concrel, \emptyword\}\cup \{ \mathtt{a} \mid \mathtt{a} \in \Sigma \}$ for every terminal alphabet~$\Sigma$, where $\concrel$ is a ternary relation symbol and where $\emptyword$ and each $\mathtt{a} \in \Sigma$ is a constant symbol.
Given a word $w \in \Sigma^*$, the $\signature_\Sigma$-structure that represents $w$ is defined as $\Structure{A}_w \df (A, \concrel^{\Structure{A}_w}, \emptyword^{\Structure{A}_w}, (\mathtt{a}^{\Structure{A}_w})_{\mathtt a \in \Sigma} )$ where
\begin{itemize}
\item $A \df \facts(w) \union \{ \perp \}$ is the universe,
\item $\concrel^{\Structure{A}_w} \df \{ (a,b,c) \in \facts(w)^3 \mid a = b \cdot c \}$ 
\item $\mathtt{a}^{\Structure{A}_w} \df \mathtt{a}$ if $|w|_\mathtt{a} \geq 1$, and $\mathtt{a}^{\Structure{A}_w} = \perp$ otherwise, and
\item $\emptyword^{\Structure{A}_w} \df \emptyword$.
\end{itemize}
We call such a structure $\Structure{A}_w$ an \emph{$\fc$-structure}.

Note that if $|w|_\mathtt{a} = 0$, then $\mathtt{a}^{\Structure{A}_w} = \perp$.
However, we usually deal with those words $w$ where $|w|_\mathtt{a} \geq 1$ for all $\mathtt{a} \in \Sigma$.
Therefore, we tend to write $\mathtt{a} \in \Sigma$ rather than $\mathtt{a}^{\Structure{A}_w} \in A$.

Let $\vars$ be a fixed, countably infinite set of variables (disjoint from $\Sigma$ and $\signature_\Sigma$).
An $\fc$-formula is a first-order formula whose atomic formulas are $\concrel(x,y,z)$, where $x$, $y$, and $z$ are variables or constants.
As syntactic sugar, we write $(x \logeq y \cdot z)$ for atomic $\fc$ formulas, as we always interpret $\concrel$ as concatenation.
More formally:
\begin{defi}
Let $\fc$ be the set of all $\fc$-formulas defined recursively as:
\begin{itemize}
\item If $x, y, z \in \vars \union \Sigma \union \{ \emptyword \}$, then $(x \logeq y \cdot z) \in \fc$,
\item if $\varphi, \psi \in \fc$, then $(\varphi \land \psi), (\varphi \lor \psi), \neg \varphi \in \fc$, and
\item if $\varphi \in \fc$ and $x \in \vars$, then $\forall x \colon \varphi \in \fc$ and $\exists x \colon \varphi \in \fc$.
\end{itemize}
\end{defi}
Parentheses are freely omitted when the meaning is clear.
We allow atomic formulas of the form $x \logeq y$, as this can be expressed by $x \logeq y \cdot \emptyword$. 
We also use $Q x_1, x_2, \dots, x_n \colon \varphi$ as shorthand for $Q x_1 \colon Q x_2 \colon \dots Q x_n \colon \varphi$ where $Q \in \{ \exists, \forall \}$.

In $\fc$, an interpretation $\mathcal{I} \df (\Structure{A}_w, \subs)$ consists of a $\signature_\Sigma$-structure~$\Structure{A}_w$ that represents some $w \in \Sigma^*$, and a mapping~$\subs \colon \vars \rightarrow \facts(w)$. 
Notice that $\subs(x) \neq \perp$ is assumed for all $x \in \vars$.
We write $\mathcal{I} \models \varphi$ to denote that $\varphi$ is true in $\mathcal{I}$, defined in the usual way (see Chapter 2 of~\cite{libkin2004elements} for example). 

If $\varphi$ is a sentence (that is, $\varphi$ has no free variables), then we simply write $\Structure{A}_w \models \varphi$.
Furthermore, as $\Structure{A}_w$ and $\Structure{A}_v$ are isomorphic if and only if $w = v$, 
we can use $w$ as a shorthand for $\Structure{A}_w$ when appropriate.

\begin{defi}
The language defined by a sentence $\varphi \in \fc$ is $\lang(\varphi) \df \{ w \in \Sigma^* \mid w \models \varphi \}$.
Let $\lang(\fc)$ be the class of languages definable by an $\fc$ sentence.
\end{defi}

In contrast to $\mathsf{FO}[<]$, the logic $\fc$ can define non-regular languages.
Possibly the most straightforward example is 
$\exists \, \mv , x \colon \bigl( \textsc{Word}(\mv) \land (\mv \logeq x \cdot x) \bigr)$
where $\textsc{Word}(\mv) \df \forall x, y \colon \Bigl( \bigl( (x \logeq \mv \cdot y) \lor ( x \logeq y \cdot \mv) \bigr) \rightarrow (y \logeq \emptyword) \Bigr)$. 
That is, $\textsc{Word}(\mv)$ states $\mv$ is the whole word.
Now consider the formula
\[\varphi \df \exists\, \mv, x \colon \bigl( \textsc{Word}(\mv) \land (\mv \logeq x \cdot x) \bigr) \land \neg \exists y \colon (y \logeq \mathtt b). \]
Assuming $\Sigma = \{ \mathtt a, \mathtt b \}$, we have that $\lang(\varphi) = (\mathtt{aa})^*$ which is a regular language.
However, it cannot be expressed in $\mathsf{FO}[<]$ as it is not \emph{star-free}; e.g., see~\cite{pin2020prove}.

\section{Main Results}\label{section:main-results}
In this section, we
provide precise statements of this paper's main results: A decidable characterization of the $\fc$-definable regular languages,
including a regular expression-based characterization, an algebraic characterization, and an automata characterization.
Before formulating these main results, we first need to provide the necessary definitions.

\subsection{Star-Free Closure}\label{sec:star-free-closure-defn}
Familiarity of the basics regarding regular languages is assumed (for example, see Pin~\cite{pin2021handbook}).
The \emph{star-free closure} is an operator on classes of languages that is of particular interest to this article.
Place and Zeitoun~\cite{place2023closing} conducted a systematic study of this operator.

\begin{defi}[Star-free closure~\cite{place2023closing}]\label{defn:SF-C}
Let $\mathcal{C}$ be an arbitrary class of regular languages. 
The class $\mathsf{SF}(\mathcal{C})$ is defined as the smallest class of languages containing $\mathcal{C}$, singletons $\{ \mathtt a \}$ where $\mathtt a \in \Sigma$, and which is closed under union, complement, and concatenation.
\end{defi}

Thus, the star-free closure naturally extends the star-free languages
(the regular languages definable in $\mathsf{FO}[<]$); see~\cite{place2023closing, schutzenberger1965finite, straubing2012finite} for more details.

This paper is particularly interested in the star-free closure of a particular class:
Let~$\mathcal{R}$ denote the class of languages of the form $\{ w^* \mid w \in \Sigma^* \}$.
Recalling~\autoref{defn:SF-C}, the class~$\mathsf{SF}(\mathcal{R})$ is recursively defined as follows:
\begin{itemize}
\item $\{ \mathtt a \} \in \mathsf{SF}(\mathcal{R})$ for all
  $\mathtt{a} \in \Sigma$ (we often write $\mathtt{a}$ for $\set{\mathtt{a}}$), and $w^* \in \mathsf{SF}(\mathcal{R})$ for any $w \in \Sigma^*$.
\item If $L, L' \in \mathsf{SF}(\mathcal{R})$, then each of the
  languages
  \ $L\union L'$, \ $\Sigma^* \setminus L$, \ $L \cdot L'$ \
  belong to $\mathsf{SF}(\mathcal{R})$.
\end{itemize}

Note that $\mathsf{SF}(\mathcal{R})$ is closed under Boolean operations (union, intersection, complement/set difference).
Furthermore, notice that $\emptyset, \Sigma^*, \{ \emptyword \} \in
\mathsf{SF}(\mathcal{R})$ since $\emptyset = (\mathtt a \intersect
\mathtt b)$, $\Sigma^* = \Sigma^* \setminus \emptyset$, and $\{
\emptyword \} = \mathtt{a}^* \setminus \mathtt{a} \cdot \mathtt{a}^*$
for $\mathtt a , \mathtt{b}\in \Sigma$ with $\mathtt{a}\neq\mathtt{b}$.

\begin{exa}\label{example:sfr}
Let $\Sigma = \{ \mathtt{a}, \mathtt{b} \}$ and consider the regular language $L \df (\mathtt{aa} \union \mathtt{bb})^*$.
While it may not be immediately clear that $L$ belongs to $\mathsf{SF}(\mathcal{R})$, consider the following $\mathsf{SF}(\mathcal{R})$-language:
\[
L_1 \df  \bigl( (\{ \emptyword\} \union \Sigma^* \mathtt{b}) \cdot (\mathtt{aa})^* \mathtt{a} \cdot (\mathtt{b} \Sigma^* \union \{ \emptyword \} ) \bigr) \union 
\bigl( (\{ \emptyword\} \union \Sigma^* \mathtt{a}) \cdot (\mathtt{bb})^* \mathtt{b} \cdot (\mathtt{a} \Sigma^* \union \{ \emptyword \} ) \bigr) .
\]
For any $w \in L$ and any factor $\mathtt{a}^n \sqsubseteq w$ which is a prefix or preceded by $\mathtt{b}$, and is a suffix or succeeded by $\mathtt{b}$, it necessarily holds that $n$ is even.
The symmetric statement holds for the analogous factors $\mathtt{b}^n \sqsubseteq w$.
One can therefore verify that $\Sigma^* \setminus L_1 = L$, and thus $L$ does indeed belong to $\mathsf{SF}(\mathcal{R})$.
\end{exa}

\subsection{Group Primitive Languages}\label{sec:group-primitive-defn}

We now introduce a new class of regular languages defined by a restriction on the syntactic morphism:

\begin{defi}[Group Primitive Language]\label{defn:group-prim}
Let $M$ be a finite monoid.
A morphism $\mu \colon \Sigma^* \rightarrow M$ is \emph{group primitive} if~$| \proot(\mu^{-1}(x))| = 1$ for all periodic elements $x \in M$.
A language $L \subseteq \Sigma^*$ is \emph{group primitive} if it is regular and its syntactic morphism is group primitive. 
\end{defi}

As an aside, let us note the connection to languages recognized by aperiodic monoids.
Recall that an aperiodic monoid is a monoid where all elements $x$ satisfy $x^n = x^{n+1}$ for some $n \in \mathbb{N}_+$.
This can alternatively be phrased as follows: An aperiodic monoid is a monoid where all subgroups are trivial (see~\cite{schutzenberger1965finite}, for example).
In comparison, while the syntactic monoid of a group primitive language may contain non-trivial subgroups, the preimage of any element of a non-trivial subgroup (under the syntactic morphism) is a language of the form $\{ w^i \mid i \in I \}$ for some $w \in \Sigma^*$ and $I \subseteq \mathbb{N}$.
This is because if $G \subseteq M_L$ is a subgroup, then the identity $e_G$ of $G$ is the only aperiodic element (if we take $x^{n} = x^{n+1}$ with $x \in G$, then a simple cancellation argument implies $x = e_G$).
Consequently, $|\proot(\mu^{-1}(x))| = 1$ for every $x \in G \setminus \{ e_G\}$ and thus $\mu^{-1}(x) = \{ w^i \mid i \in I \}$ for some $w \in \Sigma^*$ and $I \subseteq \mathbb{N}$.
Moreover, this implies $\mu^{-1}(e_G) = \{ w^j \mid j \in J \}$ for some $J \subseteq \mathbb{N}$ by the following reasoning:
pick any $u \in \mu^{-1}(e_G)$ and any $w^i \in \mu^{-1}(x)$ and notice that $\mu(w^i \cdot u) = \mu(w^i) \cdot \mu(u) = x \cdot e_G = x$ and thus $w^i u = w^r$ since $w^i u \in \mu^{-1}(x)$.
Taking $w^i u = w^r$, we get $u = w^{r-i}$.

\begin{exa}
Let $\Sigma = \{ \mathtt a, \mathtt b \} $.
The language $L_1 \df (\mathtt{aa})^*$ has the syntactic monoid $\{ e, x, y \}$ with $e$ being the identity, $xx = e$, and $yx = xy =yy = y$.
The syntactic morphism $\eta_{L_1} \colon \Sigma^* \rightarrow \{ e, x, y \}$ is: $\eta_{L_1}(w) = e$ if $w \in (\mathtt{aa})^*$, $\eta_{L_1}(w) = x$ if $w \in \mathtt{a}(\mathtt{aa})^*$, and $\eta_{L_1}(w) = y$ if $|w|_\mathtt{b} \geq 1$.
Clearly, $x$ is the only periodic element, however as $\eta_{L_1}^{-1}(x) = \mathtt{a} (\mathtt{aa})^*$, we have that $L_1$ is a group primitive language since $\proot(\mathtt{a}(\mathtt{aa})^*) = \{ \mathtt a \}$. 

Now consider $L_2 \df \{w \in \Sigma^* \mid |w|_\mathtt{a} \text{ is even} \}$.
The syntactic monoid for $L_2$ is $\{ e, x \}$ with $e$ being the identity and $xx = e$.
The syntactic morphism $\eta_{L_2} \colon \Sigma^* \rightarrow \{e,x \}$ is $\eta_{L_2}(w) = e$ if $|w|_\mathtt{a}$ is even, and $\eta_{L_2}(w) = x$ otherwise.
Again, it is clear that $x$ is periodic, however, $\mathtt{ba}$ and $\mathtt{baaa}$ are in $\eta_{L_2}^{-1}(x)$ which means that $L_2$ is not group primitive as $\proot(\mathtt{ba}) \neq \proot(\mathtt{baaa})$.
\end{exa}

\subsection{Automata with Loop-Step Cycles}\label{sec:loop-step-cycle-defn}
A deterministic finite automaton (DFA) $\mathcal{M}$ is a tuple $(Q, \Sigma, \delta, q_0, F)$ with $Q$ being the set of states, $\Sigma$ being the alphabet, $\delta \colon Q \times \Sigma \rightarrow Q$ being the transition function, $q_0 \in Q$ being the start state, and $F \subseteq Q$ being the set of accepting states.
We write $\delta^* \colon Q \times \Sigma^* \rightarrow Q$ as the reflexive and transitive closure of $\delta$. 
Then, $\mathcal{M}$ defines the language $\lang(\mathcal{M})$ of all words $w \in \Sigma^*$ such that $\delta^*(q_0, w) \in F$.
We call a DFA \emph{minimal} if there does not exist an equivalent DFA with fewer states.

We now define a condition for minimal DFAs:

\begin{defi}[Loop-Step Cycle]\label{defn:loop-step}
Let $\mathcal{M} \df (Q, \Sigma, \delta, q_0, F)$ be a minimal DFA.
We say that $\mathcal{M}$ has a \emph{loop-step cycle} if there exist $n \geq 2$ pairwise distinct states $p_0, p_1, \dots, p_{n-1}$ and words $w,v \in \Sigma^+$, where $\proot(w) \neq \proot(v)$, such that:
\begin{itemize}
\item $\delta^*(p_i,w) = p_i$ for all $i \in \{0, \dots, n-1 \}$, and
\item $\delta^*(p_i, v) = p_{i+1}$ for all $i \in \{0, \dots, n-2 \}$ and $\delta^*(p_{n-1}, v) = p_0$.
\end{itemize}
\end{defi}

We shall often write $\delta^*(p_i, v) = p_{i+1 \pmod{n}}$ to denote that $\delta^*(p_i, v) = p_{i+1}$ for all $i \in \{0, \dots, n-2 \}$ and that $\delta^*(p_{n-1}, v) = p_0$.

\begin{exa}\label{example:loop-step}
Consider the languages $L_1 \df (\mathtt{aa} \union \mathtt{ab} \union \mathtt{ba})^*$ and $L_2 \df (\mathtt{aa} \union \mathtt{ab} \union \mathtt{bb})^*$.
The minimal DFA for $L_1$ is given in~\autoref{fig:two-aut-lsp} (on the
left-hand side) and the minimal DFA for $L_2$ is given in~\autoref{fig:two-aut-lsp} (on the right-hand side).
Note that the automaton for $L_1$ has a loop-step cycle since $\delta(p_0, \mathtt{a}) = p_1$ and $\delta(p_1, \mathtt a) = p_0$, and $\delta^*(\mathtt{ba}, p_0) = p_0$ and $\delta^*(\mathtt{ba}, p_1) = p_1$.
Although it is somewhat tedious to verify, the automaton for $L_2$ does not have a loop-step cycle:
To see why this holds, consider the automaton $\mathcal M_2$ given
in~\autoref{fig:two-aut-lsp} for $L_2$, and assume that $\mathcal M_2$ has a loop-step cycle.
Hence, there are $n\geq 2$ pairwise distinct states $p_0, p_1, \dots, p_{n-1}$ and $w, v \in \Sigma^*$ where $\proot(w) \neq \proot(v)$  such that:
\begin{itemize}
\item $\delta^*(p_i,w) = p_i$ for all $i \in \{0, \dots, n-1 \}$, and
\item $\delta^*(p_i, v) = p_{i+1 \pmod{n}}$.
\end{itemize}

Notice that $q_2$ cannot reach any other state, and therefore $q_2 \notin \{ p_0, \dots, p_{n-1} \}$.
Furthermore, the only incoming transition for $q_3$ is labeled
$\mathtt b$ and the only incoming transition for $q_1$ is labeled
$\mathtt{a}$ and thus $q_3$ and $q_1$ cannot both be in $\{ p_0, \dots, p_{n-1} \}$.

This leaves us with $\{ p_0, \dots, p_{n-1} \}$ being either $\{ q_0, q_1 \}$ or $\{ q_0, q_3 \}$.
For state $q_i$ and $q_j $, let $L_{q_i, q_j} \df \{ u  \mid \delta^*(q_i, u) = q_j \}$.
Then, one can verify that
\begin{itemize}
\item $L_{q_0, q_0} \intersect L_{q_1, q_1} = (\mathtt{aa})^*$ and $L_{q_0, q_1} \intersect L_{q_1, q_0} = \mathtt{a} (\mathtt{aa})^*$ and thus $\proot(w) = \proot(v)$. 
\item $L_{q_0, q_0} \intersect L_{q_3, q_3} = (\mathtt{bb})^*$ and $L_{q_0, q_3} \intersect L_{q_3, q_0} = \mathtt{b} (\mathtt{bb})^*$ and thus $\proot(w) = \proot(v)$.
\end{itemize}
Consequently, $\mathcal{M}_2$ cannot have a loop-step cycle. 
\begin{figure}
    \centering
    \begin{tikzpicture}[>=stealth',shorten >=1pt,auto,node distance=1.8 cm, scale = 1, transform shape,initial text=,bend angle=20]

    \node[initial,state,accepting] (A) {$p_0$};
    \node[state] (B) [right of=A] {$p_1$};
    \node[state] (C) [below of=B] {$p_2$};
    \node[state] (D) [left of=C] {$p_3$};

\path[->] (A) edge [bend left]   node [align=center]  {$\mathtt{a}$} (B)
(B) edge [bend left]   node [align=center]  {$\mathtt{a},\mathtt{b}$} (A)
      (A) edge [bend left]      node [align=center]  {$\mathtt{b}$} (D)
      (D) edge [bend left]      node [align=center]  {$\mathtt{a}$} (A)
      (C) edge [loop below] node [align=center]  {$\mathtt{a}, \mathtt{b}$} (C)
      (D) edge [below] node [align=center]  {$\mathtt{b}$} (C);

       \node[initial,state,accepting] (E) [node distance=4cm, right of=B] {$q_0$};
    \node[state] (F) [right of=E] {$q_1$};
    \node[state] (G) [below of=F] {$q_2$};
    \node[state] (H) [left of=G] {$q_3$};

\path[->] (E) edge [bend left]   node [align=center]  {$\mathtt{a}$} (F)
(F) edge [bend left]   node [align=center]  {$\mathtt{a},\mathtt{b}$} (E)
      (E) edge [bend left]      node [align=center]  {$\mathtt{b}$} (H)
      (H) edge [bend left]      node [align=center]  {$\mathtt{b}$} (E)
      (G) edge [loop below] node [align=center]  {$\mathtt{a}, \mathtt{b}$} (G)
      (H) edge [below] node [align=center]  {$\mathtt{a}$} (G);
    
    \end{tikzpicture}
    \caption{Minimal DFA for $(\mathtt{aa} \union \mathtt{ab} \union
      \mathtt{ba})^*$ on the left-hand side, and minimal DFA for $(\mathtt{aa} \union \mathtt{ab} \union \mathtt{bb})^*$ on the right-hand side. See~\autoref{example:loop-step}.}
    \label{fig:two-aut-lsp}
\end{figure}
\end{exa}

We now show that it is decidable to determine whether a DFA has a loop-step cycle:

\begin{thm}\label{lemma:aut-PSPACE}
Deciding whether a minimal DFA has a loop-step cycle is $\mathsf{PSPACE}$-complete.
\end{thm}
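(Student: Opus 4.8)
I would treat the two directions separately. Membership in $\pspace$ rests on a normal-form lemma — every loop-step cycle is witnessed by words of length at most $|Q|^{\bigO(|Q|)}$ — after which one guesses and verifies such a witness in polynomial space and appeals to Savitch's theorem. Hardness follows by a reduction from the problem of deciding whether a minimal DFA has $n\geq 2$ pairwise distinct states cyclically permuted by a single word in $\Sigma^+$ — equivalently, whether its language is not star-free — which is $\pspace$-complete by~\cite{cho1991finite}.

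\textbf{Upper bound.} Suppose $\mathcal{M}$ has a loop-step cycle; fix distinct states $p_0,\dots,p_{n-1}$ carrying one, and consider the $n$-fold power automaton $\mathcal{M}^n$ (state set $Q^n$, componentwise transitions), with start state $s=(p_0,\dots,p_{n-1})$ and accepting set $\{s,t\}$ where $t=(p_1,\dots,p_{n-1},p_0)$. Its language $S$ consists exactly of the words that loop on all $p_i$ or step $p_i\mapsto p_{i+1\pmod n}$ on all $i$. Since $\proot(w)=\proot(v)$ is equivalent to $wv=vw$ for $w,v\in\Sigma^+$, a loop-step cycle exhibits two words of $S$ that do not commute, so $S\not\subseteq u^*$ for every $u\in\Sigma^*$; a standard shortest-word argument then extracts two non-commuting words of $S$ of length $\bigO(|Q|^{2n})$ (take a shortest nonempty $u_0\in S$, and, using $S\not\subseteq\proot(u_0)^*$, a shortest $z\in S\setminus\proot(u_0)^*$; the length bound holds since $\Sigma^*\setminus\proot(u_0)^*$ has an automaton with $\bigO(|Q|^n)$ states, and $z,u_0$ have distinct primitive roots). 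Replacing, if necessary, a step word $z$ by the loop word $z^n$ (same primitive root) and adjoining a shortest word taking $s$ to $t$ in $\mathcal{M}^n$ yields $w,v$ with the required loop/step behaviour and $\proot(w)\neq\proot(v)$, of length $|Q|^{\bigO(|Q|)}$; this proves the normal form. The decision procedure is then to nondeterministically guess $n\leq|Q|$, pairwise distinct $p_0,\dots,p_{n-1}\in Q$, lengths $\ell_w,\ell_v\leq|Q|^{\bigO(|Q|)}$ in binary and a position $j\leq\ell_w+\ell_v$, then guess $w$ and $v$ letter by letter while running one copy of $\mathcal{M}$ from each $p_i$ in parallel, maintaining binary length counters, and recording the $\bigO(1)$ letters needed to compute the $j$-th symbols of $wv$ and $vw$; accept iff each parallel run returns to its origin on $w$, the $i$-th run ends in $p_{i+1\pmod n}$ on $v$, and those two $j$-th symbols differ. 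This runs in space $\bigO(|Q|\log|Q|)$, so the problem is in $\mathsf{NPSPACE}=\pspace$.

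\textbf{Lower bound.} It is $\pspace$-complete to decide whether a minimal DFA $\mathcal{A}$ over $\Sigma$ has a nontrivial cycle, i.e.\ $n\geq 2$ pairwise distinct states cyclically permuted by some $v\in\Sigma^+$~\cite{cho1991finite}. I reduce this to our problem via a fresh letter $c\notin\Sigma$: let $L'\df\{\,u\in(\Sigma\cup\{c\})^*\mid u\text{ with all }c\text{'s deleted lies in }L(\mathcal{A})\,\}$. Adjoining a $c$-labelled self-loop at every state of $\mathcal{A}$ gives, in polynomial time, a complete DFA for $L'$ which is minimal (all states are reachable, and any two are separated by the same $\Sigma^*$-word that separates them in $\mathcal{A}$) and in which $c$ acts as the identity. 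Now $\mathcal{A}$ has a nontrivial cycle iff this minimal DFA of $L'$ has a loop-step cycle: from a nontrivial cycle with word $v\in\Sigma^+$ on states $p_0,\dots,p_{n-1}$, the pair $w\df c$ and $v$ is a loop-step cycle since $c$ loops everywhere and $\proot(c)=c\neq\proot(v)$; conversely, from a loop-step cycle with words $w,v$, deleting the $c$'s from $v$ leaves a word in $\Sigma^+$ — nonempty, as $v$ induces an $n$-cycle with $n\geq 2$, not the identity — inducing the same cyclic permutation, hence a nontrivial cycle of $\mathcal{A}$. Thus the problem is $\pspace$-hard, and therefore $\pspace$-complete.

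\textbf{Main obstacle.} The technical heart is the normal-form lemma: a loop-step cycle can genuinely require words of exponential length, and the primitive-root condition ties $w$ and $v$ together, so one cannot guess a short $w$ and a short $v$ independently; instead one shows that the failure of $S\subseteq u^*$ inside the exponential-size automaton $\mathcal{M}^n$ has a polynomially described certificate, which is then reassembled into a short loop/step pair. On the hardness side the only delicate point is that $\pspace$-completeness of cycle existence is used for \emph{minimal} DFAs; a fully self-contained alternative is to reduce from acceptance by a polynomial-space Turing machine, encoding its configuration graph so that a loop-step cycle appears exactly along an accepting computation.
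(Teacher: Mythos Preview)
Your lower bound is essentially identical to the paper's: both reduce from Cho--Huynh's finite-automaton cycle existence by adding a fresh letter acting as the identity everywhere, and both verify minimality of the resulting DFA in the same way.

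Your upper bound is correct in outline but takes a more laborious route than the paper. You first prove a normal-form lemma bounding the witness length by $|Q|^{\bigO(|Q|)}$, then run a guess-and-verify NPSPACE procedure with binary length counters and a guessed mismatch position to certify $\proot(w)\neq\proot(v)$. The paper avoids the length bound entirely with a neat trick: it guesses \emph{two equal-length} words $u_1,u_2$ letter by letter (in lockstep), both acting as the cyclic shift $p_i\mapsto p_{i+1\pmod n}$, and records only whether they ever differ. Equal length plus $u_1\neq u_2$ forces $\proot(u_1)\neq\proot(u_2)$, so $u_1$ is a step word and $u_2^n$ is a loop word with a different root. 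Completeness is then shown by explicitly constructing, from an arbitrary loop-step pair $(w,v)$, two equal-length step words $u_1=w^{n|v|}v^{n|w|+n+1}$ and $u_2=w^{2n|v|}v^{n+1}$. This removes the need for any a priori length bound, any product-automaton argument, or any separate commutation check.

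One point in your argument is underspecified: when you write ``replacing, if necessary, a step word $z$ by $z^n$ and adjoining a shortest word taking $s$ to $t$'', the case where both $u_0$ and $z$ are loop words needs an extra line---you must observe that the shortest step word $v'$ has $\proot(v')$ equal to at most one of $\proot(u_0),\proot(z)$, so pairing $v'$ with the other yields the desired loop/step pair. This is easy but should be said; as written, the reassembly step reads as if a single recipe covers all cases.
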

\begin{proof}
Let $\mathcal{M} \df (Q, \Sigma, \delta, q_0, F)$ by a minimal DFA.
We shall prove that deciding whether $\mathcal{M}$ has a loop-step cycle is $\mathsf{PSPACE}$-complete.

An array $B$ of size $n$ is a \emph{cyclic shift} of an array $A$ of size $n$ if $B[i] = A[i+1]$ for all $0 \leq i < n-1$, and $B[n-1] = A[0]$.

\paragraph*{Upper Bound}
Upon input of a minimal DFA $\mathcal{M}$, we want to decide in $\mathsf{PSPACE}$ whether
$\mathcal{M}$ has a loop-step cycle.
By Savitch's Theorem (see Section 4.3 of~\cite{arora2009computational}), we have $\mathsf{PSPACE} = \mathsf{NPSPACE}$. This allows us to make non-deterministic guesses.
Clearly, \autoref{alg:PSPACE} runs non-deterministically with polynomial
space; hence it decides a problem that belongs to $\mathsf{PSPACE}$.
It remains for us to show the correctness of~\autoref{alg:PSPACE}.

\begin{algorithm}
    \caption{$\mathsf{NPSPACE}$ algorithm for the following: \\
    \textbf{Input}: Minimal DFA $\mathcal{M} = (Q, \Sigma, \delta, q_0, F)$. \\ 
    \textbf{Output}: $\mathsf{True}$ iff $\mathcal{M}$ has a loop-step cycle.}
    \label{alg:PSPACE}
    \begin{algorithmic}[1]
      \State Guess an integer $n$ with $2 \leq n \leq |Q|$
      \State Guess $n$ distinct states $p_0, \dots, p_{n-1} \in Q$
        \State Let $A$ be an array of size $n$ such that $A[i]  \leftarrow p_i$ for $0 \leq i < n$
        \State Let $B$ and $C$ be arrays of size $n$ such that $A = B = C$
        \State Let $\mathsf{diff} \leftarrow \mathsf{False}$
        \While{($B \neq C$) or ($B$ is not a cyclic shift of $A$) or ($\mathsf{diff} = \mathsf{False}$)}
        \State Guess $\mathtt{a}, \mathtt{b} \in \Sigma$
        \If{$\mathtt a \neq \mathtt b$}
        \State $\mathsf{diff} \leftarrow \mathsf{True}$
        \EndIf
        \State $B[i] \leftarrow \delta(B[i], \mathtt a)$ for each $0 \leq i < n$
        \State $C[i] \leftarrow \delta(C[i], \mathtt b)$ for each $0 \leq i < n$
        \EndWhile
        \State Return $\mathsf{True}$ 
    \end{algorithmic}
\end{algorithm}

Assume~\autoref{alg:PSPACE} returns true.
Let $u = \mathtt a_1 \cdots \mathtt a_m$ be the nondeterministic guesses for $B$, and let $v = \mathtt b_1 \cdots \mathtt b_m$ be the guesses for $C$.
Since $\mathsf{diff}$ is true, we know that $u \neq v$.
Furthermore, as $|u| = |v|$, we also know that $\proot(u) \neq \proot(v)$.
Since $B$ and $C$ are cyclic shifts of $A$, it follows that $\delta^*(p_i, u) = \delta^*(p_i, v) = p_{i + 1 \pmod{n}}$.
Hence, we have that $\delta^*(p_i, u) = p_{i+1 \pmod{n}}$ and for all $0 \leq i < n$ we have $\delta^*(p_i, v^n) = p_i$ where $\proot(u) \neq \proot(v^n)$ as $\proot(u) \neq \proot(v)$.
This concludes one direction.

For the other direction, let $\mathcal{M} = (Q, \Sigma, \delta, q_0, F)$ be a minimal DFA that has a loop-step cycle.
Let $w,v \in \Sigma^+$, where $\proot(w) \neq \proot(v)$, and let
$p_0,\ldots,p_{n-1}$ be $n$ distinct states with:
\begin{itemize}
\item $\delta^*(p_i,w) = p_i$ for all $i \in \{0, \dots, n-1 \}$, and
\item $\delta^*(p_i, v) = p_{i+1 \pmod{n}}$.
\end{itemize}
Let us assume~\autoref{alg:PSPACE} initializes $A$ such that $A[i] = p_i$ for $0 \leq i < n$.
Set $u_1 \df w^{n  |v|} \cdot v^{n  |w| + n+1}$ and $u_2 \df w^{2n|v|} \cdot v^{n+1}$. 
First, we show that $u_1 \neq u_2$:
\begin{align*}
w^{n  |v|} \cdot v^{n  |w| + n+1} & = w^{2n|v|} \cdot v^{n+1} \\
\Longrightarrow v^{n|w|} & = w^{n|v|}
\end{align*}
which contradicts $\proot(w) \neq \proot(v)$ as $n|w|\neq 0 \neq n|v|$. 
Next, notice that $|u_1| = |u_2|$:
\begin{itemize}
\item $|u_1| = n \cdot |w| |v|   + |v| ( n |w| + n + 1) = 2  n  |v|  |w| + n  |v| + |v|$,
\item $|u_2| = 2  n  \cdot |w| |v|   + |v| ( n + 1) = 2  n  |v| |w| + n |v| + |v|$. 
\end{itemize}

If $\mathcal{M}$ is in state $p_i$ for some $0 \leq i < n$, then reading $w^k$ for any $k \in \mathbb{N}$ does not change the state of $\mathcal{M}$.
Likewise, if $\mathcal{M}$ is in state $p_i$, then reading $v^{kn+1}$ for any $k \in \mathbb{N}_+$ puts $\mathcal{M}$ into state $p_{i + 1 \pmod{n}}$.
Hence, for any $i \in [n]$ we have that $\delta^*(p_i, u_1) = \delta^*(p_i, u_2) = p_{i +1 \pmod{n}}$.

Assume that for $j$ with $1 \leq j \leq |u_1|$, the $j$-th time line 11 of~\autoref{alg:PSPACE} is executed, $\mathtt{a}$ is the $j$-th letter of $u_1$.
Likewise, assume that the $j$-th time line 12 of~\autoref{alg:PSPACE} is executed, $\mathtt{b}$ is the $j$-th letter of $u_2$.
Firstly, $\mathsf{diff} = \mathsf{True}$ since $u_1 \neq u_2$.
Moreover, since $\delta^*(p_i, u_1) = \delta^*(p_i, u_2) = p_{i +1 \pmod{n}}$ we have that $B=C$ and $B$ is a cyclic shift of $A$.

Thus, \autoref{alg:PSPACE} returns $\mathsf{True}$.

\paragraph*{Lower Bound}
To show $\mathsf{PSPACE}$-hardness, we reduce from the
$\mathsf{PSPACE}$-complete problem \emph{finite-automaton cycle
  existence}~\cite{cho1991finite} which is defined as follows: Given a
minimal DFA $\mathcal{M}$, decide whether $\mathcal{M}$ \emph{has a cycle},
i.e., whether there exists a word $u \in
\Sigma^*$ and a state $p$ of $\mathcal{M}$ such that $\delta^*(p, u)
\neq p$ and $\delta^*(p, u^r) = p$ for some $r \in \mathbb{N}_+$.

For an instance $\mathcal{M} = (Q, \Sigma, \delta, q_0, F)$ of the finite-automaton cycle existence problem, we construct the automaton $\mathcal{M}_2 = (Q_2, \Sigma_2, \delta_2, q_{0,2}, F_2)$ where 
\begin{itemize}
\item $Q_2 \df Q$,
\item $\Sigma_2 \df \Sigma \union \{ \bar{\mathtt{a}} \}$, where
  $\bar{\mathtt{a}}$ is a new letter with $\bar{\mathtt{a}}\not\in\Sigma$,
\item for all $q \in Q$, we have $\delta_2(q, \bar{\mathtt{a}}) = q$, and for any $q \in Q$ and $\mathtt a \in \Sigma$, we have $\delta_2(q, \mathtt a) \df \delta(q, \mathtt a)$,
\item $q_{0,2} \df q_0$, and
\item $F_2 \df F$.
\end{itemize}
Since $\mathcal{M}$ is a minimal DFA, $\mathcal{M}_2$ is also a
minimal DFA.
Clearly, constructing $\mathcal{M}_2$ from $\mathcal{M}$ can be done
in polynomial time with respect to the size of $\mathcal{M}$.
We now show that $\mathcal{M}_2$ has a loop-step cycle $\iff$
$\mathcal{M}$ has a cycle.

For direction ``$\Longleftarrow$'' assume that $\mathcal{M}$ has a
cycle, that is,
there exists some $u \in \Sigma^*$ and some state $q \in Q$ such that
$\delta^*(q, u) \neq q$ and $\delta^*(q,u^r) = q$ for some $r \in
\mathbb{N}_+$. W.l.o.g.\ assume that $r$ is chosen as small as
possible, and note that $r\geq 2$.
Then, there exist $r$ distinct states $p_0, \dots, p_{r-1}$ such that $\delta^*(p_i, u) = p_{i+1 \pmod{r}}$ for $0 \leq i < r$.
Trivially, $\delta_2^*(p_i, \bar{\mathtt a}) = p_i$ for all $0 \leq i < r$ and $\proot(u) \neq \proot(\bar{\mathtt a})$.
Hence, $\mathcal{M}_2$ has a loop-step cycle.

For direction ``$\Longrightarrow$'' we prove the contrapositive.
Assume $\mathcal{M}_2$ has a loop-step cycle.
Immediately, we know that there exists a word $u \in \Sigma_2^*$ and a state 
$p \in Q_2$ such that $\delta_2^*(p,u) \neq p$ and $\delta^*(p,u^r) = p$ for some $r \in \mathbb{N}_+$.
We may assume that $\bar{\mathtt{a}}$ does not occur in $u$ since $\delta_2(q, \bar{\mathtt{a}}) = q$ for all $q \in Q_2$.
This means that, in $\mathcal{M}$, we have that $\delta^*(p,u) \neq p$ and $\delta^*(p,u^r) = p$ for some $p \in Q$, $u \in \Sigma^*$, and $r \in \mathbb{N}_+$.
Thus, $\mathcal{M}$ has a cycle and consequently the original implication holds.

In summary, we have provided a polynomial-time reduction from the
$\mathsf{PSPACE}$-complete problem  finite-automaton cycle existence
to the problem of deciding whether a given minimal DFA has a
loop-step cycle. This shows that the latter problem is $\mathsf{PSPACE}$-hard.
\end{proof}

\subsection{Main Theorems}\label{sec:main-results}\label{sec:main-theorems}
With the prerequisites out of the way, we are now ready to state our main result:

\begin{thm}\label{thm:main}
Let $L\subseteq\Sigma^*$ be a regular language,  
and let  $\Aut \df (Q, \Sigma, \delta, q_0, F)$ be a minimal DFA with $L=\lang(\Aut)$.
Then, the following are equivalent:
\begin{enumerate}
\item\label{item:mainthm:fc} $L \in \lang(\fc)$,
\item\label{item:mainthm:sf} $L \in \mathsf{SF}(\mathcal{R})$,
\item\label{item:mainthm:gp} $L$ is group primitive,
\item\label{item:mainthm:lsc} $\mathcal{M}$ does not have a loop-step cycle.
\end{enumerate}
\end{thm}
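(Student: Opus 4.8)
The plan is to establish the four-way equivalence via a cycle of implications, choosing the order so that each step is as direct as possible given the machinery available. I would prove \eqref{item:mainthm:lsc}$\Rightarrow$\eqref{item:mainthm:sf}$\Rightarrow$\eqref{item:mainthm:fc}$\Rightarrow$\eqref{item:mainthm:gp}$\Rightarrow$\eqref{item:mainthm:lsc}. The most self-contained implication is \eqref{item:mainthm:sf}$\Rightarrow$\eqref{item:mainthm:fc}: one shows by structural induction on the star-free closure that every language in $\mathsf{SF}(\mathcal{R})$ is $\fc$-definable. The base cases are the singletons $\{\mathtt a\}$ and the languages $w^*$; the latter is definable in $\fc$ using a formula asserting that the whole word is a power of the constant word $w$ (essentially the $(\mathtt{aa})^*$ example generalized, plus a guard forcing the absence of the relevant symbols when $w$ does not use all of $\Sigma$). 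The inductive steps handle union and complement trivially (Boolean connectives), and concatenation $L\cdot L'$ is handled by existentially quantifying a splitting point of the word and relativizing the two subformulas to the prefix and suffix respectively; here one has to be slightly careful because $\fc$-formulas speak about factors rather than positions, so relativization means constructing, for each subformula $\varphi$ defining $L$ and each pair of "endpoint" variables, a formula $\varphi^{x,y}$ that simulates $\varphi$ on the factor delimited by $x$ and $y$ --- this is standard for $\fc$ and should be cited or sketched via a relativization lemma.

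For \eqref{item:mainthm:fc}$\Rightarrow$\eqref{item:mainthm:gp} (equivalently its contrapositive), I would assume $L$ is regular but not group primitive and use the \ef-game / fooling-lemma machinery for $\fc$ (Thompson--Freydenberger~\cite{thompson2023generalized}) to show $L\notin\lang(\fc)$. Not being group primitive means there is a periodic element $m$ of the syntactic monoid whose preimage $\eta_L^{-1}(m)$ contains two words with distinct primitive roots; iterating the periodicity of $m$ one extracts, for each quantifier rank $k$, two words $u^i$ and $v^j$ with $\proot(u)\neq\proot(v)$ that are $\sim_L$-equivalent yet cannot be distinguished by a rank-$k$ $\fc$ formula --- the fooling lemma then gives inexpressibility. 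The genuinely technical part is organizing the pumping inside the syntactic monoid so that the two families of witnesses are simultaneously (i) $\sim_L$-equivalent, (ii) of comparable length, and (iii) "$\fc$-indistinguishable" in the precise sense the fooling lemma requires.

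The two remaining implications are the combinatorial heart and, I expect, the main obstacle. For \eqref{item:mainthm:gp}$\Rightarrow$\eqref{item:mainthm:lsc} (contrapositive: a loop-step cycle forces non-group-primitivity), a loop-step cycle with states $p_0,\dots,p_{n-1}$ and words $w,v$, $\proot(w)\neq\proot(v)$, yields via $v$ a nontrivial permutation (an $n$-cycle) on $\{p_0,\dots,p_{n-1}\}$ fixed pointwise by $w$; translating to the transition monoid and then to the syntactic monoid, the image of $v$ (suitably powered) is a periodic element whose preimage contains both a power of $w\cdot v$-type words and pure $v$-type words with different primitive roots, contradicting group primitivity. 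For \eqref{item:mainthm:lsc}$\Rightarrow$\eqref{item:mainthm:sf} one argues the contrapositive or argues directly: absence of a loop-step cycle should coincide, via the transition monoid, with the condition that every "group-like" behaviour in $\Aut$ is witnessed only on powers of a single primitive word, and one then invokes the algebraic characterization of $\mathsf{SF}(\mathcal{C})$ for prevarieties from Place--Zeitoun~\cite{place2023closing} applied to $\mathcal{C}=\mathcal{R}$ to conclude $L\in\mathsf{SF}(\mathcal{R})$. Making the bridge between the purely automata-theoretic loop-step condition and the algebraic (prevariety/star-free-closure) criterion precise --- in particular verifying that $\mathcal{R}$ satisfies the hypotheses needed to apply the Place--Zeitoun membership characterization, and matching up the resulting algebraic identities with "no loop-step cycle" --- is where I anticipate the real work lies.
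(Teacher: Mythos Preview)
Your cycle order differs from the paper's ($(4){\Rightarrow}(3){\Rightarrow}(2){\Rightarrow}(1){\Rightarrow}(4)$ there, versus your $(4){\Rightarrow}(2){\Rightarrow}(1){\Rightarrow}(3){\Rightarrow}(4)$), and several of your steps are either the same as the paper's or workable variants. Two side remarks: your $(3){\Rightarrow}(4)$ goes opposite to the paper's direction but does go through --- for the idempotent power $\omega$ of $\eta_L(w)$, the element $\eta_L(w^\omega v)$ is periodic (it still acts as an $n$-cycle on the $p_i$) and has the two preimages $w^\omega v$ and $w^{2\omega} v$, whose primitive roots must differ (else one deduces $\proot(w)=\proot(v)$); and for $(4){\Rightarrow}(2)$, note that $\mathcal{R}$ is \emph{not} a prevariety, so Place--Zeitoun does not apply to it directly --- the paper first passes to the prevariety $\mathcal{B}q$ of~\cite{daviaud2018classes} and shows $\mathsf{SF}(\mathcal{R})=\mathsf{SF}(\mathcal{B}q)$.

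The genuine gap is your $(1){\Rightarrow}(3)$. You propose: from $u,v\in\eta_L^{-1}(m)$ with $m$ periodic and $\proot(u)\neq\proot(v)$, take powers $u^i$, $v^j$ that are ``$\sim_L$-equivalent yet cannot be distinguished by a rank-$k$ $\fc$-formula'' and invoke the fooling lemma. But $\sim_L$-equivalent words lie on the same side of $L$, so their $\fc$-indistinguishability says nothing about $L$; and if you meant $\sim_L$-\emph{in}equivalent, the second claim still fails in general --- pure powers $u^i$ and $v^j$ with $\proot(u)\neq\proot(v)$ are typically easy for $\fc$ to separate (take $u=\mathtt{ab}$, $v=\mathtt{ba}$: the first letter already distinguishes $(\mathtt{ab})^i$ from $(\mathtt{ba})^j$ at quantifier rank~$2$). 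This is exactly why the paper does \emph{not} route the inexpressibility step through group primitivity. It proves $(1){\Rightarrow}(4)$ instead (loop-step cycle $\Rightarrow$ not $\fc$-definable) via a substantial detour: encode finite sets $A\subset\NN$ as words $\bar w_A\in\{\mathtt 0,\mathtt 1\}^*$, give a first-order interpretation of $\fc$ over $\bar w_A$ into $(\NN,+,A)$, invoke Lynch's theorem~\cite{lynch1982sets} to obtain $(\NN,+,A)\equiv_{k'}(\NN,+,B)$ for suitably chosen $A,B$, and then lift this through a bifix morphism $h$ built from the loop-step witnesses $w,v$ so that membership of $p\cdot h(\bar w_A)\cdot s$ in $L$ depends only on $|A|\bmod n$. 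The fooling lemma applied to naive powers does not suffice here.
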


This provides a characterization of the regular languages that can be expressed in~$\fc$.
Our second main result is that this characterization is decidable, in fact (for minimal automata) it is $\mathsf{PSPACE}$-complete:

\begin{thm}\label{thm:PSPACE}
Given a minimal DFA $\mathcal{M}$, deciding whether $\lang(\mathcal{M}) \in \lang(\fc)$ is $\mathsf{PSPACE}$-complete.
\end{thm}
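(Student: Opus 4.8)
The plan is to obtain \autoref{thm:PSPACE} as an immediate consequence of the characterization in \autoref{thm:main} together with \autoref{lemma:aut-PSPACE}, using that $\pspace$ is closed under complement. By the equivalence of conditions~\eqref{item:mainthm:fc} and~\eqref{item:mainthm:lsc} in \autoref{thm:main}, a minimal DFA $\mathcal{M}$ satisfies $\lang(\mathcal{M}) \in \lang(\fc)$ if and only if $\mathcal{M}$ does \emph{not} have a loop-step cycle. \textbf{Upper bound.} \autoref{lemma:aut-PSPACE} shows that deciding whether a minimal DFA has a loop-step cycle is in $\pspace$; since $\pspace$ is closed under complement (e.g.\ via Savitch's Theorem, which is already invoked in the proof of \autoref{lemma:aut-PSPACE}), deciding whether a minimal DFA does \emph{not} have a loop-step cycle — equivalently, by \autoref{thm:main}, deciding $\lang(\mathcal{M}) \in \lang(\fc)$ — is also in $\pspace$.

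\textbf{Lower bound.} I would reuse the polynomial-time reduction constructed in the lower-bound part of the proof of \autoref{lemma:aut-PSPACE}: it maps a minimal DFA $\mathcal{M}$, viewed as an instance of \emph{finite-automaton cycle existence}~\cite{cho1991finite}, to a minimal DFA $\mathcal{M}_2$ such that $\mathcal{M}$ has a cycle if and only if $\mathcal{M}_2$ has a loop-step cycle. Contrapositively, $\mathcal{M}$ has \emph{no} cycle if and only if $\mathcal{M}_2$ has no loop-step cycle, i.e.\ (by \autoref{thm:main}) if and only if $\lang(\mathcal{M}_2) \in \lang(\fc)$. Hence the very same construction is a polynomial-time many-one reduction from the complement of finite-automaton cycle existence to the problem ``$\lang(\mathcal{M}) \in \lang(\fc)$''. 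Because finite-automaton cycle existence is $\pspace$-complete and $\pspace$ is closed under complement, its complement is $\pspace$-complete as well, so ``$\lang(\mathcal{M}) \in \lang(\fc)$'' is $\pspace$-hard. Combining this with the upper bound yields $\pspace$-completeness.

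There is essentially no technical obstacle here: all the substantive content is absorbed into \autoref{thm:main} and \autoref{lemma:aut-PSPACE}. The only point that deserves to be stated explicitly is that a many-one reduction transferring the property ``has a loop-step cycle'' automatically transfers its negation as well, so no fresh gadget is needed for the hardness direction; one simply appeals to the closure of $\pspace$ under complement both for the upper bound and for transporting $\pspace$-hardness from finite-automaton cycle existence to its complement.
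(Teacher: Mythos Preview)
Your proposal is correct and matches the paper's own one-line proof, which simply states that \autoref{thm:PSPACE} is immediately obtained by combining \autoref{thm:main} with \autoref{lemma:aut-PSPACE}. Your explicit invocation of the closure of $\pspace$ under complement just spells out what the paper leaves implicit.
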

\autoref{thm:PSPACE} is immediately obtained by combining~\autoref{thm:main} with~\autoref{lemma:aut-PSPACE}.

With the main characterization given (\autoref{thm:main}) and the result
that this characterization is decidable (\autoref{thm:PSPACE}),
Sections~\ref{sec:gp-lsc}--\ref{sec:fc-gp} of this article are
devoted to proving~\autoref{thm:main}:
\begin{itemize}
\item $\ref{item:mainthm:lsc}\Rightarrow\ref{item:mainthm:gp}$ is
shown in \autoref{sec:gp-lsc} (\autoref{thm:group-prim}); the proof is by a
direct construction.

\item $\ref{item:mainthm:gp}\Rightarrow\ref{item:mainthm:sf}$ is
shown in \autoref{sec:gp-sf} (\autoref{thm:gp-sf}).
To prove this, we use the algebraic characterization of the star-free
closure of so-called \emph{prevarieties} given by Place and
Zeitoun~\cite{place2023closing}, and show that if a language is group
primitive, then it satisfies
Place and Zeitoun's characterization.

\item $\ref{item:mainthm:sf}\Rightarrow\ref{item:mainthm:fc}$ is
shown in \autoref{sec:sf-fc} (\autoref{thm:sf-fc});
the proof is by a structural induction very similar to proofs of previous results (Lemma~5.5 of~\cite{frey2019finite} and Lemma~5.3 of~\cite{thompson2023generalized}).

\item $\ref{item:mainthm:fc}\Rightarrow\ref{item:mainthm:lsc}$ is
shown in \autoref{sec:fc-gp} (\autoref{thm:fc-lsc}).
The proof is quite involved; it relies on \ef
games, a result by Lynch \cite{lynch1982sets}, and a suitable
translation between words with concatenation and integers with addition.
\end{itemize}

\subsection{$\mathsf{SF}(\mathcal{R})$ and Regular $\mathsf{AC}^0$-Languages}\label{sec:ac0}
Before providing the proof of \autoref{thm:main}, let us briefly give some more context on the expressive power of $\mathsf{SF}(\mathcal{R})$ by comparing it to the class of regular languages in $\mathsf{AC}^0$.
For this brief section, we will not define $\mathsf{AC}^0$ and instead refer the reader to standard texts, such as Chapter~6 of~\cite{arora2009computational}.

Let $\textsc{Length}(q) \df \{ w \in \{ \mathtt 0, \mathtt 1 \}^* \mid |w| \equiv 0 \pmod{q} \}$.
We rely on the following result by Barrington, Compton, Straubing and Th{\'{e}}rien~\cite{BarringtonCST92}:

\begin{thm}[Theorem~3 of~\cite{BarringtonCST92}]\label{thm:AC0-reg}
Let $L \subseteq \{ \mathtt 0, \mathtt 1 \}^*$. Then, the following are equivalent:
\begin{enumerate}
\item $L$ is regular and $L \in \mathsf{AC}^0$,
\item $L$ belongs to the smallest family of subsets of $\{ \mathtt 0, \mathtt 1 \}^*$ that contains $\{ \mathtt 0 \}$, $\{ \mathtt 1 \}$, the languages $\textsc{Length}(q)$ for all $q > 1$, and is closed under concatenation and Boolean operations.
\end{enumerate}
\end{thm}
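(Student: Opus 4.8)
Strictly speaking there is nothing new to prove here: \autoref{thm:AC0-reg} is Theorem~3 of~\cite{BarringtonCST92}, and in this section we use it only as a black box in order to compare $\mathsf{SF}(\mathcal R)$ with the regular languages in $\mathsf{AC}^0$. For orientation, however, let me sketch how its proof proceeds and where the difficulty lies. Write $\mathcal Q$ for the family in item~(2), i.e.\ the smallest class of languages over $\{\mathtt 0,\mathtt 1\}$ containing $\{\mathtt 0\}$, $\{\mathtt 1\}$ and every $\textsc{Length}(q)$ with $q>1$, and closed under concatenation and Boolean operations; the goal is to show that $\mathcal Q$ equals the class of regular $\mathsf{AC}^0$-languages.

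The inclusion $(2)\Rightarrow(1)$ is the easy half: one checks the generators and verifies that the class of regular $\mathsf{AC}^0$-languages is closed under the relevant operations. The singletons $\{\mathtt 0\}$ and $\{\mathtt 1\}$ are plainly regular and in $\mathsf{AC}^0$; each $\textsc{Length}(q)$ is recognized by a $q$-state cycle, and its membership depends only on the length of the input, so it is decided by a family of constant-size circuits and hence lies in $\mathsf{AC}^0$. The regular languages are closed under Boolean operations and concatenation, and so is $\mathsf{AC}^0$: complement is one $\mathsf{NOT}$-gate, union and intersection are a single $\mathsf{OR}$- or $\mathsf{AND}$-gate applied to two sub-circuits, and $L\cdot L'$ is obtained by taking an $\mathsf{OR}$ over the at most $|w|+1$ split points of an $\mathsf{AND}$ of two membership sub-circuits; this keeps the size polynomial and raises the depth only by a constant. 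Hence $\mathcal Q$ is contained in the class of regular $\mathsf{AC}^0$-languages.

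The direction $(1)\Rightarrow(2)$ carries the real content, and it is routed through the algebraic characterization underlying~\cite{BarringtonCST92}: a regular language $L$ lies in $\mathsf{AC}^0$ if and only if its syntactic morphism $\eta_L\colon\Sigma^*\to M_L$ is \emph{quasi-aperiodic} --- informally, $M_L$ behaves aperiodically once one restricts attention to words of a fixed length modulo a suitable integer $\ell$ (depending on $M_L$), so that the only genuinely periodic phenomenon exhibited by $L$ is counting the input length. Granting this, one derives $(1)\Rightarrow(2)$ by splitting $L$ according to the residue of $|w|$ modulo $\ell$, where each residue class is selected by a Boolean combination of the languages $\textsc{Length}(q)$ together with ``first symbol is $\mathtt 0$ / $\mathtt 1$'' conditions (all of which already belong to $\mathcal Q$), and by observing that on each such class $L$ is \emph{star-free}; a Sch\"utzenberger-style argument then expresses that star-free piece from $\{\mathtt 0\}$ and $\{\mathtt 1\}$ using concatenation and Boolean operations, and recombining the pieces yields a $\mathcal Q$-expression for $L$. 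The main obstacle is the ``only if'' part of the algebraic characterization --- the sole place where circuit complexity genuinely enters: if $M_L$ failed to be quasi-aperiodic one would extract an element of prime order $p$ in a subgroup of $M_L$ that is not accounted for by length-counting, and turn it into an $\mathsf{AC}^0$ many-one reduction from $\mathsf{MOD}_p$ to $L$, contradicting the $\mathsf{AC}^0$ lower bounds for $\mathsf{PARITY}$ and $\mathsf{MOD}_p$ (Furst--Saxe--Sipser, Ajtai, Razborov--Smolensky). Setting up that reduction carefully, and separately showing that quasi-aperiodicity forces genuinely star-free behavior on each length-residue class, are the two steps that demand the most work.
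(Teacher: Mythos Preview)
Your assessment is exactly right: the paper does not prove this statement at all --- it is simply quoted as Theorem~3 of~\cite{BarringtonCST92} and used as a black box, precisely as you say in your first sentence. The sketch you give of Barrington--Compton--Straubing--Th\'erien's argument is reasonable background but goes beyond what the paper itself provides.
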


Note that~\cite{BarringtonCST92} also provides an algebraic and a
logical characterization for the class of regular
languages in $\mathsf{AC}^0$, but we omit the details to avoid additional definitions.

\begin{thm}
For $\Sigma\df\{ \mathtt 0, \mathtt 1 \}$, 
the class $\mathsf{SF}(\mathcal{R})$ is a strict subclass of the class of regular languages in $\mathsf{AC}^0$.
\end{thm}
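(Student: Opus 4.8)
The plan is to use~\autoref{thm:AC0-reg} to reduce the inclusion to a single closure computation, and to use~\autoref{thm:main} to supply the separating language. Throughout, write $\mathcal{F}$ for the smallest family of subsets of $\{\mathtt 0,\mathtt 1\}^*$ that contains $\{\mathtt 0\}$, $\{\mathtt 1\}$ and $\textsc{Length}(q)$ for all $q>1$, and is closed under concatenation and Boolean operations; by~\autoref{thm:AC0-reg}, $\mathcal{F}$ is exactly the class of regular languages in $\mathsf{AC}^0$. Since $\mathcal{F}$ contains the singletons $\{\mathtt 0\},\{\mathtt 1\}$ and is closed under union, complement, and concatenation -- precisely the operations generating $\mathsf{SF}(\mathcal{R})$ from its generators -- a structural induction on $\mathsf{SF}(\mathcal{R})$-expressions reduces the inclusion $\mathsf{SF}(\mathcal{R})\subseteq\mathcal{F}$ to the single claim that $w^*\in\mathcal{F}$ for every $w\in\Sigma^*$.

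For that claim, fix $w\in\Sigma^*$. If $w=\emptyword$ then $w^*=\{\emptyword\}=\Sigma^*\setminus\bigl((\{\mathtt 0\}\union\{\mathtt 1\})\cdot\Sigma^*\bigr)\in\mathcal{F}$, where $\Sigma^*=\Sigma^*\setminus(\{\mathtt 0\}\intersect\{\mathtt 1\})\in\mathcal{F}$. Otherwise put $q\df|w|\geq 1$ and, for each $0\leq j<q$ and each $\mathtt c\in\Sigma$, define $E_{j,\mathtt c}\df\{x\in\Sigma^*:|x|\equiv j\pmod q\}\cdot\{\mathtt c\}\cdot\Sigma^*$, that is, the set of words that carry the letter $\mathtt c$ at some position congruent to $j$ modulo $q$. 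The prefix-length language $\{x:|x|\equiv j\pmod q\}$ equals $\textsc{Length}(q)\cdot\Sigma^j$, where $\Sigma^j$ is a $j$-fold concatenation of $\{\mathtt 0\}\union\{\mathtt 1\}$ and $\Sigma^0=\{\emptyword\}\in\mathcal{F}$; since also $\textsc{Length}(1)=\Sigma^*\in\mathcal{F}$, we get $E_{j,\mathtt c}\in\mathcal{F}$. Writing $\ov{\mathtt c}$ for the element of $\Sigma$ other than $\mathtt c$, one checks directly that $w^*=\textsc{Length}(q)\intersect\bigcap_{j=0}^{q-1}\bigl(\Sigma^*\setminus E_{j,\ov{w[j]}}\bigr)$: a word $v$ lies in the right-hand side exactly when $|v|$ is a multiple of $q$ and, for every position $p$, the letter $v[p]$ differs from $\ov{w[p\bmod q]}$, which over the binary alphabet forces $v[p]=w[p\bmod q]$, hence $v=w^{|v|/q}$. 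Thus $w^*\in\mathcal{F}$, and the inclusion $\mathsf{SF}(\mathcal{R})\subseteq\mathcal{F}$ follows.

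For strictness I would take $L\df\textsc{Length}(2)$ over $\{\mathtt 0,\mathtt 1\}$: it is regular and lies in $\mathsf{AC}^0$ (it is one of the generators of $\mathcal{F}$), yet it is not in $\mathsf{SF}(\mathcal{R})$. Its minimal DFA has two states $p_0,p_1$, each sent to the other by both $\mathtt 0$ and $\mathtt 1$; with $n=2$, $w\df\mathtt{1}\mathtt{1}$ and $v\df\mathtt 0$ this DFA has a loop-step cycle, since $\proot(w)=\mathtt 1\neq\mathtt 0=\proot(v)$, $\delta^*(p_i,\mathtt{1}\mathtt{1})=p_i$, and $\delta^*(p_i,\mathtt 0)=p_{i+1\pmod 2}$. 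By the equivalence of items~\ref{item:mainthm:sf} and~\ref{item:mainthm:lsc} in~\autoref{thm:main}, $L\notin\mathsf{SF}(\mathcal{R})$, so $\mathsf{SF}(\mathcal{R})$ is a strict subclass of the regular languages in $\mathsf{AC}^0$.

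I expect the only nontrivial point to be locating the decomposition of $w^*$ used in the second paragraph; the remaining work is routine bookkeeping with the closure properties of $\mathcal{F}$ and $\mathsf{SF}(\mathcal{R})$ together with an appeal to~\autoref{thm:main}. (As an alternative to exhibiting a loop-step cycle, one may note that the syntactic morphism of $\textsc{Length}(2)$ maps both $\mathtt 0$ and $\mathtt 1\mathtt 1\mathtt 1$ to the same periodic element although they have distinct primitive roots, so $\textsc{Length}(2)$ is not group primitive; this route also suffices via~\autoref{thm:main}.)
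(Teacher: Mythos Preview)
Your proof is correct. The overall architecture matches the paper's: structural induction on $\mathsf{SF}(\mathcal{R})$ to reduce the inclusion to the single case $w^*$, and $\textsc{Length}(2)$ as the separating language via~\autoref{thm:main}. The one genuine difference is how you handle $w^*$. The paper argues that $\proot(w)^*$ is star-free (citing McNaughton--Papert~\cite{mcnaughton1971counter}) and hence in $\mathsf{AC}^0$, and then obtains $w^* = \proot(w)^* \cap \textsc{Length}(r)$ where $w = \proot(w)^r$. Your route is a direct positional decomposition $w^* = \textsc{Length}(|w|) \cap \bigcap_j (\Sigma^* \setminus E_{j,\ov{w[j]}})$, which is more elementary in that it avoids the external star-freeness result, at the price of being tied to a two-letter alphabet (harmless here, since the statement is for $\Sigma=\{\mathtt 0,\mathtt 1\}$). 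For the strictness witness, the paper exhibits the loop-step cycle with $\mathtt{00}$ and $\mathtt{1}$ rather than your $\mathtt{11}$ and $\mathtt{0}$; either pair works.
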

\begin{proof}
By \autoref{thm:AC0-reg}, $\textsc{Length}(2)$ is a regular language
in $\mathsf{AC}^0$.
The minimal DFA for the language $\textsc{Length}(2)$ is given in~\autoref{fig:even}.
It is easy to see that this automaton has a loop-step cycle, as $\delta^*(p_i, \mathtt{00}) = p_i$ for $i \in \{0,1\}$, and $\delta^*(p_0, \mathtt{1}) = p_1$ and $\delta^*(p_1, \mathtt{1}) = p_0$.
Therefore, by~\autoref{thm:main} we have that $\textsc{Length}(2)
\notin \mathsf{SF}(\mathcal{R})$.

The rest of this proof is dedicated to showing that every language $L \in \mathsf{SF}(\mathcal{R})$ belongs to~$\mathsf{AC}^0$.
We continue with a structural induction along the recursive definition of $\mathsf{SF}(\mathcal{R})$.

\begin{figure}
    \centering
   \begin{tikzpicture}[>=stealth',shorten >=1pt,auto,node distance= 1cm, scale = 1, transform shape,initial text=, bend angle = 20]
    \tikzstyle{vertex}=[circle,fill=white!25,minimum size=12pt,inner sep=3pt,outer sep=0pt,draw=white]
        \node[state, initial, accepting] (p0)  at (0,0)                   {$p_0$};
        \node[state] (p1) at (2,0) {$p_1$};
        
        \path[->] (p0) edge [bend left]   node [align=center]  {$\mathtt{0}, \mathtt{1}$} (p1)
        (p1) edge [bend left]   node [align=center]  {$\mathtt{0}, \mathtt{1}$} (p0);
    \end{tikzpicture}
    \caption{The minimal DFA for the language $\textsc{Length}(2)$.}
    \label{fig:even}
\end{figure}

Note that since the regular languages in $\mathsf{AC}^0$ are closed under union, complement, and concatenation, we only need to consider the base cases for the recursive definition of $\mathsf{SF}(\mathcal{R})$.
Moreover, $\{ \mathtt 0\}$ and $\{ \mathtt 1 \}$ are trivially in $\mathsf{AC}^0$.
Thus, all that remains is to show $w^*$ is in $\mathsf{AC}^0$ for any $w \in \{ \mathtt 0, \mathtt 1 \}^*$.

Let $L = w^*$ for some $w \in \{ \mathtt 0, \mathtt 1\}^*$. 
Note that $v^*$ for any primitive word $v \in \{ \mathtt 0, \mathtt 1 \}^*$ is a star-free language and thus belongs to $\mathsf{AC}^0$, see Theorem 3.2 and Theorem 3.3 of~\cite{mcnaughton1971counter}.
Then, consider $K \df \proot(w)^* \intersect \textsc{Length}(r)$ where $r \geq 1$ such that $w = \proot(w)^r$. 
Invoking~\autoref{thm:AC0-reg}, $K$ is a regular language and belongs to $\mathsf{AC}^0$.
The proof is then completed by noticing that $K = (\proot(w)^r)^*=  w^*$.
\end{proof}

\section{On Loop-Step Cycles and Group Primitive Regular Languages}\label{sec:gp-lsc}
In this section we prove the following theorem.

\begin{thm}\label{thm:gp-lsc}\label{thm:group-prim}
  Let $L$ be a regular language and let $\Aut$ be a minimal DFA with $L=\lang(\Aut)$.
  If $L$ is not group primitive, then $\Aut$ has a loop-step cycle.
\end{thm}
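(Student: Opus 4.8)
The plan is to contrapose: assuming $L$ is not group primitive, I want to exhibit a loop-step cycle in the minimal DFA $\Aut$. Unwinding the definitions, $L$ not being group primitive means there is a periodic element $m$ of the syntactic monoid $M_L$ such that $\proot(\eta_L^{-1}(m))$ contains at least two distinct primitive words — i.e., there exist words $u, u' \in \eta_L^{-1}(m)$ with $\proot(u) \neq \proot(u')$. The first step is to extract convenient representatives: since $m$ is periodic, in the finite monoid $M_L$ there are $j, p > 0$ with $m^{pn+j} = m^j$ for all $n$, and because $m$ is periodic the element $e \df m^{j}$ generated this way lies in a nontrivial subgroup $G$ of $M_L$ with identity $e' = e^{|G|}$ (or rather, the standard fact that $m^j, m^{2j}, \dots$ cycles through a nontrivial cyclic group). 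I would pick $w$ to be a suitable power of $u$ and $v$ to be a suitable power of $u'$, arranged so that $\eta_L(w)$ and $\eta_L(v)$ are both equal to a fixed idempotent-adjacent element, while still keeping $\proot(w) \neq \proot(v)$ (powers preserve primitive roots, so this is automatic provided $u,u'$ themselves have distinct roots).

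Next I would pass from the syntactic monoid to the DFA. The transition monoid of $\Aut$ is (a quotient-preimage of, essentially a homomorphic image compatible with) the syntactic monoid, so the actions of $w$ and $v$ on the state set $Q$ are well-behaved: I want a word $w$ whose action on $Q$ is idempotent (a retraction onto its image) and a word $v$ whose action permutes some subset of states nontrivially and commutes appropriately with $w$'s action. Concretely: take the action $\pi_w \colon Q \to Q$ of a high enough power of $u$ so that $\pi_w$ is idempotent, let $P = \pi_w(Q)$ be its image, and consider the restriction of the action of a suitable power of $u'$ to $P$. The key point is to choose the powers so that $v$ fixes the image $P$ setwise and acts on it, and so that $w$ acts as the identity on $P$. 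Then I restrict attention to $P$ and to the cyclic structure of $\pi_v|_P$: pick a state $p_0 \in P$ lying on a nontrivial cycle of $\pi_v|_P$ of length $n \geq 2$ (such a cycle must exist because otherwise, on the whole relevant portion, $v$ would act like a word with the same "eventual" behavior as $w$, forcing the primitive roots to coincide — this is exactly the argument spelled out in Example~\ref{example:loop-step}). Setting $p_i \df \pi_v^i(p_0)$ gives $n$ distinct states with $\delta^*(p_i, v) = p_{i+1 \pmod n}$ and $\delta^*(p_i, w) = p_i$, which is precisely a loop-step cycle.

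The main obstacle is the bookkeeping that makes $w$ and $v$ "compatible" as actions on $Q$ while simultaneously keeping $\proot(w) \neq \proot(v)$: one must take powers $w = u^a$ and $v = u'^b$ with $a, b$ chosen (via the standard $j,p$ data of the finite transition monoid) so that $\pi_w$ is idempotent, $\pi_v$ is idempotent, $\pi_w$ and $\pi_v$ induce the same map on the common stable image, and the restriction of $\pi_v$ there is the identity while the restriction of $\pi_{u'}$ (before taking the stabilizing power) still has a nontrivial cycle — a slight tension that has to be resolved by choosing the cycle before inflating to the idempotent power, then inflating by a multiple of the cycle length so the cycle structure is preserved. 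I expect the cleanest route is: first fix the periodic element $m$ and two primitive roots, then work entirely in the transition monoid of $\Aut$ (which recognizes $L$ and hence, by minimality, whose syntactic image is $M_L$), choose an idempotent power $w$ of $u$, restrict to its image $P$, and analyze the group of permutations of $P$ induced by powers of $u'$ — if that group were trivial we could derive $\proot(u) = \proot(u')$ by the $L_{q_i,q_j}$-intersection argument from Example~\ref{example:loop-step}, a contradiction; so it is nontrivial, yielding the required cycle of length $n \geq 2$. The verification that the needed words $w, v$ can be chosen with $\proot(w) \ne \proot(v)$ reduces to the trivial observation that $\proot(u^k) = \proot(u)$ for all $k \ge 1$.
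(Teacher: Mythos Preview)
Your overall strategy---use periodicity of the monoid element to find a cycle of states under one word, and use a power of the other word as the fixing ``loop''---is the right one and is essentially what the paper does. But you have introduced an artificial obstacle and a genuine gap, both stemming from one missed observation.

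\medskip

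\textbf{The missed simplification.} You start with $u,u'\in\eta_L^{-1}(m)$, so $\eta_L(u)=\eta_L(u')=m$. Since the syntactic monoid of $L$ is the transition monoid of the minimal DFA $\Aut$, this means $u$ and $u'$ induce \emph{the same} function $Q\to Q$. Consequently, every power $u^a$ and $u'^a$ act identically on $Q$. Your entire ``main obstacle'' paragraph about arranging $\pi_w$ and $\pi_v$ to be compatible on a common stable image evaporates: there is nothing to arrange, because the two actions are literally equal. The paper exploits this directly: it takes $v\deff u'$ as the stepping word and $w'\deff u^{y+m}$ as the looping word, where $y,m$ are chosen so that $g^{y+m}$ is the idempotent of the eventual cyclic group $\set{g^{y+1},\ldots,g^{y+m}}$. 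Then $w'$ and $v^{y+m}$ have the same action (both equal $g^{y+m}$), so $w'$ fixes every state on the $v$-cycle, and $\proot(w')=\proot(u)\neq\proot(u')=\proot(v)$.

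\medskip

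\textbf{The gap.} Your argument that the permutation group on $P$ is nontrivial is not valid as written. You invoke ``the $L_{q_i,q_j}$-intersection argument from Example~\ref{example:loop-step}'' to conclude that triviality would force $\proot(u)=\proot(u')$. But that example is an ad~hoc computation for one specific four-state automaton; it is not a general lemma, and nothing in it lets you deduce anything about primitive roots in an arbitrary minimal DFA. The correct (and much shorter) argument is purely algebraic: if the action of $u'$ on $P$ were the identity, then for every $q\in Q$ we would have $\delta^*(q,u'^{a+1})=\delta^*(\delta^*(q,u'^a),u')=\delta^*(q,u'^a)$ (since $\delta^*(q,u'^a)\in P$), i.e.\ $m^{a+1}=m^a$ in $M_L$, contradicting the periodicity of $m$. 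Once you replace your Example~\ref{example:loop-step} appeal with this one-line observation, and drop the compatibility bookkeeping, your proof collapses to the paper's.
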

\begin{proof}
Let $L$ be a regular language that is not group primitive.
Let $\mathcal{M} \df (Q, \Sigma, \delta, q_0, F)$ be a minimal DFA with $L=\lang(\Aut)$.
Let $\eta_L \colon \Sigma^* \rightarrow M_L$ be the syntactic morphism of $L$, where $M_L$ is the syntactic monoid for $L$.
Since $L$ is not group primitive, $\eta_L$ is not group primitive.
That is, there exists a periodic element 
$g \in M_L$ where
$|\proot(\eta_L^{-1}(g))|\geq 2$.
Hence, there exist words
$w,v \in \eta_L^{-1}(g)$ such that $\proot(w) \neq \proot(v)$.

Since $g$ is periodic, we
have that $g^n \neq g^{n+1}$ for any $n \in \mathbb{N}$.
However, there exists $y \geq 0$ and $m > 0$ such that for all $x \in \NNpos$, we get $g^{y+ mx } = g^{y + m}$;
for example, see Section 6, Chapter 2 of~\cite{pin2010mathematical}.
We choose $m$ as small as possible.
Note that $m\geq 2$ because $g$ is periodic.
We let
$G\deff \set{ g^{y+m}, g^{y+m+1}, \dots, g^{y + 2m - 1}}$.
Observe that $G$ has $m$ elements, and $G$ is closed under multiplication with $g$, i.e., $g\cdot h=h\cdot g\in G$ for every $h\in G$.

For any words $u, u' \in \Sigma^*$ we have:
$\eta_L(u)=\eta_L(u')$ if and only if
$\delta^*(q, u) = \delta^*(q, u')$ for all $q \in Q$; see Proposition 4.28 of~\cite{pin2010mathematical}.
By our choice of $w$ and $v$ we have: $w,v\in\Sigma^+$, $\proot(w)\neq\proot(v)$, and $\eta_L(w)=\eta_L(v)=g$.

In particular, due to the fact that $m$ is chosen to be as small as possible, for any $z,z'\in\NN$ with $z,z'\geq y+m$ we have: 
\[
v^{z}\sim_L v^{z'} \iff  \eta_L(v^z)=\eta_L(v^{z'}) \iff  g^z=g^{z'} 
\iff  z -y \equiv z'-y \pmod{m}.
\]

Thus, since $m\not\equiv m{+}1 \pmod{m}$, we have $v^{m+y} \not\sim_L v^{m+y+1}$. Hence, there exists some state $p \in Q$ such that $\delta^*(p, v^{m+y}) \neq \delta^*(p, v^{m+y+1})$.
Let $p_i\deff \delta^*(p,v^{m+y+i})$ for every $i\in\set{0,\ldots,m{-}1}$.
Note that $p_0\neq p_1$ (because $p_0=\delta^*(p, v^{m+y})\neq \delta^*(p, v^{m+y+1}) =p_1$), and that $\delta^*(p_i,v)=p_{i+1}$ holds for every $i\in\set{0,\ldots,m-2}$.

Furthermore, $\delta^*(p_{m-1},v)=p_0$ due to the following:
Clearly, $\delta^*(p_{m-1},v)=\delta^*(p,v^{y+2m})$. Furthermore, $v^{y+2m}\sim_L v^{y+m}$ since
$\eta_L(v^{y+2m})= g^{y+2m}=g^{y+m}=\eta_L(v^{y+m})$. 
Thus, for \emph{every} state $q\in Q$ we have
$\delta^*(q,v^{y+2m})=\delta^*(q,v^{y+m})$. In particular for $q=p$ this yields
$\delta^*(p,v^{y+2m})=\delta^*(p,v^{y+m})=p_0$.

In summary, we now know that $\delta^*(p_i, v) = p_{i+1 \pmod{m}}$.
But 
note that the states $p_0, p_1, \dots, p_{m-1}$ are not necessarily pairwise distinct.
However, we know that, for all $x\in\NNpos$, we have that $v^{m+y} \sim_L v^{mx+y}$.  
Therefore, we get a set of states $\{ q_0, q_1, \dots, q_{m'-1} \}$ of size $m'$, where $2 \leq m' \leq m$, such that $q_0 = p_0$, $q_1 = p_1$ and $\delta^*(q_i, v) = q_{i+1 \pmod{m'}}$.
This is because we know that we eventually need to ``loop back'' to state $p_0$, which prohibits $\delta^*(p_i, v) = p_j$ where $p_j \neq p_0 \neq p_i$, and thus we obtain our set $\{ q_0, q_1, \dots, q_{m'-1} \}$.

Since $v^{m+y} \sim_L v^{mx+y}$ for all $x \in \NNpos$, it follows that $\delta^*(q, v^{m+y}) = \delta^*(q, v^{mx+y})$ for all $x \in \NNpos$ and all $q \in Q$.
Furthermore, we have that $w,v \in \eta_L^{-1}(g)$, thus $w^z \sim_L v^z$ for all $z \in \mathbb{N}$.
In particular, this implies for all $i\in\set{0,\ldots,m'{-}1}$ and the state $q_i$ that $\delta^*(q_i, w^{m+y}) = q_i$ for all $0 \leq i < m'$.
We let $w'\deff w^{m+y}$ and obtain that
$\delta^*(q_i, w') = q_i$ for all $i\in\set{0,\ldots,m'{-}1}$.
Furthermore, $\proot(w')=\proot(w) \neq \proot(v)$.

Consequently, we now know that $q_0,\ldots,q_{m'-1}$ are $m'\geq 2$ pairwise distinct states, $w'$ and $v$ are words with $\proot(w')\neq\proot(v)$, and $\delta^*(q_i,w')=q_i$ and $\delta^*(q_i,v)=q_{i+1 \pmod{m}}$ for all $i$ where $0 \leq i < m' {-} 1$.  
This proves that $\Aut$ has a loop-step cycle and therefore completes the proof of \autoref{thm:group-prim}.
\end{proof}

The contraposition of~\autoref{thm:gp-lsc} shows the step $\ref{item:mainthm:lsc}\Rightarrow\ref{item:mainthm:gp}$ of~\autoref{thm:main}.

\section{Every $\mathsf{SF}(\mathcal{R})$-Language is Expressible in $\fc$}\label{sec:sf-fc}
In this section, we show that the star-free closure of the class $\{ w^* \mid w \in \Sigma^* \}$ can be expressed in $\fc$.
We know from Example 3.7 of~\cite{frey2019finite} that $\fc$ can express all the star-free languages, and moreover, Lemma 5.5 of~\cite{frey2019finite} shows that $w^*$ for any $w \in \Sigma^*$ can be expressed in $\fc$.
It is therefore unsurprising that we can combine these results to show $\mathsf{SF}(\mathcal{R}) \subseteq \lang(\fc)$.

A language $L \subseteq \Sigma^*$ is \emph{bounded} if $L \subseteq w_1^* \cdots w_n^*$ for some $w_1, \dots, w_n \in \Sigma^+$ and $n \geq 1$.
A language is a \emph{bounded regular language} if it is both bounded and regular (Ginsburg and Spanier~\cite{ginsburg1966bounded}).
Lemma 5.3 of~\cite{thompson2023generalized} states that the Boolean combination of bounded regular languages can be expressed in $\fc$.
Going only slightly beyond that result gives us:

\begin{thm}\label{thm:sf-fc}
If $L \in \mathsf{SF}( \mathcal{R})$, then $L \in \lang(\fc)$.
\end{thm}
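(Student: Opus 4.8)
The plan is to prove $\mathsf{SF}(\mathcal{R}) \subseteq \lang(\fc)$ by structural induction along the recursive definition of $\mathsf{SF}(\mathcal{R})$, relying on two ingredients that are already available in the literature: first, that $\fc$ can express every star-free language and every language $w^*$ (Example~3.7 and Lemma~5.5 of~\cite{frey2019finite}), and second, that $\lang(\fc)$ is closed under the Boolean operations and concatenation. Closure under union, negation, and conjunction is immediate from the syntax of $\fc$ (and requires a $\textsc{Word}$-predicate only to pin down the whole word). The mild subtlety is concatenation: given $L_1 = \lang(\varphi_1)$ and $L_2 = \lang(\varphi_2)$, one wants to say ``the input word $\mathsf{w}$ splits as $\mathsf{w} \logeq x_1 \cdot x_2$ where $x_1 \in L_1$ and $x_2 \in L_2$''. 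This needs each $\varphi_i$ to be \emph{relativized} so that it speaks about the factor $x_i$ rather than the whole input; this relativization is exactly the standard manoeuvre used in Lemma~5.5 of~\cite{frey2019finite} and Lemma~5.3 of~\cite{thompson2023generalized}, and I would simply cite/adapt it. So the essential content is just: handle the two base cases and invoke closure.

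Concretely, I would proceed as follows. In the base case, for each $\mathtt a \in \Sigma$, the singleton $\{\mathtt a\}$ is star-free, hence $\fc$-definable by Example~3.7 of~\cite{frey2019finite}; and for any $w \in \Sigma^*$, the language $w^*$ is $\fc$-definable by Lemma~5.5 of~\cite{frey2019finite}. For the inductive step, suppose $L, L' \in \mathsf{SF}(\mathcal{R})$ are $\fc$-definable by sentences $\varphi, \varphi'$. Then $L \cup L'$ is defined by $\varphi \lor \varphi'$, and $\Sigma^* \setminus L$ is defined by $\neg \varphi$. For $L \cdot L'$, introduce a variable $\mathsf{w}$ forced to equal the whole input via $\textsc{Word}(\mathsf{w})$, split it as $\mathsf{w} \logeq x \cdot y$, and conjoin the relativizations $\varphi|_x$ and $\varphi'|_y$; here $\varphi|_x$ is obtained from $\varphi$ by replacing every occurrence of $\textsc{Word}(\cdot)$ applied to the ``universe'' variable with the statement that the argument equals $x$, and bounding all quantifiers to factors of $x$ (which is definable since ``$u \sqsubseteq x$'' is $\exists p, s\colon x \logeq p \cdot u \cdot s$). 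This is the only place where any actual work happens, and it is entirely routine: the analogous construction appears verbatim in the proofs cited, which is precisely the point the authors make when they write ``going only slightly beyond that result''.

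The main obstacle — and it is a modest one — is making the relativization argument precise enough to be convincing without re-deriving it from scratch. One has to be careful that relativizing a sentence to a factor is sound: an $\fc$-structure $\Structure{A}_w$ has as its universe $\facts(w)$, and for a factor $x \sqsubseteq w$ the substructure on $\facts(x)$ is exactly $\Structure{A}_x$ (concatenation restricts correctly, constants are handled via the $\perp$ convention), so a relativized formula evaluated in $\Structure{A}_w$ with its free variable mapped to $x$ holds iff the original formula holds in $\Structure{A}_x$. Once this correspondence is stated, the concatenation case follows, and with the two base cases already covered, structural induction completes the proof. I expect the write-up to be short: essentially a one-paragraph induction citing~\cite{frey2019finite,thompson2023generalized} for the two base cases and the relativization technique, plus a brief remark on why $\lang(\fc)$ is trivially closed under the Boolean operations.
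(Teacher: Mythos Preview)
Your proposal is correct and takes essentially the same approach as the paper: structural induction on $\mathsf{SF}(\mathcal{R})$, with the base cases handled via the constructions from~\cite{frey2019finite} and the Boolean cases trivial. The only organizational difference is that the paper sets up the induction hypothesis directly for one-free-variable formulas $\varphi_L(x)$ satisfying $(\Structure{A}_w,\sigma)\models\varphi_L(x)\iff\sigma(x)\in L$, which makes concatenation the one-liner $\exists x_1,x_2\colon (x\logeq x_1\cdot x_2)\land\varphi_{L_1}(x_1)\land\varphi_{L_2}(x_2)$ and sidesteps the separate relativization step you describe---your $\varphi|_x$ is precisely this formula.
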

\begin{proof}
First suppose that  for any $L \in \mathsf{SF}(\mathcal{R})$, we have $\psi_L(x)$ such that $(\Structure A_w, \subs) \models \psi_L(x)$ if and only if $\subs(x) \in L$.
Now let 
\[ \phi_L \df \exists x \colon \Bigl( \psi_L(x) \land \forall y, z \colon \bigl( ( (y \logeq x \cdot z) \lor (y \logeq z \cdot x) ) \rightarrow (z \logeq \emptyword) \bigr) \Bigr). \]
The formula $\phi_L$ states that there exists a factor $x$ such that $x \in L$, and for any factors $y, z$ with $y = xz$ or $y = zx$, it necessarily holds that $z = \emptyword$.
Consequently, $x$ represents the whole input word.
Thus, $\Structure A_w \models \phi_L$ if and only if $w \in L$.
The remainder of this proof establishes the following claim:
\begin{clm}
For every $L \in \mathsf{SF}(\mathcal{R})$, there exists an $\fc$-formula $\psi_L(x)$ such that $\subs(x) \in L$ if and only if $(\Structure A_w, \subs) \models \psi_L(x)$.
\end{clm}

Let $L \in \mathsf{SF}(\mathcal{R})$.
We proceed by a structural induction (which follows very similarly to the proof of Lemma 5.5 of~\cite{frey2019finite} given in the version with proofs~\cite{frey2019finiteArx}).
\begin{itemize}
\item If $L = \{ \mathtt a \}$ for some $\mathtt a \in \Sigma$, then let $\psi_L(x) \df (x \logeq \mathtt{a} )$,
\item If $L = w^*$ for some $w \in \Sigma^*$, then let\footnote{We use $z^p$ as shorthand for $z$ concatenated with itself $p$-times; which can clearly be expressed in $\fc$.} \[ \psi_L(x) \df \exists y, z \colon \bigl( (x \logeq y \cdot \proot(w) ) \land (x \logeq \proot(w) \cdot y) \land (x \logeq z^p) \bigr), \]
where $p \in \mathbb{N}$ such that $w = \proot(w)^p$.
The correctness of $\psi_L(x)$ for this case follows from a result on commuting words: If $uv = vu$ with $u, v \in \Sigma^*$, then there exists $n,m \in \mathbb{N}$ and $w \in \Sigma^*$ with $u = w^n$ and $v = w^m$ (See Proposition 1.3.2 of Lothaire~\cite{lothaire1997combinatorics}, and see the proof of Lemma 5.5 of~\cite{frey2019finite} given in~\cite{frey2019finiteArx}).
\item If $L = L_1 \union L_2$, then we can assume the existence of $\psi_{L_1}(x)$ and $\psi_{L_2}(x)$ by the induction hypothesis. Then, let $\psi_L(x) \df \psi_{L_1}(x) \lor \psi_{L_2}(x)$.
\item If $L = \Sigma^* \setminus L_1$, then by the induction hypothesis, we assume the existence of $\psi_{L_1}$ and let $\psi_L(x) \df \neg \psi_{L_1}(x)$.
\item If $L = L_1 \cdot L_2$,  then we can assume the existence of $\psi_{L_1}(x_1)$ and $\psi_{L_2}(x_2)$ by the induction hypothesis, with $x$, $x_1$ and $x_2$ being pairwise distinct. Then, let 
\[ \psi_L(x) \df \exists x_1, x_2 \colon \bigl( (x \logeq x_1 \cdot x_2) \land \psi_{L_1}(x_1) \land \psi_{L_2}(x_2) \bigr). \]
\end{itemize}
This concludes the proof.
\end{proof}

In this section, we have established $\ref{item:mainthm:sf}\Rightarrow\ref{item:mainthm:fc}$ of~\autoref{thm:main}.
By itself, \autoref{thm:sf-fc} may seem unsurprising.
However, the fact that $\fc$-definable regular languages are exactly $\mathsf{SF}(\mathcal{R})$ is rather surprising.
Especially considering the complicated languages $\fc$ can define (e.g., see Proposition 4.1 of~\cite{thompson2023generalized}).

\section{Every Group Primitive Language is in $\mathsf{SF}(\mathcal{R})$}\label{sec:gp-sf}
In this section, we show that any group primitive language (\autoref{defn:group-prim}) belongs to $\mathsf{SF}(\mathcal{R})$ (recall~\autoref{sec:star-free-closure-defn}).
To show this, we consider many concepts and results from Place and Zeitoun~\cite{place2023closing}; which algebraically characterize the star-free closure of so-called \emph{prevarieties}. 
We shall consider a prevariety $\mathcal{B}q$ (studied in Daviaud and Paperman~\cite{daviaud2018classes}) such that $\mathsf{SF}(\mathcal{B}q) = \mathsf{SF}(\mathcal{R})$, and then show that all group primitive languages satisfy the algebraic characterization of $\mathsf{SF}(\mathcal{B}q)$ which follows from~\cite{place2023closing}.
Consequently, all group primitive languages belong to $\mathsf{SF}(\mathcal{R})$.

For $L \subseteq \Sigma^*$ and $u \in \Sigma^*$, let $u^{-1}L \df \{ w \mid uw \in L \}$ and $L u^{-1} \df \{ w \mid wu \in L \}$.
A class $\mathcal{C}$ of languages is a \emph{prevariety} if $\mathcal{C}$ is a subset of the regular languages with $\emptyset, \Sigma^* \in \mathcal{C}$, for any $L \in \mathcal{C}$ and $u \in \Sigma^*$ we have $L u^{-1}, u^{-1} L \in \mathcal{C}$, and $\mathcal{C}$ is closed under union and complement.

\begin{exa}
\label{example:sf-prev}
Consider the \emph{star-free languages}, which we denote with $\mathsf{SF}$.
This class is the smallest class that contains the finite languages, and is closed under union, concatenation and complement.
Clearly, $\mathsf{SF}$ is contained in the regular languages, and it is straightforward to show that $\emptyset$ and $\Sigma^*$ belong to this class.
Furthermore, $\mathsf{SF}$ is closed under union and complement by definition.
Thus, $\mathsf{SF}$ is a prevariety since $L u^{-1}, u^{-1} L \in \mathsf{SF}$ for all $L \in \mathsf{SF}$ and $u \in \Sigma^*$; for example, see~\cite{pin2020prove}.
\end{exa}

We say that a language $K \subseteq \Sigma^*$ \emph{separates} $L_1 \subseteq \Sigma^*$ from $L_2 \subseteq \Sigma^*$ if $L_1 \subseteq K$ and $K \intersect L_2 = \emptyset$.
For example, $\mathtt{a}^*$ separates $(\mathtt{aa})^*$ from $\mathtt{b}\cdot  \mathtt{a}^*$.
Then, for a class of languages $\mathcal{C}$, we say that $L_1$ is \emph{$\mathcal{C}$-separable} from $L_2$ if there exists $K \in \mathcal{C}$ that separates $L_1$ from $L_2$.
As an example, $(\mathtt{aa})^*$ and $\mathtt{a} \cdot (\mathtt{aa})^*$ are not $\mathsf{SF}$-separable as any $K \in \mathsf{SF}$ such that $(\mathtt{aa})^* \subseteq K$ necessarily has some word $w \in K \intersect \mathtt{a} \cdot (\mathtt{aa})^*$.
This follows from $(\mathtt{aa})^* \notin \mathsf{SF}$; see~\cite{pin2020prove}. 
The notion of separability is well-studied; see~\cite{almeida1999some, place2014separating} for starters.

\begin{defi}[Adapted from~\cite{place2023closing}]\label{defn:C-orbit}
Let $\mathcal{C}$ be a prevariety.
If $\mu \colon \Sigma^* \rightarrow M$ is a morphism where $M$ is a finite monoid, then for any idempotent element $i \in M$, the \emph{$\mathcal{C}$-orbit of $i$ with respect to $\mu$} is defined as the set of all elements $isi \in M$ where $s \in M$ such that $\mu^{-1}(s)$ is not $\mathcal{C}$-separable from $\mu^{-1}(i)$.
\end{defi}

Note that $\mu^{-1}(s)$ is not $\mathcal{C}$-separable from $\mu^{-1}(i)$ if and only if $\mu^{-1}(s) \intersect L \neq \emptyset$ for any $L \in \mathcal{C}$ with $\mu^{-1}(i) \subseteq L$.

\begin{exa}
Consider the prevariety $\mathsf{SF}$ as discussed in~\autoref{example:sf-prev}.
Let $\mu \colon \{ \mathtt{a} \}^* \rightarrow \{ e, x \}$ such that $\mu(\mathtt{a}^n) = x$ if and only  if $n$ is odd.
Here, $e$ is the identity of $M$ and $x \cdot x = e$.
Clearly, $e$ is the only idempotent element of $M$.
The $\mathsf{SF}$-orbits of $e$ with respect to $\mu$ is the set of $ese \in M$ where $\mu^{-1}(s)$ is not $\mathsf{SF}$-separable from $\mu^{-1}(e)$.
Clearly, $e \cdot e \cdot e = e$ belongs to the $\mathsf{SF}$-orbits of $e$ with respect to $\mu$.
Furthermore, $e \cdot x \cdot e = x$ belongs to the  $\mathsf{SF}$-orbits of $e$ with respect to $\mu$ due to the fact that
$\mu^{-1}(e) = \{ \mathtt{a}^{2n} \mid n \in \mathbb{N} \}$ and $\mu^{-1}(x) = \{ \mathtt{a}^{2n+1} \mid n \in \mathbb{N} \}$ are not $\mathsf{SF}$-separable.
\end{exa}

For a morphism $\mu \colon \Sigma^* \rightarrow M$ where $M$ is finite, every idempotent element $i \in M$ defines a subset of $M$ called \emph{the $\mathcal{C}$-orbit of $i$ with respect to $\mu$}.
Lemma 5.5 of~\cite{place2023closing} states that for a prevariety $\mathcal{C}$, any $\mathcal{C}$-orbit is a monoid.
Furthermore, every morphism $\mu \colon \Sigma^* \rightarrow M$ gives rise to a set of monoids called the \emph{$\mathcal{C}$-orbits of $\mu$}; where every idempotent $i \in M$ is associated to some monoid in this set.
Using this notion of $\mathcal{C}$-orbits, Place and Zeitoun~\cite{place2023closing} algebraically characterized the languages belonging to $\mathsf{SF}(\mathcal{C})$, whenever $\mathcal{C}$ is a prevariety: 

\begin{thm}[Theorem 5.11, \cite{place2023closing}]\label{theorem:place-closing}
  Let $\mathcal{C}$ be a prevariety and let $L \subseteq \Sigma^*$ be regular. Then $L \in \mathsf{SF}(\mathcal{C})$ if and only if all the $\mathcal{C}$-orbits of the syntactic morphism are aperiodic monoids.
\end{thm}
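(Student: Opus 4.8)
The plan is to prove this as a transfer result between the \emph{separation problem} for $\mathcal{C}$ and membership in the star-free closure $\mathsf{SF}(\mathcal{C})$, following the Place--Zeitoun machinery. First I would observe that the condition ``all $\mathcal{C}$-orbits are aperiodic'' behaves well under morphism division: it is both preserved and reflected when passing between morphisms recognizing $L$, so it holds for the syntactic morphism of $L$ if and only if it holds for an arbitrary surjective $\mu\colon\Sigma^*\to M$ that recognizes $L$. Establishing this ``division lemma'' first is convenient, since it lets each of the two implications be carried out on whichever recognizing morphism is most natural, rather than being tied to the syntactic one.

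For the \emph{expressibility} direction, assume every $\mathcal{C}$-orbit of a recognizing morphism $\mu\colon\Sigma^*\to M$ is aperiodic; I would build an $\mathsf{SF}(\mathcal{C})$-expression for $L=\mu^{-1}(F)$, which suffices since $L=\bigcup_{m\in F}\mu^{-1}(m)$ and $\mathsf{SF}(\mathcal{C})$ is closed under union. The core is a well-founded induction ordering the idempotents of $M$ by the $\mathcal{J}$-preorder and arguing from the top down that each $\mu^{-1}(m)$ lies in $\mathsf{SF}(\mathcal{C})$. The delicate case is when $m$ sits inside the $\mathcal{C}$-orbit of some idempotent $i$: there one uses languages of $\mathcal{C}$ as \emph{separators} to cut an input word into a bounded number of blocks, each of which either maps into a strictly smaller $\mathcal{J}$-class (handled by the induction hypothesis) or stays within the orbit of $i$; within the orbit, aperiodicity permits a relativized Schützenberger / McNaughton--Papert argument describing the relevant blocks by an ordinary star-free expression over the derived sub-alphabet, after which the blocks are reassembled by concatenation with the $\mathcal{C}$-separators --- all operations available in $\mathsf{SF}(\mathcal{C})$.

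For the \emph{necessity} direction, I would show by structural induction along the definition of $\mathsf{SF}(\mathcal{C})$ that every language in it is recognized by some morphism all of whose $\mathcal{C}$-orbits are aperiodic; the division lemma then forces the same for the syntactic morphism. The base cases are light: a language $L\in\mathcal{C}$ is recognized by a morphism whose $\mathcal{C}$-orbits are all singletons, because $\mu^{-1}(i)$ is itself a finite Boolean combination of quotients of $L$ and hence a $\mathcal{C}$-separator (using that a prevariety is closed under quotients and Boolean operations), so the only $s$ not $\mathcal{C}$-separable from $\mu^{-1}(i)$ is $i$; and each singleton $\{\mathtt{a}\}$ is recognized by a small aperiodic monoid. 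Complement is immediate since it keeps the same recognizing morphism, and union reduces to checking that a product of two morphisms with aperiodic $\mathcal{C}$-orbits again has aperiodic $\mathcal{C}$-orbits. I expect the genuine obstacle to be closure under \emph{concatenation}: given $L_1,L_2$ recognized by morphisms with aperiodic $\mathcal{C}$-orbits, one must exhibit a recognizing morphism for $L_1L_2$ with the same property, which requires the ``concatenation principle'' analysis of how a factorization point interacts with orbits, and controlling the orbits of the refined morphism for $L_1L_2$ is where most of the technical weight lies --- on a par with the relativized Schützenberger construction in the expressibility direction.
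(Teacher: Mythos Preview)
The paper does not prove this statement at all: it is quoted verbatim as Theorem~5.11 of Place and Zeitoun~\cite{place2023closing} and used as a black box, so there is no ``paper's own proof'' to compare your proposal against. Your outline is a reasonable high-level sketch of the Place--Zeitoun argument (division lemma, $\mathcal{J}$-induction with relativized Sch\"utzenberger for one direction, structural induction with the concatenation-closure obstacle for the other), but since the present paper simply imports the result, no comparison is possible here.
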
 

In order to use~\autoref{theorem:place-closing}, we define the prevariety $\mathcal{B}q$ (introduced in~\cite{daviaud2018classes}): 
\begin{itemize}
\item $\emptyset, \Sigma^* \in \mathcal{B}q$,
\item $L \in \mathcal{B}q$ for every regular language $L \subseteq w^*$ with $w \in \Sigma^*$,
\item $\mathcal{B}q$ is closed under union and complement, and
\item $u^{-1}L, L u^{-1} \in \mathcal{B}q$ for any $L \in \mathcal{B}q$ and $u \in \Sigma^*$.
\end{itemize}

Our next goal is to establish $\mathsf{SF}(\mathcal{R}) = \mathsf{SF}(\mathcal{B}q)$.
We start by giving two basic lemmas on bounded regular languages.
Recall that $L \subseteq \Sigma^*$ is a bounded regular language if $L$ is regular and $L \subseteq w_1^* \cdots w_n^*$ for $w_1, \dots, w_n \in \Sigma^+$ and $n \geq 1$.

\begin{lem}\label{lemma:bounded-regular}
Every bounded regular language $L \subseteq \Sigma^*$ belongs to  $\mathsf{SF}( \mathcal{R})$.
\end{lem}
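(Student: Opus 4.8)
The plan is to show that every bounded regular language $L \subseteq w_1^* \cdots w_n^*$ lies in $\mathsf{SF}(\mathcal{R})$ by induction on the number $n$ of generators, reducing the general case to the case of a regular subset of a single $w^*$. For the base case $n = 1$, a regular language $L \subseteq w^*$ is determined by the set $I = \{ k \in \NN \mid w^k \in L\}$; since $L$ is regular, this set $I$ is ultimately periodic, so it is a finite union of a finite set and finitely many arithmetic progressions. A single arithmetic progression $\{w^{a + bk} \mid k \in \NN\}$ equals $w^a \cdot (w^b)^*$, which is a concatenation of a fixed word (expressible as a concatenation of singletons, using that $\{\emptyword\} \in \mathsf{SF}(\mathcal{R})$) with an element of $\mathcal{R}$; a finite set $\{w^{k} \mid k \in E\}$ is a finite union of singleton fixed words. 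Taking the union over all these pieces and using closure of $\mathsf{SF}(\mathcal{R})$ under union gives $L \in \mathsf{SF}(\mathcal{R})$.

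For the inductive step, suppose $L \subseteq w_1^* \cdots w_n^*$ is regular with $n \geq 2$. The idea is to cut $L$ at the boundary between the $w_1^*$ block and the rest: for each threshold $t$ (bounded by something derived from the automaton for $L$ and the length of $w_1$), the set of words in $L$ whose maximal $w_1$-prefix is exactly $w_1^t$ for small $t$ can be handled by ``peeling off'' $w_1^t$ and applying the induction hypothesis to the residual language (a quotient of $L$), which is again bounded regular with fewer generators; the ``tail'' portion where the $w_1$-exponent is large and runs through an arithmetic progression is of the form $w_1^a (w_1^b)^* \cdot K$ with $K$ bounded regular over $w_2, \dots, w_n$, and this is a concatenation $w_1^a \cdot (w_1^b)^* \cdot K$ of $\mathsf{SF}(\mathcal{R})$-languages. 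The key technical point making this work is that although the decomposition $w_1^{i_1} w_2^{i_2} \cdots w_n^{i_n}$ of a word in $L$ need not be unique, regularity still forces the relevant index sets to be semilinear/ultimately periodic, and one can isolate the exponent of $w_1$ using quotients $u^{-1}L$, which preserve regularity and boundedness.

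The main obstacle I expect is handling the potential ambiguity of the bounded representation cleanly: a word may be writable as $w_1^{i_1}\cdots w_n^{i_n}$ in several ways, so ``the $w_1$-exponent'' is not well-defined, and the naive partition of $L$ by this exponent is not meaningful. The fix is to avoid speaking of exponents of individual words and instead work with quotients: define, for each word $u$ of the form $w_1^j$ (for $j$ up to a suitable bound), the language $u^{-1}L$, observe it is bounded regular over $w_1, \dots, w_n$, and use a pumping/periodicity argument on the minimal DFA of $L$ to show that for large $j$ the quotients $(w_1^j)^{-1}L$ stabilize into a periodic pattern with period dividing some $b$; then $L$ is the union over small $j$ of $w_1^j \cdot \bigl((w_1^j)^{-1}L \setminus w_1\cdot(\dots)\bigr)$-type pieces plus a tail of the form $w_1^a(w_1^b)^* \cdot K$, all of which are in $\mathsf{SF}(\mathcal{R})$ by the induction hypothesis and closure properties. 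Since the residual languages in the recursion eventually have no $w_1$-block contributing unboundedly, the induction terminates with the $n=1$ base case. Throughout, I would freely use that $\mathsf{SF}(\mathcal{R})$ is closed under union, intersection, complement, and concatenation, and that it contains every fixed finite word and every $w^*$.
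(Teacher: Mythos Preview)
Your approach is workable but takes a substantially different route from the paper. The paper's proof is a one-line citation: Theorem~1.1 of Ginsburg--Spanier~\cite{ginsburg1966bounded} characterizes the bounded regular languages as the smallest class containing the finite languages and all $w^*$, and closed under concatenation and finite union. Since $\mathsf{SF}(\mathcal{R})$ contains these base cases and is closed under these operations, the inclusion is immediate. You are essentially reproving Ginsburg--Spanier from scratch.

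Your base case ($n=1$) is correct. The inductive step, however, is where your write-up is loose, and you correctly flag the obstacle yourself: the quotients $(w_1^j)^{-1}L$ need not lie in $w_2^*\cdots w_n^*$, so the number of generators does not obviously drop, and ``subtracting $w_1\cdot(\dots)$'' does not fix this (a word in $w_2^*\cdots w_n^*$ may well begin with $w_1$). A cleaner way to execute your induction, avoiding quotients entirely, is to fix a DFA $(Q,\Sigma,\delta,q_0,F)$ for $L$ and observe that
\[
  L \ = \ \bigcup_{q\in Q}\ \bigl(L_q \cap w_1^*\bigr)\cdot\bigl(R_q \cap w_2^*\cdots w_n^*\bigr),
\]
where $L_q=\{u:\delta^*(q_0,u)=q\}$ and $R_q=\{u:\delta^*(q,u)\in F\}$. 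Each $L_q\cap w_1^*$ is a regular subset of $w_1^*$ (your base case), and each $R_q\cap w_2^*\cdots w_n^*$ is bounded regular over $n{-}1$ generators (your induction hypothesis), so the right-hand side lies in $\mathsf{SF}(\mathcal{R})$. This is exactly the structural content of the Ginsburg--Spanier theorem the paper cites; your proof buys self-containment at the cost of redoing known work.
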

\begin{proof}
From Theorem 1.1 of~\cite{ginsburg1966bounded}, the class of bounded regular languages is the smallest class that contains; the finite languages, the languages $w^*$ for any $w \in \Sigma^*$, and is closed under concatenation and finite union. 
Immediately from the definition of $\mathsf{SF}( \mathcal{R})$, we can see every bounded regular language $L \subseteq \Sigma^*$ is in $\mathsf{SF}(\mathcal{R})$.
\end{proof}

\begin{lem}\label{lemma:bounded-quotient}
If $L \subseteq w^*$, for some $w \in \Sigma^*$, is regular and $\mathtt a \in \Sigma$, then $\mathtt{a}^{-1} L$ and $L \mathtt{a}^{-1}$ are both bounded regular languages.
\end{lem}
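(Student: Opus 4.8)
The plan is to argue directly from the structure of subsets of $w^*$. Since $L \subseteq w^*$, every word in $L$ is of the form $w^n$ for some $n \in \NN$, so $L = \{ w^n \mid n \in N \}$ for some set $N \subseteq \NN$. Because $L$ is regular, $N$ must be \emph{ultimately periodic}: there are $n_0 \in \NN$ and a period $q \geq 1$ such that for all $n \geq n_0$, membership $n \in N$ depends only on $n \bmod q$ (this is the standard fact that unary regular languages, here encoded through the fixed word $w$, are ultimately periodic; one can obtain it from the pumping lemma applied to $L$, or by noting $w^{-1}$-quotients of $L$ are finitely many). I would record this as the first step.

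Next I would compute the quotients explicitly. For $\mathtt a^{-1} L = \{ v \mid \mathtt a v \in L \}$: a word $\mathtt a v$ lies in $L$ only if $\mathtt a v = w^n$ for some $n \in N$, which forces $w$ to begin with $\mathtt a$ (if $w$ does not begin with $\mathtt a$, or $w = \emptyword$, then $\mathtt a^{-1} L$ is either $\emptyset$ or, in the degenerate case $w=\emptyword$, at most $\emptyset$, and we are done since $\emptyset \subseteq \emptyword^*$ is trivially bounded regular). So assume $w = \mathtt a w'$ with $w' \in \Sigma^*$. Then $\mathtt a^{-1} L = \{ w' w^{n-1} \mid n \in N,\ n \geq 1 \}$, which is contained in $w'^* w^* \subseteq (w')^* w^*$, hence bounded; and it is regular since $\mathtt a^{-1} L$ is a quotient of a regular language. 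Symmetrically, for $L \mathtt a^{-1} = \{ v \mid v \mathtt a \in L \}$ one writes $w = w'' \mathtt a$ when $w$ ends in $\mathtt a$ (otherwise the quotient is $\emptyset$), and obtains $L \mathtt a^{-1} = \{ w^{n-1} w'' \mid n \in N,\ n\geq 1\} \subseteq w^* (w'')^*$, again bounded and regular. I would lay these two cases out in parallel.

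Finally I would assemble: in every case $\mathtt a^{-1} L$ and $L \mathtt a^{-1}$ are regular (being left/right quotients of the regular language $L$ by a single letter) and contained in a product of at most two starred words, hence bounded; so both are bounded regular languages, as claimed.

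The only mildly delicate point — and the step I expect to need the most care — is the bookkeeping when $w$ is imprimitive or when a prefix/suffix of $w$ overlaps nontrivially with $w$ itself (e.g.\ $w = \mathtt{aba}$, where stripping a leading $\mathtt a$ from $w^n$ must be matched against the actual factorization $w^n = w'(w)^{n-1}$ rather than some ad hoc regrouping). The clean way around this is to not try to "re-factor" at all: just observe $\mathtt a^{-1} L \subseteq \mathtt a^{-1}(w^*) \subseteq \{ v \mid \mathtt a v \sqsubseteq w^\omega \}$, and the set of prefixes of $w^\omega$ starting after one letter is itself contained in $\proot(w)^* \cdot \proot(w)^*$ up to a bounded-length prefix, hence bounded; regularity is automatic from quotient closure. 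Alternatively — and this is probably the slickest route — simply note that $\mathtt a^{-1} L$ and $L \mathtt a^{-1}$ are finite unions of languages of the form $u \cdot (w^q)^* $ and $(w^q)^* \cdot u$ for a fixed $q$ and finitely many words $u$ with $|u| < |w^q|$, which are patently bounded regular; this drops straight out of the ultimate periodicity of $N$ with period $q$.
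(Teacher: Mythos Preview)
Your proposal is correct and follows essentially the same route as the paper: write $L=\{w^n\mid n\in N\}$, split on whether $w$ begins (resp.\ ends) with $\mathtt a$, and in the nontrivial case $w=\mathtt a\,w'$ observe that $\mathtt a^{-1}L=\{\,w'\,w^{n-1}\mid n\in N,\ n\ge 1\}\subseteq (w')^*w^*$, with regularity coming from closure of regular languages under quotients. Your detour through ultimate periodicity of $N$ and your concerns about imprimitive $w$ are unnecessary---deleting the first letter of $w^n$ yields $w'w^{n-1}$ unambiguously, regardless of any internal overlaps---so the simple computation in your second and third paragraphs already suffices, exactly as in the paper.
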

\begin{proof}
Let $w \in \Sigma^*$ and let $I \subseteq \mathbb{N}$ such that $L = \{ w^i \mid i \in I \}$ is regular.
We shall prove that $\mathtt{a}^{-1} L$ is a bounded regular language ($L \mathtt{a}^{-1}$ follows symmetrically).
If $w = \mathtt{b} u$ for some $\mathtt b \in \Sigma \setminus \{ \mathtt a \}$ and $u \in \Sigma^*$, then $\mathtt{a}^{-1} L = \emptyset$ and therefore $\mathtt{a}^{-1} L$ is a bounded regular language.
If $w = \mathtt{a} u$ for some $u \in \Sigma^*$, then $\mathtt{a}^{-1} L = u \cdot \{ w^{i-1} \mid i \in I \text{ and } i > 0 \}$.
Thus, $\mathtt{a}^{-1} L \subset u^* w^*$ and thus is a bounded language.
Furthermore, if $L'$ is regular and $v \in \Sigma^*$, then $v^{-1} L$ is also regular (see Theorem 4.5 of~\cite{hof97chp2}).
Consequently, $\mathtt{a}^{-1} L$ is bounded and regular.
\end{proof}

Even though $\mathcal{R}$ is not a prevariety, the following lemma establishes that $\mathsf{SF}(\mathcal{R})$ can be characterized with $\mathcal{C}$-orbits.

\begin{lem}\label{lemma:SF-R-P}
$\mathsf{SF}(\mathcal{R}) = \mathsf{SF}(\mathcal{B}q)$.
\end{lem}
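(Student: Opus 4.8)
The plan is to prove the two inclusions separately. The inclusion $\mathsf{SF}(\mathcal{R})\subseteq\mathsf{SF}(\mathcal{B}q)$ is immediate: for every $w\in\Sigma^*$ the language $w^*$ is regular and contained in $w^*$, hence $w^*\in\mathcal{B}q$; so $\mathcal{R}\subseteq\mathcal{B}q$, and since $\mathsf{SF}$ is monotone this yields $\mathsf{SF}(\mathcal{R})\subseteq\mathsf{SF}(\mathcal{B}q)$.

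For the converse I would first note that $\mathsf{SF}$ is idempotent: for any class $\mathcal{C}$, the class $\mathsf{SF}(\mathcal{C})$ already contains all singletons $\{\mathtt a\}$ and is closed under union, complement and concatenation, so $\mathsf{SF}(\mathsf{SF}(\mathcal{C}))=\mathsf{SF}(\mathcal{C})$. Hence it suffices to show $\mathcal{B}q\subseteq\mathsf{SF}(\mathcal{R})$, since then $\mathsf{SF}(\mathcal{B}q)\subseteq\mathsf{SF}(\mathsf{SF}(\mathcal{R}))=\mathsf{SF}(\mathcal{R})$. Now $\mathcal{B}q$ is by definition the smallest class of regular languages that contains $\emptyset$, $\Sigma^*$ and every regular $L\subseteq w^*$ (with $w\in\Sigma^*$), and that is closed under union, complement and left and right quotients by words. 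So it is enough to verify that $\mathsf{SF}(\mathcal{R})$ enjoys all of these closure properties. That $\emptyset,\Sigma^*\in\mathsf{SF}(\mathcal{R})$ was already observed after \autoref{defn:SF-C}; that every regular $L\subseteq w^*$ lies in $\mathsf{SF}(\mathcal{R})$ is exactly \autoref{lemma:bounded-regular}, as such an $L$ is a bounded regular language; and closure under union and complement is built into the definition of $\mathsf{SF}$.

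The remaining --- and only nontrivial --- point is closure of $\mathsf{SF}(\mathcal{R})$ under quotients. Writing $u=\mathtt a_1\cdots\mathtt a_k$ one has $u^{-1}L=\mathtt a_k^{-1}(\cdots(\mathtt a_1^{-1}L)\cdots)$, and symmetrically on the right, so it suffices to prove that $\mathtt a^{-1}L\in\mathsf{SF}(\mathcal{R})$ and $L\mathtt a^{-1}\in\mathsf{SF}(\mathcal{R})$ for every $L\in\mathsf{SF}(\mathcal{R})$ and every $\mathtt a\in\Sigma$. I would establish this by structural induction along the recursive definition of $\mathsf{SF}(\mathcal{R})$, treating $\mathtt a^{-1}L$ (the right quotient being symmetric). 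For $L=\{\mathtt b\}$ the quotient $\mathtt a^{-1}L$ is $\{\emptyword\}$ or $\emptyset$. For $L=w^*$, \autoref{lemma:bounded-quotient} gives that $\mathtt a^{-1}w^*$ is a bounded regular language, hence in $\mathsf{SF}(\mathcal{R})$ by \autoref{lemma:bounded-regular} (concretely, $\mathtt a^{-1}w^*$ equals $\{u\}\cdot w^*$ when $w=\mathtt a u$, and $\emptyset$ otherwise). The Boolean steps use $\mathtt a^{-1}(L_1\union L_2)=(\mathtt a^{-1}L_1)\union(\mathtt a^{-1}L_2)$ and $\mathtt a^{-1}(\Sigma^*\setminus L_1)=\Sigma^*\setminus(\mathtt a^{-1}L_1)$, and the induction hypothesis together with closure of $\mathsf{SF}(\mathcal{R})$ under Boolean operations finishes each case. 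For concatenation, $\mathtt a^{-1}(L_1\cdot L_2)$ equals $(\mathtt a^{-1}L_1)\cdot L_2$ if $\emptyword\notin L_1$, and $(\mathtt a^{-1}L_1)\cdot L_2\;\union\;(\mathtt a^{-1}L_2)$ if $\emptyword\in L_1$; either way this lies in $\mathsf{SF}(\mathcal{R})$ by the induction hypothesis and closure under concatenation and union (and $L_2\in\mathsf{SF}(\mathcal{R})$ outright). I expect the concatenation case, and in particular the bookkeeping on whether $\emptyword\in L_1$, to be the only place demanding a little care; it is entirely routine, and I anticipate no genuine obstacle in this lemma --- it is a bootstrapping argument that repackages \autoref{lemma:bounded-regular} and \autoref{lemma:bounded-quotient} into a closure statement.
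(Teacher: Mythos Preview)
Your proof is correct, and the overall architecture---show $\mathcal{R}\subseteq\mathcal{B}q$ for one inclusion, and $\mathcal{B}q\subseteq\mathsf{SF}(\mathcal{R})$ for the other---matches the paper. The difference lies in how the quotient closure is handled. The paper argues directly by structural induction over the definition of $\mathcal{B}q$ (whose closure operations are union, complement and quotients, but \emph{not} concatenation), showing that each generator and each closure step of $\mathcal{B}q$ stays inside $\mathsf{SF}(\mathcal{R})$. You instead prove the stronger statement that $\mathsf{SF}(\mathcal{R})$ itself is closed under letter quotients, by structural induction over the definition of $\mathsf{SF}(\mathcal{R})$ (whose closure operations are union, complement and \emph{concatenation}, but not quotients). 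This forces you to handle the concatenation case via the Brzozowski-style identity $\mathtt a^{-1}(L_1L_2)=(\mathtt a^{-1}L_1)L_2\cup[\emptyword\in L_1]\,\mathtt a^{-1}L_2$, which the paper avoids; in exchange, your argument yields as a byproduct that $\mathsf{SF}(\mathcal{R})$ is a prevariety, which is a tidy fact in its own right.
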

\begin{proof}
By the definition of $\mathcal{R}$ and $\mathcal{B}q$, we have that $\mathcal{R} \subseteq \mathcal{B}q$ and thus we immediately get $\mathsf{SF}(\mathcal{R}) \subseteq \mathsf{SF}(\mathcal{B}q)$.
To see that $\mathsf{SF}(\mathcal{B}q) \subseteq \mathsf{SF}(\mathcal{R})$, we show that every language $L \in \mathcal{B}q$ belongs to $\mathsf{SF}(\mathcal{R})$.
As stated earlier, $\emptyset, \Sigma^* \in \mathsf{SF}(\mathcal{R})$, see~\autoref{sec:star-free-closure-defn}.
The regular language $L \subseteq w^*$, for any $w \in \Sigma^*$, is a bounded regular language, thus $L \in \mathsf{SF}(\mathcal{R})$ (recall~\autoref{lemma:bounded-regular}).
Furthermore, $\mathsf{SF}(\mathcal{R})$ is closed under union and complement (by definition).
Therefore, all that remains is to prove:
\begin{clm}
If $L \in \mathcal{B}q$, then $u^{-1}L, L u^{-1} \in \mathsf{SF}(\mathcal{R})$ for any $u \in \Sigma^*$.
\end{clm}

The proof of the above claim follows closely to the proof of Proposition 3.2 in~\cite{place2023closing}.
We only consider $u^{-1} L$, as $L u^{-1}$ follows symmetrically, and proceed by induction along the length of $u$.
If $|u| = 0$, then $u^{-1} L = L$ and therefore we are done.
Considering when $|u| > 0$, we have $u^{-1} L = w^{-1} ( \mathtt{a}^{-1} L)$ where $u = w \mathtt{a}$ and $\mathtt{a} \in \Sigma$.
Thus, by the induction hypothesis, $u^{-1} L \in \mathsf{SF}(\mathcal{R})$ if $\mathtt{a}^{-1} L \in \mathsf{SF}(\mathcal{R})$.
To show $\mathtt{a}^{-1} L \in \mathsf{SF}(\mathcal{R})$, we use a structural sub-induction.
Clearly $\mathtt{a}^{-1} \emptyset = \emptyset$ and $\mathtt{a}^{-1} \Sigma^* = \Sigma^*$ and therefore these cases are trivial.
The language $\mathtt{a}^{-1} L$ where $L \subseteq v^*$, where $L$ is regular and $v^* \in \Sigma^*$, is a bounded regular language (see~\autoref{lemma:bounded-quotient}) and therefore $\mathtt{a}^{-1}L \in \mathsf{SF}(\mathcal{R})$, see~\autoref{lemma:bounded-regular}. 
Next, consider $\mathtt{a}^{-1} L$ where $L = L_1 \union L_2$. 
By the sub-induction hypothesis, $\mathtt{a}^{-1} L_1 \in \mathsf{SF}(\mathcal{R})$ and $\mathtt{a}^{-1} L_2 \in \mathsf{SF}(\mathcal{R})$ and $\mathtt{a}^{-1}L = \mathtt{a}^{-1} L_1 \union \mathtt{a}^{-1} L_2$ and thus we are done with union.
For complement: Let $L = \Sigma^* \setminus K$ where $K \in \mathcal{B}q$. 
Note that
\[
\mathtt{a}^{-1} (\Sigma^* \setminus K) = \{ v \mid \mathtt{a}v \in \Sigma^* \setminus K \} = \\
\{ v \mid \mathtt{a} v \in \Sigma^* \text{ and } \mathtt{a} v \notin K \} = \Sigma^* \setminus \mathtt{a}^{-1} K. 
\]
By the sub-induction hypothesis $\mathtt a^{-1} K \in \mathsf{SF}(\mathcal{R})$, and thus we are done with complement.
This completes the proof that $\mathcal{B}q \subseteq \mathsf{SF}(\mathcal{R})$, which implies $\mathsf{SF}(\mathcal{B}q) \subseteq \mathsf{SF}(\mathcal{R})$.
\end{proof}

We note that~\autoref{theorem:place-closing} and~\autoref{lemma:SF-R-P} immediately give us an algebraic characterization of $\mathsf{SF}(\mathcal{R})$.
However, for our more specialized purposes, we are able to give a more lightweight characterization.

\begin{lem}\label{lemma:group-prim-to-orbit}
If the syntactic morphism $\eta_L \colon \Sigma^* \rightarrow M_L$ of the regular language $L \subseteq \Sigma^*$ is a group primitive morphism, then all $\mathcal{B}q$-orbits of $\eta_L$ are aperiodic monoids.
\end{lem}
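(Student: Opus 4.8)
The plan is to argue by contradiction: assuming some $\mathcal{B}q$-orbit of $\eta_L$ is not aperiodic, I will produce a periodic element of $M_L$ violating group primitivity. So fix an idempotent $i \in M_L$ whose $\mathcal{B}q$-orbit $O_i$ is not aperiodic. By Lemma~5.5 of~\cite{place2023closing}, $O_i$ is a finite monoid, and its multiplication is the restriction of that of $M_L$; hence $O_i$ contains an element $g$ that is periodic, and ``periodic'' means the same whether computed in $O_i$ or in $M_L$ (powers of $g$ agree). By \autoref{defn:C-orbit}, $g = isi$ for some $s \in M_L$ such that $\eta_L^{-1}(s)$ is \emph{not} $\mathcal{B}q$-separable from $\eta_L^{-1}(i)$. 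Since $g$ is periodic it equals neither the identity $e$ nor the idempotent $i$; in particular $s \neq i$, because $s = i$ would give $g = i^3 = i$. Applying group primitivity (\autoref{defn:group-prim}) to the periodic element $g$, we obtain a primitive word $p$ with $\eta_L^{-1}(g) \subseteq p^*$ (indeed $\subseteq p^+$, as $g \neq e$ rules out $\varepsilon \in \eta_L^{-1}(g)$).

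The heart of the argument is a combinatorics-on-words step establishing $\eta_L^{-1}(i) \subseteq p^*$ and $\eta_L^{-1}(s) \subseteq p^*$. Fix any $v \in \eta_L^{-1}(s)$ (nonempty since $\eta_L$ is surjective). For the first inclusion, take $u \in \eta_L^{-1}(i)$; if $u = \varepsilon$ there is nothing to prove, so assume $u \neq \varepsilon$. Because $i$ is idempotent, $\eta_L(uvu) = \eta_L(uvu^3) = isi = g$, so both $uvu$ and $uvu^3$ lie in $\eta_L^{-1}(g) \subseteq p^*$; write $uvu = p^j$ and $uvu^3 = p^{j'}$. From $uvu^3 = (uvu)\,u^2$ and free cancellation in $\Sigma^*$ we get $u^2 = p^{j'-j}$ with $j' > j$, and passing to primitive roots (using that $p$ is primitive) yields $\proot(u) = p$, so $u \in p^*$. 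For the second inclusion, given $v \in \eta_L^{-1}(s)$ choose $u \in \eta_L^{-1}(i)$, which by the previous step equals $p^a$ for some $a$; then $p^a v p^a = uvu \in \eta_L^{-1}(g) \subseteq p^*$, and cancelling $p^a$ on each side gives $v \in p^*$.

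To close: $\eta_L^{-1}(s)$ is a regular language (as $L$ is regular, $M_L$ is finite and recognizes it) contained in $p^*$, so by the clause of the definition of $\mathcal{B}q$ for regular subsets of $w^*$ we have $\eta_L^{-1}(s) \in \mathcal{B}q$; moreover $\eta_L^{-1}(s) \cap \eta_L^{-1}(i) = \emptyset$ since $s \neq i$. Hence $K \df \eta_L^{-1}(s)$ satisfies $\eta_L^{-1}(s) \subseteq K$ and $K \cap \eta_L^{-1}(i) = \emptyset$, i.e., $K$ $\mathcal{B}q$-separates $\eta_L^{-1}(s)$ from $\eta_L^{-1}(i)$ --- contradicting the choice of $s$. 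Therefore no $\mathcal{B}q$-orbit of $\eta_L$ contains a periodic element, so every $\mathcal{B}q$-orbit of $\eta_L$ is an aperiodic monoid.

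I expect the main obstacle to be the word-combinatorial step $\eta_L^{-1}(i) \subseteq p^*$. The device that keeps it short is to exploit idempotency of $i$ to compare $uvu$ with $uvu^3$: these differ by the block $u^2$ appended on \emph{one} side only, so the relation $p^j u^2 = p^{j'}$ is settled by a single left-cancellation rather than a two-sided prefix/suffix analysis of a power of $p$. Beyond that I only need minor care with the degenerate cases $u = \varepsilon$ (which forces $i = e$ but is vacuous for the inclusion) and with verifying $j' > j$, which holds precisely because $u \neq \varepsilon$.
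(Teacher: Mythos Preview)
Your proof is correct and takes a genuinely different route from the paper. The paper also argues by contradiction, but instead of exploiting group primitivity on $g=isi$ directly, it first transfers periodicity from $isi$ to $is$ (via $(isi)^n=(is)^n i$) and then does a four-way case split on the cardinality of $\proot(\eta_L^{-1}(s))\cup\proot(\eta_L^{-1}(i))$; the main case produces $u\in\eta_L^{-1}(i)$, $v\in\eta_L^{-1}(s)$ with $\proot(u)\neq\proot(v)$ and shows $uv,uuv\in\eta_L^{-1}(is)$ have different primitive roots, contradicting group primitivity of $is$. Your approach avoids the case analysis entirely: you apply group primitivity to $g=isi$ itself, extract the single root $p$, and use the clean $uvu$ versus $uvu^3$ trick to force $\eta_L^{-1}(i)\subseteq p^*$ (and then $\eta_L^{-1}(s)\subseteq p^*$ by two-sided cancellation in $p^*$), reaching the contradiction via $\mathcal{B}q$-separability. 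What you gain is uniformity and brevity; what the paper's approach makes more explicit is the role of the element $is$, which connects slightly more visibly to the ``periodic element with non-unique root'' picture of \autoref{defn:group-prim}. Both rely on a commutation/primitive-root argument of the same flavour, but yours needs only one such step.
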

\begin{proof}
Let $\eta_L \colon \Sigma^* \rightarrow M_L$ be the syntactic morphism of the regular language $L \subseteq \Sigma^*$.
Working towards a contradiction, assume that $\eta_L$ is group primitive, and for some idempotent element $i \in M_L$, the $\mathcal{B}q$-orbit of $i$ with respect to $\eta_L$ contains some periodic element $isi \in M_L$; that is, $(isi)^n \neq (isi)^{n+1}$ for all $n \in \mathbb{N}_+$.
We fix such $i, s \in M_L$ for the remainder of this proof.
Since $i \in M_L$ is idempotent:
\[ (isi)^n = \underbrace{isi \cdot isi \cdots isi}_{n \text{-times}} = \underbrace{is \cdot is \cdots is}_{n \text{-times}} \cdot i = (is)^n i \]
for $n \in \mathbb{N}_+$.
Furthermore, since $(isi)^n \neq (isi)^{n+1}$ we know that $(is)^n i \neq (is)^{n+1} i$ for all $n \in \mathbb{N}_+$. 
Therefore $is \in M_L$ is a periodic element as $(is)^n \neq  (is)^{n+1}$ for all $n \in \mathbb{N}_+$.
We now consider four cases and show a contradiction for each.

\paragraph{Case 1, $\proot(\eta_L^{-1}(s)) = \emptyset$}
If this case holds, then $\eta_L^{-1}(s) = \{ \emptyword \}$ holds. 
This implies that $s$ is the identity element of $M_L$ and hence $(is)^n = i^n = i^{n+1} = (is)^{n+1}$, which contradicts our assumption that $is$ is periodic.

\paragraph{Case 2, $\proot(\eta_L^{-1}(i)) = \emptyset$}
Analogous to Case 1, it follows that $\eta_L^{-1}(i) = \{ \emptyword \}$.
Since $\{ \emptyword \} \in \mathcal{B}q$ by considering $\mathtt{a}^* \intersect \mathtt{b}^*$ with $\mathtt a \neq \mathtt b$, we know that $\eta_L^{-1}(s) \intersect \{ \emptyword \} \neq \emptyset$ by the definition of a $\mathcal{B}q$-orbit, see~\autoref{defn:C-orbit}.
Thus, $\emptyword \in \eta_L^{-1}(s)$ which implies that $s = \eta_L(\emptyword) = i$.
Hence, $i = s$ and consequently, $(is)^n = (ii)^n = i = (ii)^{n+1} = (is)^{n+1}$, which contradicts our assumption that $is$ is periodic.

\paragraph*{Case 3, $|\proot(\eta_L^{-1}(s)) \union \proot(\eta_L^{-1}(i))| = 1$}
Let $w \in \Sigma^+$ such that $\proot(\eta_L^{-1}(s)) = \{ w \} = \proot(\eta_L^{-1}(i))$.
Since $\eta_L^{-1}(i) \subseteq w^*$ is regular, we know that $\eta_L^{-1}(i) \in \mathcal{B}q$. 
Let $L' \df \eta_L^{-1}(i)$ and note $\eta_L^{-1}(i) \subseteq L'$.
Now, by the definition of a $\mathcal{B}q$-orbit, $\eta_L^{-1}(s) \intersect L' \neq \emptyset$.
Thus, $u \in \eta_L^{-1}(s) \intersect \eta_L^{-1}(i)$ for some $u \in \Sigma^*$.
This implies that $i = \eta_L(u) = s $.
Consequently, $i = s$, and hence $(is)^n = (ii)^n =(ii)^{n+1} = (is)^{n+1}$ as $i$ is idempotent.
This contradicts our assumption that $is \in M$ is periodic.

\paragraph*{Case 4, $|\proot(\eta_L^{-1}(s)) \union \proot(\eta_L^{-1}(i))| > 1$}
Let $u \in \eta_L^{-1}(i)$ and $v \in \eta_L^{-1}(s)$ such that $\proot(u) \neq \proot(v)$.
Note that $u$ and $v$ are both non-empty.
Since $i$ is idempotent, we have that $u u v, uv  \in \eta_L^{-1}(is)$, as $i i s = is$.
We know that $is \in M_L$ is periodic, and thus proving $\proot(uuv) \neq \proot(uv)$ would contradict our assumption that $\eta_L$ is group primitive.
Therefore, the rest of the proof of this case is dedicated to proving $\proot(uuv) \neq \proot(uv)$:
Working towards a contradiction, assume that $uv = r^i$ and $uuv = r^j$ for some $i,j \in \mathbb{N}_+$ and for some primitive word $r \in \Sigma^+$.
Notice that $(uv)^m = r^{2ij} = (uuv)^n$ where $m = 2j$ and $n = 2i$.
This implies that $v (uv)^{m-1} = uv (uuv)^{n-1}$.
Taking the prefix of length $|uv|$ from the left and right-hand side of the previous equality gives us $vu = uv$.
Proposition 1.3.2 from~\cite{lothaire1997combinatorics} states that $vu = uv$ implies $\proot(u) = \proot(v)$, which is a contradiction.

Observing that one of the above four cases must hold concludes this proof.
\end{proof}

Immediately from~\autoref{theorem:place-closing}, \autoref{lemma:SF-R-P} and~\autoref{lemma:group-prim-to-orbit}, we get the main result of this section:

\begin{thm}\label{thm:gp-sf}
If $L \subseteq \Sigma^*$ is group primitive, then $L \in \mathsf{SF}(\mathcal{R})$.
\end{thm}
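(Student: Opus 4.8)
The plan is simply to assemble the three ingredients developed in this section. Let $L\subseteq\Sigma^*$ be group primitive; by definition this means $L$ is regular and its syntactic morphism $\eta_L\colon\Sigma^*\to M_L$ is a group primitive morphism. First I would invoke \autoref{lemma:group-prim-to-orbit} to deduce that every $\mathcal{B}q$-orbit of $\eta_L$ is an aperiodic monoid. This is the step where the hypothesis is actually consumed, and all the real content is already packed into that lemma (the four-case analysis using idempotency, the periodicity bookkeeping, and the commuting-words argument via Lothaire's Proposition~1.3.2).

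Next I would check that $\mathcal{B}q$ meets the hypothesis of \autoref{theorem:place-closing}, i.e.\ that it is a prevariety: by construction $\mathcal{B}q$ is a class of regular languages that contains $\emptyset$ and $\Sigma^*$, is closed under union and complement, and is closed under left and right quotients by words $u^{-1}L$, $Lu^{-1}$ — exactly the closure conditions required. Hence \autoref{theorem:place-closing} applies with $\mathcal{C}=\mathcal{B}q$, and since all $\mathcal{B}q$-orbits of the syntactic morphism of $L$ are aperiodic, it gives $L\in\mathsf{SF}(\mathcal{B}q)$. Finally, \autoref{lemma:SF-R-P} states $\mathsf{SF}(\mathcal{R})=\mathsf{SF}(\mathcal{B}q)$, so $L\in\mathsf{SF}(\mathcal{R})$, which is the claim.

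I do not expect a genuine obstacle in this last step: the heavy lifting was done in \autoref{lemma:group-prim-to-orbit} (translating the group-primitive condition into aperiodicity of orbits) and in \autoref{lemma:SF-R-P} (replacing the non-prevariety $\mathcal{R}$ by the prevariety $\mathcal{B}q$ without changing the star-free closure). The only point worth stating explicitly in the write-up is that $\mathcal{B}q$ is a prevariety, so that \autoref{theorem:place-closing} is legitimately applicable; this is immediate from the recursive definition of $\mathcal{B}q$ together with the fact that regularity is preserved under the operations used to build it.
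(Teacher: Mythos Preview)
Your proposal is correct and follows essentially the same route as the paper: apply \autoref{lemma:group-prim-to-orbit} to get aperiodic $\mathcal{B}q$-orbits, then \autoref{theorem:place-closing} to obtain $L\in\mathsf{SF}(\mathcal{B}q)$, and finally \autoref{lemma:SF-R-P} to conclude $L\in\mathsf{SF}(\mathcal{R})$. The only difference is that you spell out why $\mathcal{B}q$ is a prevariety, whereas the paper takes this as evident from its definition.
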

\begin{proof}
Let $\eta_L \colon \Sigma^* \rightarrow M_L$ be the group primitive syntactic morphism of the regular language $L \subseteq \Sigma^*$.
By~\autoref{lemma:group-prim-to-orbit}, we know that all $\mathcal{B}q$-orbits with respect to $\mu$ are aperiodic monoids.
Invoking~\autoref{theorem:place-closing}, we know that $L \in \mathsf{SF}(\mathcal{B}q)$, and by~\autoref{lemma:SF-R-P} this implies $L \in \mathsf{SF}(\mathcal{R})$.
\end{proof}

Summarising the results from~\autoref{sec:sf-fc} and this section, we have that $\fc$ can define all the languages in $\mathsf{SF}(\mathcal{R})$, and that if a language is a group primitive language, then it can be expressed in $\mathsf{SF}(\mathcal{R})$.
An immediate corollary is that any group primitive language can be expressed in $\fc$.

\section{On Loop-Step Cycles and $\fc$-Definability}
\label{sec:fc-gp}
In this section, we show that any regular language
whose minimal DFA has a loop-step cycle (\autoref{defn:loop-step}) is \emph{not} expressible in $\fc$.

\subsection{Informal Proof Sketch}
A finite set $\{ a_1 < \cdots < a_n \} \subset \mathbb{N}$ is encoded as a word $\mathtt{1} \cdot \mathtt{0}^{a_1} \cdot \mathtt{1} \cdots \mathtt{1} \cdot \mathtt{0}^{a_n} \cdot \mathtt{1}$.
We shall use $\bar w_A \in \{ \mathtt{0}, \mathtt{1} \}^*$ to denote the word that encodes the set $A \subset \mathbb{N}$.
We then leverage a result from  Lynch~\cite{lynch1982sets} which provides sufficient  conditions for Duplicator to have a winning strategy for \emph{\ef games} over structures of the form $(\mathbb{N}, +, A)$ where $A \subseteq \mathbb{N}$, see~\autoref{thm:lynch}.
This can be converted to necessary conditions for Duplicator to win \ef-games over two words that encode finite sets of natural numbers (\autoref{lemma:evenOnes}).
Once this is established, we show that for a morphism $h \colon \{ \mathtt{0}, \mathtt{1} \}^* \rightarrow \Sigma^*$ where $h(\mathtt{0})$ and $h(\mathtt{1})$ are two distinct primitive words of equal length, we have that $\bar w_A \equiv_{k+20} \bar w_B$ implies $h(\bar w_A) \equiv_k h(\bar w_B)$, assuming some lower bounds on the sizes of elements in $A$ and $B$; see~\autoref{lemma:morphGeneralize}.
Our last step is to show that if $\mathcal{M}$ is a minimal DFA and has a loop-step cycle, then for every $k \in \mathbb{N}$ there are sets $A,B \subset \mathbb{N}$ and words $p,s$ such that $p \cdot h(\bar w_A) \cdot s \in \mathcal{L}(\mathcal{M})$ and $p \cdot h(\bar w_B) \cdot s \notin \mathcal{L}(\mathcal{M})$, but $p \cdot h(\bar w_A) \cdot s \equiv_k p \cdot h(\bar w_B) \cdot s$, which concludes the proof~(\autoref{thm:fc-lsc}).

\subsection{Some Background on \ef Games}\label{subsec:EFgames}
The following definitions concerning \ef games follow closely to the definitions given in Chapter~3 of Libkin~\cite{libkin2004elements}.
\ef-games are played by two players called \emph{Spoiler} and \emph{Duplicator}.
The board consists of two relational structures $\Structure{A}$ and $\Structure{B}$ over the same signature.
They play for a predetermined number $k \in \mathbb{N}$ of rounds, and each round is as follows:
\begin{itemize}
\item Spoiler picks a structure $\Structure{A}$ or $\Structure{B}$, and then picks some element of their chosen structure.
\item Duplicator responds by picking some element in the structure that Spoiler did not choose.
\end{itemize}

In order to define winning conditions, we first define a \emph{partial isomorphism}.

\begin{defi}[Partial Isomorphism]\label{defn:partialiso}
Let $\Structure{A}$ and $\Structure{B}$ be two relational structures over the same signature $\tau$, with the universes $A$ and $B$ respectively.
Let $\vec a = (a_1, \dots, a_n) \in A^n$ and $\vec b = (b_1, \dots, b_n) \in B^n$ .
Then $(\vec a, \vec b)$ defines a partial isomorphism between $\Structure{A}$ and $\Structure{B}$ if:
\begin{itemize}
\item $a_i = a_j \iff b_i = b_j$ for every $i,j \in [n]$;
\item $a_i = c^{\Structure{A}} \iff b_i = c^{\Structure{B}}$ for every constant symbol $c \in \tau$ and every $i \in [n]$.
\item $(a_{i_1}, \dots, a_{i_m}) \in R^\Structure{A} \iff (b_{i_1}, \dots, b_{i_m}) \in R^\Structure{B}$ for every relation symbol $R \in \tau$ with arity~$m$, and every sequence $(i_1, \dots, i_m) \in [n]^m$.
\end{itemize}
\end{defi}

Let $\tau$ be a relational signature with constant symbols $c_1, \dots, c_r$.
Let $\Structure{A}$ and $\Structure{B}$ be two $\tau$-structures, and let $\mathcal{G}$ denote the $k$-round \ef game over $\Structure{A}$ and $\Structure{B}$.
Then, the \emph{resulting tuples} of $\mathcal{G}$ are $\vec a = (a_1, \dots, a_k, c_1^\Structure{A}, \dots, c_r^\Structure{A})$ and $\vec b = (b_1, \dots, b_k, c_1^\Structure{B}, \dots, c_r^\Structure{B})$ where $a_i \in A$ is the element of $\Structure{A}$ chosen in round $i$, and $b_i \in B$ is the element of $\Structure{B}$ chosen in round $i$ for each $i \in [k]$.
Then, Duplicator wins $\mathcal{G}$ if and only if $(\vec a, \vec b)$ forms a partial isomorphism.

For a $k$-round game $\mathcal{G}$ over $\Structure{A}$ and $\Structure{B}$, we say that Duplicator has a \emph{winning strategy} for $\mathcal{G}$ if and only if there is a way for Duplicator to play such that no matter what Spoiler chooses, Duplicator can win $\mathcal{G}$.
Otherwise, Spoiler has a winning strategy for $\mathcal{G}$.
If Duplicator has a winning strategy for $\mathcal{G}$, then we write $\Structure{A} \equiv_k \Structure{B}$. 
If $\Structure{A}$ and $\Structure{B}$ are $\fc$-structures, then $\Structure{A}$ and $\Structure{B}$ are uniquely determined (up to isomorphism) by some $w,v \in \Sigma^*$.
Therefore, we can simply write $w \equiv_k v$.

Each first-order formula $\varphi$ has a so-called \emph{quantifier rank}, denoted by the function $\qrank$.
Let $\varphi, \psi$ be first-order formulas over the same signature.
We define $\qrank$ recursively as follows:
\begin{itemize}
\item If $\varphi \in \fc$ is an atomic formula, then $\qrank(\varphi) \df 0$,
\item $\qrank(\neg \varphi) \df \qrank(\varphi)$, 
\item $\qrank(\varphi \land \psi) = \qrank(\varphi \lor \psi) \df \mathsf{max} \left( \qrank(\varphi), \qrank(\psi) \right)$, and 
\item $\qrank(Q x \colon \varphi) \df \qrank(\varphi) + 1$ for any $x \in \vars$ and $Q \in \{\forall, \exists\}$.
\end{itemize}
Let $\fc(k)$ denote the set of sentences $\varphi \in \fc$ such that $\qrank(\varphi) \leq k$.
Note if $w \not\equiv_k v$ for $w,v \in \Sigma^*$, then there exists $\varphi \in \fc(k)$ where $w \models \varphi$ and $v \not\models\varphi$; see Theorem 3.3 of~\cite{thompson2023generalized}.

It is known that winning strategies for \ef games completely characterize the expressive power of first-order logic.
We state this correspondence using a slight adaptation of Theorem~6.19 from Immerman~\cite{immerman1998descriptive}.

\begin{thm}\label{thm:ehrenfeucht}
  Let $\mathcal{C}$ be a class of finite or infinite structures over a finite signature, and let $P \subseteq \mathcal{C}$.
$P$ is not expressible in first-order logic if and only if for all $k \in \mathbb{N}$, there exists $\Structure{A}, \Structure{B} \in \mathcal{C}$ such that:
$\Structure{A} \in P$ and $\Structure{B} \notin P$, and
$\Structure{A} \equiv_k \Structure{B}$.
\end{thm}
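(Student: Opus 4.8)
The plan is to obtain this as a packaging of two standard facts: the \ef theorem (that $\Structure{A}\equiv_k\Structure{B}$ holds if and only if $\Structure{A}$ and $\Structure{B}$ satisfy exactly the same first-order sentences of quantifier rank at most $k$), and the fact that, over a finite signature, for each $k\in\mathbb{N}$ there are only finitely many such sentences up to logical equivalence, so that each structure $\Structure{A}$ has a \emph{characteristic} (Hintikka) sentence $\chi^k_{\Structure{A}}$ of quantifier rank at most $k$ satisfying: for every $\Structure{B}$ over the same signature, $\Structure{B}\models\chi^k_{\Structure{A}}$ iff $\Structure{A}\equiv_k\Structure{B}$. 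Both are exactly Theorem~6.19 of Immerman~\cite{immerman1998descriptive} in the relational case; the ``adaptation'' is just to observe that allowing infinite structures in $\mathcal{C}$ and constant symbols in the signature changes nothing, since the number of $\equiv_k$-classes still depends only on $k$ and the finite signature.

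For the direction ``$\Longleftarrow$'' I would argue by contraposition. If $P$ is expressed within $\mathcal{C}$ by a first-order sentence $\varphi$, set $k\df\qrank(\varphi)$; the hypothesised pair $\Structure{A}\in P$, $\Structure{B}\in\mathcal{C}\setminus P$ with $\Structure{A}\equiv_k\Structure{B}$ then gives $\Structure{A}\models\varphi$ and $\Structure{B}\not\models\varphi$, contradicting the \ef theorem. So no such $\varphi$ exists, \ie $P$ is not first-order expressible within $\mathcal{C}$.

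For ``$\Longrightarrow$'' I would again use contraposition: suppose that for some fixed $k$ there is no pair $\Structure{A}\in P$, $\Structure{B}\in\mathcal{C}\setminus P$ with $\Structure{A}\equiv_k\Structure{B}$, \ie no $\equiv_k$-class meets both $P$ and $\mathcal{C}\setminus P$. Then take $\psi\df\bigvee_{\Structure{A}\in P}\chi^k_{\Structure{A}}$; since $\chi^k_{\Structure{A}}$ depends up to logical equivalence only on the $\equiv_k$-class of $\Structure{A}$ and there are finitely many such classes, $\psi$ is (equivalent to) a finite disjunction of quantifier-rank-$\leq k$ sentences, hence a genuine first-order sentence. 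One then checks that $\Structure{A}\models\psi$ for every $\Structure{A}\in P$, and that $\Structure{B}\models\psi$ for $\Structure{B}\in\mathcal{C}$ forces $\Structure{B}\equiv_k\Structure{A}$ for some $\Structure{A}\in P$, hence $\Structure{B}\in P$ by the assumed separation of $\equiv_k$-classes; so $\psi$ defines $P$ within $\mathcal{C}$, contradicting non-expressibility.

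The main obstacle is not this packaging but the two imported facts, especially the existence of Hintikka sentences and the finiteness of the number of $\equiv_k$-types: these rely essentially on the signature being finite. The only place to be careful is to confirm that this finiteness is available (it is, by hypothesis on the signature) and that constant symbols are harmless, contributing only finitely many extra atomic types to each partial-isomorphism type; the infinitude of individual structures in $\mathcal{C}$ plays no role, since $k$ is fixed throughout each game.
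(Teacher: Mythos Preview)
The paper does not give its own proof of this theorem; it simply cites it as a slight adaptation of Theorem~6.19 of Immerman~\cite{immerman1998descriptive}. Your proposal is correct and is exactly the standard textbook argument (the \ef theorem together with Hintikka sentences and finiteness of $\equiv_k$-types over a finite signature), so there is nothing to compare beyond noting one expository wrinkle: in your ``$\Longleftarrow$'' paragraph you announce contraposition but then conclude ``so no such $\varphi$ exists,'' which is the direct conclusion rather than the contrapositive's---the mathematics is fine, but the framing is really proof by contradiction of the forward implication, not contraposition.
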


As $\fc$-structures are simply a class of structures over the signature $\signature_\Sigma$, it follows that~\autoref{thm:ehrenfeucht} immediately holds for $\fc$ (see~\cite{thompson2023generalized} for more details on \ef games over $\fc$-structures).

\subsection{A Regular Language Not Expressible in $\fc$}
\label{subsec:AddToConcat}
For our next step, we utilize \emph{Lynch's Theorem}~\cite{lynch1982sets}, which gives sufficient conditions for Duplicator to have a winning strategy over certain structures with addition.
We then connect this with $\fc$ by encoding these structures as words, which allows us to adapt the winning strategies given by Lynch's Theorem to inexpressibility results for $\fc$.

By $(\mathbb{N}, +,  A)$, we denote the structure with the universe $\mathbb{N}$, an addition relation $z = x + y$, and a finite set $A \subseteq \mathbb{N}$.\footnote{We note that Lynch~\cite{lynch1982sets} did not require $A$ to be finite.}

\begin{defi}[Lynch~\cite{lynch1982sets}]\label{defn:funcs}
Let $d$, $f$, and $g$ be functions from $\mathbb{N}$ to $\mathbb{N}$ defined as follows:
\begin{itemize}
\item $d(0) = 5$, and $d(i+1) = (2^{i+3} + 1) d(i)$,
\item $f(0) = 1$, and $f(i+1) = 2f(i)^4$, and
\item $g(0) = 0$, and $g(i+1) = 2 f(i)^2 g(i) + f(i)!$.
\end{itemize}
\end{defi}

The exact definitions of $d$, $f$, and $g$ are not important for our results.
We only give their definitions for some intuition regarding the following:

\begin{defi}[Adapted from Lynch~\cite{lynch1982sets}]\label{defn:Lynchian}
Let $k \in \mathbb{N}$ and let $(p_i)_{i \in \mathbb{N}}$ be any sequence in $\mathbb{N}$ such that $p_0 = 0$ and 
\begin{equation}\label{eq:1}
p_{i+1} \geq 2^{k+3} f(k)^3 p_i + 2 f(k)^2 g(k) 
\end{equation}
for $i \in \mathbb{N}$, and $p_i \equiv p_j \pmod{f(k)!}$ for $i, j \in \mathbb{N}_+$.
Then, any set of the form $P_k = \{ p_i \mid i \geq 1 \}$ is a \emph{$k$-Lynchian set}.
\end{defi}

We are now ready to state Lynch's Theorem:
\begin{thm}[Lynch's Theorem~\cite{lynch1982sets}]\label{thm:lynch}
Let $k \in \mathbb{N}$ and let $P_k \subseteq \mathbb{N}$ be a $k$-Lynchian set.
For any finite sets $A, B \subseteq P_k$ where $d(k) < |A|, |B| $ we have that $(\mathbb{N}, +, A) \equiv_k (\mathbb{N}, +, B)$. 
\end{thm}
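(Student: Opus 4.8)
The plan is to invoke the \ef method of \autoref{thm:ehrenfeucht} and hand Duplicator a winning strategy for the $k$-round game on $(\mathbb{N},+,A)$ versus $(\mathbb{N},+,B)$. The strategy will be governed by a back-and-forth invariant parametrised by the number $r$ of rounds that remain. After $j$ rounds have been played, with tuples $\vec a = (a_1,\dots,a_j)$ chosen in $(\mathbb{N},+,A)$ and $\vec b = (b_1,\dots,b_j)$ chosen in $(\mathbb{N},+,B)$ (so $r = k - j$), the invariant states that the correspondence $a_i \mapsto b_i$ preserves (i) membership, i.e.\ $a_i \in A \iff b_i \in B$; (ii) the truth of every $\mathbb{Z}$-linear equality or inequality among the $a_i$'s whose coefficients are bounded in absolute value by $f(r)$; (iii) residues, i.e.\ $a_i \equiv b_i \pmod{f(r)!}$; and (iv) a scale-separation condition governed by $d(r)$ and $g(r)$: each $a_i$ either lies within a $d(r)$-controlled distance of a small-coefficient combination of the other $a_\ell$'s --- in which case the same combination and offset work for $b_i$ --- or it lies ``freely'' in a gap of the Lynchian grid $P_k$ that dwarfs everything already pinned down, and then the same holds of $b_i$. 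The base case $r = k$ is the empty tuple (the signature has no constant symbols), so the invariant is vacuous.

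For the inductive step, suppose $r \geq 1$ rounds remain and Spoiler moves; by symmetry of the hypotheses we may assume Spoiler picks a new element $a = a_{j+1}$ in $(\mathbb{N},+,A)$. If $a$ lies within the $d(r{-}1)$-scale of some combination $\sum_i c_i a_i$ with $|c_i| \leq f(r{-}1)$, Duplicator answers with $\sum_i c_i b_i$ shifted by the same offset in $(\mathbb{N},+,B)$; the level-$r$ invariant was tailored to control precisely such combinations, so the level-$(r{-}1)$ invariant survives, and $a \in A \iff b \in B$ because any point of $A$ whatsoever is a point of $P_k$, whose membership pattern near an already-determined point is already recorded. If instead $a$ is ``free'', its only game-relevant data are its residue modulo $f(r{-}1)!$, whether $a \in A$, and its rough position among the scales in play; here Duplicator must produce a fresh $b$ in $(\mathbb{N},+,B)$ with the same residue and with $b \in B \iff a \in A$. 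This is where the two hypotheses enter: since $B \subseteq P_k$ and all points of $P_k$ share a common residue modulo $f(k)!$ (hence modulo $f(r{-}1)!$) and are separated by gaps that grow as in \eqref{eq:1}, and since $|B| > d(k)$, there are always more than enough widely spaced points of $B$ to pick a fresh one --- avoiding the fewer than $k$ scales already implicated --- when $a \in A$, and always a gap between consecutive points of $B$ large enough to place $b \notin B$ when $a \notin A$.

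Maintaining the invariant through all $k$ rounds, the final tuples satisfy it at level $r = 0$, which forces preservation of the atomic formulas $z = x + y$ (a $\pm1$-coefficient linear relation) and $A(x)$; hence the resulting pair is a partial isomorphism, Duplicator wins, and $(\mathbb{N},+,A) \equiv_k (\mathbb{N},+,B)$. The main obstacle is the ``free'' case together with the arithmetic bookkeeping that makes the induction close: one must check that the exponential recurrences defining $d$, $f$, and $g$ always leave enough separation budget --- that a freshly placed point introduces no small-coefficient linear relation, no new congruence coincidence, and no scale collision with the points already chosen --- and that the Lynchian spacing \eqref{eq:1} absorbs all combinations of bounded coefficients the remaining rounds can generate. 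This delicate accounting is exactly what is carried out in Lynch~\cite{lynch1982sets}; the statement here is the specialisation of that argument to finite predicates $A, B$.
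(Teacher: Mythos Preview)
The paper does not prove this statement at all: \autoref{thm:lynch} is stated as a citation from Lynch~\cite{lynch1982sets} and used as a black box, with no proof or sketch given in the paper. Your sketch is therefore not competing with anything in the paper; it is an outline of the \ef argument from Lynch's original work, and as such it captures the right shape --- a back-and-forth invariant tracking bounded-coefficient linear relations, congruences modulo $f(r)!$, and scale separation, with the recurrences for $d$, $f$, $g$ calibrated so the invariant can be maintained. For the purposes of this paper a single sentence deferring to~\cite{lynch1982sets} would suffice, which is exactly what the paper does.
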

Informally, Lynch's Theorem states that as long as $A$ and $B$ are big enough, and only contain elements from a $k$-Lynchian set, we have that $(\mathbb N, +, A) \equiv_k (\mathbb N, +, B)$.
As an example, for any $k \in \mathbb{N}$ we can pick $A$ and $B$ such that $(\mathbb{N}, +, A) \equiv_k (\mathbb N, +, B)$ where $|A|$ is even, $|B|$ is odd.

Our next focus is on adapting \ef game strategies over structures with addition to $\fc$-structures.
This will provide us with new insights into limitations of the expressive power of $\fc$.
Namely, we shall prove that there exist some regular languages that cannot be expressed by~$\fc$. 

\begin{defi}\label{defn:setWord}
For every finite set $A \subset \mathbb{N}$, the canonical encoding of $A = \{ a_1 < \dots < a_n \}$ as a word $\bar w_A \in \{ \mathtt 0, \mathtt 1 \}^*$ is $\bar w_A \df \mathtt{1} \cdot \mathtt{0}^{a_1} \cdot \mathtt{1} \cdot \mathtt{0}^{a_2} \cdot \mathtt{1} \cdots \mathtt{1} \cdot \mathtt{0}^{a_n} \cdot \mathtt{1}$.
\end{defi}

Let $\mathsf{FO}[+,S]$ denote first-order logic over the class of structures of the form $(\mathbb{N}, + , S)$ where $S \subseteq \mathbb{N}$.
Using the canonical encoding of a set as a word, we can translate formulas over $\fc$-structures to $\mathsf{FO}[+,S]$.

\begin{lem}\label{lemma:evenOnes}
  There exists a function $r\colon\NN\to\NN$ such that for every $k\in\NN$ and all finite sets
  $A,B\subset \NN$ the following holds: If  $(\mathbb{N}, + , A) \equiv_{r(k)} (\mathbb{N}, +, B)$ then $\bar w_A \equiv_{k} \bar w_B$.  
\end{lem}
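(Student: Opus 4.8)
The plan is to prove this by a game-transfer argument: from a Duplicator winning strategy in the $r(k)$-round \ef game on $(\mathbb{N},+,A)$ and $(\mathbb{N},+,B)$, construct a Duplicator winning strategy in the $k$-round \ef game on the $\fc$-structures $\A_{\bar w_A}$ and $\A_{\bar w_B}$. The key observation is that an element of the $\fc$-structure $\A_{\bar w_A}$ is a factor $u \sqsubseteq \bar w_A$, and every such factor is essentially determined by (i) whether it starts/ends with $\mathtt 1$ or cuts into a block of $\mathtt 0$s, (ii) the pair of ``block indices'' it spans, and (iii) the two partial-block lengths at its ends. So a factor of $\bar w_A$ can be encoded by a bounded tuple of natural numbers together with a bounded amount of ``type'' information, and all of this data is definable in $\mathsf{FO}[+,A]$ over $(\mathbb{N},+,A)$ — crucially, the positions $a_1,\dots,a_n$ of $A$ are exactly the exponents appearing in $\bar w_A$, so the block structure of the word is recoverable from the addition structure with the predicate $A$.

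First I would set up the encoding precisely: fix a constant $c$ (independent of $k$) so that each factor $u$ of $\bar w_A$ is coded by a tuple $\vec{\iota}(u) \in \mathbb{N}^c$ over $(\mathbb{N},+,A)$, and conversely each admissible coding tuple yields a unique factor. Then I would verify that the $\fc$-atomic relations — the concatenation relation $\concrel(u_1,u_2,u_3)$ meaning $u_1 = u_2 \cdot u_3$, equality, and the constant symbols $\emptyword, \mathtt 0, \mathtt 1$ — are, after translation through $\vec\iota$, definable by fixed first-order formulas over $(\mathbb{N},+,A)$ whose quantifier rank is some constant $c'$ not depending on $k$. The concatenation relation is the delicate case: expressing ``the word coded by $\vec\iota(u_1)$ equals the concatenation of the words coded by $\vec\iota(u_2)$ and $\vec\iota(u_3)$'' requires a case analysis on whether the splitting point falls inside a $\mathtt 0$-block or at a $\mathtt 1$, and arithmetic relating the block indices and partial lengths, but all of this is first-order over $(\mathbb{N},+,A)$ with bounded quantifier rank. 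Setting $r(k) \df c' \cdot k + c''$ for suitable constants then makes Spoiler's single move in the $\fc$-game correspond, via the coded tuple, to at most $c'$ moves in the arithmetic game (Spoiler picks a factor of $\bar w_A$, i.e.\ a coding tuple; Duplicator uses the arithmetic strategy to respond to each of its $c$ components, plus the type information is determined by order/$A$-membership relations that the arithmetic partial isomorphism preserves).

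The formal engine is \autoref{thm:ehrenfeucht} together with the standard fact that if a relation is definable with quantifier rank $\ell$ in structure $\mathfrak{C}$ then the $\ell m$-round game on $\mathfrak{C},\mathfrak{C}'$ controls the $m$-round game on the definably-interpreted structures; concretely, a Duplicator strategy in the $r(k)$-round game on $(\mathbb{N},+,A)$ and $(\mathbb{N},+,B)$ induces one in the $k$-round game on $\A_{\bar w_A}$ and $\A_{\bar w_B}$, because the final partial isomorphism on the arithmetic side (which is preserved under $+$, $<$-definable relations, constants, and the $A$/$B$ predicates) pushes forward through the fixed interpretation formulas to a partial isomorphism of the $\fc$-structures. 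Since $w \equiv_k v$ is exactly Duplicator winning the $k$-round game on $\A_w$ and $\A_v$, this yields $\bar w_A \equiv_k \bar w_B$.

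The main obstacle I expect is bookkeeping the encoding of factors, especially the boundary cases — factors that are empty, a single symbol, that begin or end mid-block, or that span zero full blocks — and checking that the translated concatenation relation is uniformly first-order definable over $(\mathbb{N},+,A)$ with a quantifier rank that does \emph{not} grow with $k$. Care is also needed that the length of $\bar w_A$, the number of blocks, and the block positions are all recoverable inside $(\mathbb{N},+,A)$: the block positions are literally the elements of $A$, and cumulative sums $a_1+\cdots+a_j$ (needed to locate positions in the word) are first-order definable from $+$ and $A$, so this goes through, but verifying it is where the work lies. Once the interpretation is pinned down with a fixed quantifier rank, choosing $r$ and invoking the game-composition lemma is routine.
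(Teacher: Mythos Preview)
Your approach is essentially the paper's: both construct a first-order interpretation of the $\fc$-structure $\A_{\bar w_A}$ inside $(\mathbb{N},+,A)$, encoding each factor by a bounded tuple and showing that the universe, constants, and concatenation are all definable by fixed $\mathsf{FO}[+,S]$-formulas of bounded quantifier rank. The paper phrases this as a formula translation (contrapositive), you phrase it as a game transfer; these are the two standard sides of the same interpretation argument.

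There is one concrete pitfall in your sketch. You write that ``cumulative sums $a_1+\cdots+a_j$ (needed to locate positions in the word) are first-order definable from $+$ and $A$''. This is false for variable $j$: summing an unbounded initial segment of $A$ is not first-order over $(\mathbb{N},+,A)$. If your encoding of a factor records its \emph{position} in $\bar w_A$, or relies on block \emph{indices} $l,r$ as natural numbers, you will not be able to define the concatenation relation. The paper's encoding avoids this entirely by storing, for a factor $u=\mathtt{0}^m\,\mathtt{1}\,\mathtt{0}^{x_l}\,\mathtt{1}\cdots\mathtt{1}\,\mathtt{0}^{x_r}\,\mathtt{1}\,\mathtt{0}^n$, the tuple $(m,x_l,x_r,n)$ where $x_l,x_r$ are the \emph{values} in $A$ (the sizes of the first and last complete blocks), not their indices. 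Since the elements of $A$ are pairwise distinct, this pins down the factor, and the concatenation condition becomes ``the successor of $x_r$ in $A$ equals the predecessor of $x_{l'}$ in $A$ and equals $n+m'$'', which is first-order with bounded quantifier rank and no cumulative sums. Once you switch to this value-based encoding, the rest of your plan goes through exactly as you outlined.
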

\begin{proof}
  Consider an arbitrary $k\in\NN$ and arbitrary finite sets $A,B\subset\NN$.
Let $\Structure{A} \df (\mathbb{N}, +, A)$ and $\Structure{B} \df (\mathbb{N}, +, B)$.

To prove the result, we prove the contraposition, i.e., we show that if 
$\bar w_A \not\equiv_k \bar w_B$, then $\Structure{A} \not\equiv_{r(k)} \Structure{B}$, for a suitably chosen number $r(k)$ that does not depend on the sets $A,B$.

Assume that $\bar w_A \not\equiv_k \bar w_B$, that is, there exists an $\fc$-sentence
$\varphi \in \fc(k)$ such that $\bar w_A \models \varphi$ and $\bar w_B \not\models \varphi$.
In the following, we construct a first-order reduction that translates
a structure $(\NN,+,A)$ into the word $\bar w_A$, and which can thus be used to translate
$\varphi$ into a
sentence $\psi \in \mathsf{FO}[+,S]$ such that $\Structure{A} \models \psi$ and $\Structure{B} \not\models \psi$.
That is, using formulas in $\mathsf{FO}[+,S]$, we simulate all the behaviour of an $\mathsf{FC}$-formula over words that canonically encode a set (as in~\autoref{defn:setWord}). 
See Chapter 3 of Immerman~\cite{immerman1998descriptive} for more details on first-order reductions.
Finally, we will be done by letting $r(k)\deff \qr(\psi)$, since the quantifier rank of $\psi$ only depends on the first-order reduction and the quantifier rank $k$ of the formula $\varphi$.

We extend $\{+,S\}$-structures with a new constant $\bot$ which does not appear in the addition relation; clearly this does not change the expressive power of $\mathsf{FO}[+,S]$.
Let $\bar w_A \in \{ \mathtt 0, \mathtt 1 \}^*$ be the canonical encoding of a non-empty and finite set $A = \{ x_1 < x_2 < \dots < x_{|A|} \} \subset \mathbb{N}$.
For every factor $u \sqsubseteq \bar w_A$, we associate a tuple $t_u =  (a_1, a_2, a_3, a_4) \in (\mathbb{N} \union \{ \bot \})^4$:
\begin{itemize}
\item If $|u|_\mathtt{1} \geq 2$, then $u = \mathtt{0}^m \cdot \mathtt{1} \cdot \mathtt{0}^{x_l} \cdot \mathtt{1} \cdots \mathtt{1} \cdot \mathtt{0}^{x_r} \cdot \mathtt{1} \cdot \mathtt{0}^n$, where $1 \leq l \leq r \leq |A|$. Thus, we let $t_u \df (m, x_l, x_r, n)$.
\item If $|u|_\mathtt{1} = 1$, then $u = \mathtt{0}^m \mathtt{1} \mathtt{0}^n$. Thus, $t_u \df (m, \bot, \bot, n)$.
\item If $|u|_\mathtt{0} = 0$, then $u = \mathtt{0}^m$. Thus, $t_u \df (m, \bot, \bot, \bot)$.
\end{itemize}

\paragraph*{Universe}
Let $\psi_{\mathsf{univ}}(x_1, x_2, x_3, x_4)$ be a formula in $\mathsf{FO}[+,S]$ such that for any $\Structure{A}$, we have
\[
\psi_\mathsf{univ}(\Structure{A}) = \{ (a_1,a_2,a_3,a_4) \in  (\mathbb{N} \union \{ \bot \})^4 \mid (a_1, a_2, a_3, a_4) = t_u \text{ for some } u \sqsubseteq \bar w_A \}. 
\]
It is easy to see that $\psi_\mathsf{univ}(x_1,x_2,x_3,x_4)$ can be written in $\mathsf{FO}[+,S]$ using the following description (and sketch for the case where $|u|_\mathtt{a} =1$) of the formula:

To deal with the case where $|u|_\mathtt{1} = 1$, consider
\begin{multline*}
\mathsf{univ}_1(x_1, x_2, x_3, x_4) \df (x_2 \logeq \perp) \land (x_3 \logeq \perp) \land (x_4 \leq \mathsf{max}(A)) \land \\
\exists y \colon \Bigl( A(y) \land \neg \exists z \colon \bigl( A(z) \land (y < z < \mathsf{max}(A) ) \land (x_1 \leq y) \bigr) \Bigr).
\end{multline*}
The meaning of the top line of $\mathsf{univ}_1(x_1, x_2, x_3, x_4)$ is obvious.
The second line states that $x_1$ is smaller than (or equal to) the second largest value of $A$.

To deal with the case where $|u|_\mathtt{1} \geq 2$, the formula $\psi_\mathsf{univ}(x_1, x_2, x_3, x_4)$ states the following:
\begin{itemize}
\item $x_2, x_3 \in A$ and $x_2 < x_3$,
\item if $x_2 = \mathsf{min}(A)$ then $x_1 = 0$, otherwise $x_1$ is smaller than or equal to the largest element of $A$ less than $x_2$, and
\item if $x_3 = \mathsf{max}(A)$ then $x_4 = 0$, otherwise $x_4$ is smaller than or equal to the smallest element of $A$ larger than $x_3$.
\end{itemize}

Lastly, to deal with the case where $|u|_\mathtt{a} = 0$,  the formula $\psi_\mathsf{univ}(x_1, x_2, x_3, x_4)$ states that $x_2 = x_3 = x_4 = \perp$ and that $x_1 \leq \mathsf{max}(A)$.

\paragraph*{Constants}
Since the constant symbol $\mathtt 0$ is represented by the tuple $(1, \bot, \bot, \bot)$, let
\[ \psi_\mathtt{0}(x_1, x_2, x_3, x_4) \df (x_1 \logeq 1) \land \bigwedge_{2 \leq i \leq 4} (x_i \logeq \bot). \]

Since the constant symbol $\mathtt 1$ is represented by $(0, \bot, \bot, 0)$, let 
\[\psi_\mathtt{1}(x_1, x_2, x_3, x_4) \df (x_1 \logeq 0) \land (x_4 \logeq 0) \land (x_2 \logeq \bot) \land (x_3 \logeq \bot).\]

\paragraph*{Concatenation}
To simulate concatenation, we define the ternary relation $\oplus$ over elements of $\psi_\mathsf{univ}(\Structure{A})$.
We wish to have that $t_{w} = t_u \oplus t_v$ if and only if $w = uv$.
To realize this behaviour, we look at various cases. 
However, we shall only look at one case in detail, as the others follow analogously.
\begin{itemize}
\item Case 1, $|u|_\mathtt{1}, |v|_\mathtt{1} \geq 2$.
For this case, we have that $u = \mathtt{0}^{m} \mathtt{1} \mathtt{0}^{x_l} \mathtt{1} \cdots \mathtt{1} \mathtt{0}^{x_r} \mathtt{1} \mathtt{0}^n$ and $v = \mathtt{0}^{m'} \mathtt{1} \mathtt{0}^{x_{l'}} \mathtt{1} \cdots \mathtt{1} \mathtt{0}^{x_{r'}} \mathtt{1} \mathtt{0}^{n'}$. 
Then, $u \cdot v = \mathtt{0}^m \mathtt{1} \mathtt{0}^{x_l} \mathtt{1} \cdots \mathtt{0}^{x_{r'}} \mathtt{1} \mathtt{0}^{n'}$ is a factor of $\bar w_A$, if and only if $n+m' = x_{r+1} = x_{l'-1}$.
Thus 
\[ \underbrace{(m, x_l, x_{r'}, n')}_{t_{uv}} = \underbrace{(m,x_l,x_r,n)}_{t_u} \oplus \underbrace{(m', x_{l'}, x_{r'}, n')}_{t_v}  \iff n+m' = x_{r+1} = x_{l'-1} \]
It is clear that one could write an $\mathsf{FO}[+, S]$-formula to realize this behaviour:
Consider a formula with arity $12$, which first ensures the first four, the middle four, and the last four components satisfy $\psi_{\mathsf{univ}}(\Structure A)$.
The formula also ensures that none of the components are $\bot$, and the required arithmetic and 
equalities holds.

\item Case 2, $|u|_\mathtt{1} \geq 2$  and $|v|_\mathtt{1} = 1$. 
\[ (m,x_l,x_{r+1},n') = (m, x_l, x_r, n) \oplus (m', \bot, \bot, n') \iff n+m' = x_{r+1}. \]

\item Case 3, $|u|_\mathtt{1} = 1$  and $|v|_\mathtt{1} \geq 2$. Symmetric to Case 2.

\item Case 4, $|u|_\mathtt{1} \geq 2$ and $|v|_\mathtt{1} = 0$. 
\[ (m,x_l,x_r,n+m') = (m,x_l,x_r,n) \oplus (m',\bot, \bot, \bot) \iff n+m' \leq x_{r+1}. \]

\item Case 5, $|u|_\mathtt{1} = 0$ and $|v|_\mathtt{1} \geq 2$. Symmetric to Case 4.

\item Case 6, $|u|_\mathtt{1} = 1$ and $|v|_\mathtt{1} = 1$.  
\begin{align*}
& (m,n+m', n+m', n')  = (m,\bot,\bot,n) \oplus (m',\bot,\bot,n') \\
\iff & n+m' = x_i \in A \text{ where } m \leq x_{i-1} \text{ and } n' \leq x_{i+1}.
\end{align*}

\item Case 7, $|u|_\mathtt{1} = 1$ and $|v|_\mathtt{1} = 0$. 
\[ (m, \bot, \bot, n+m') =   (m, \bot, \bot, n) \oplus (m', \bot,\bot,\bot) \iff n+m' \leq \mathsf{max}(A). \]

\item Case 8, $|u|_\mathtt{1} = 0$ and $|v|_\mathtt{1} = 1$. Symmetric to Case 7.

\item Case 9, $|u|_\mathtt{1} = 0$ and $|v|_\mathtt{1} = 0$. 
\[ (m+m', \bot, \bot, \bot) = (m, \bot, \bot, \bot) \oplus (m', \bot, \bot, \bot) \iff m+m' \leq \mathsf{max}(A). \]
\end{itemize}
It is a straightforward exercise to define a $\mathsf{FO}[+,S]$-formula for each of the above cases (the formula can distinguish between the different cases by seeing which components are $\perp$, then the formula only needs to check that the required equalities, inequalities, and memberships in $A$ hold).
Combining the formulas for each case with disjunction results in a formula that has the correct behaviour for $\oplus$.
That is, it is easily observed that the above definition of $\oplus$ gives us $t_{w} = t_u \oplus t_v \iff w = uv$ for any $w,u,v \sqsubseteq \bar w_A$.

One can now rewrite $\varphi \in \fc(k)$ that separates $\bar w_A$ and $\bar w_B$ in a straightforward manner, using the above defined concepts, into a formula $\psi \in \mathsf{FO}[+,S]$ such that $\psi$ separates $\Structure{A}$ and $\Structure{B}$. In the literature, this is achieved by applying the ``interpretation lemma'' associated with the notion of first-order reductions. To keep the paper self-contained in this respect, instead of formulating and applying the ``interpretation lemma'', in the following we explicitly describe how $\psi$ is obtained from $\varphi$.
We define a mapping $\mathcal{T} \colon \fc \rightarrow \mathsf{FO}[+,S] $.

First, let $\ell \colon \vars \union \{ \mathtt{0}, \mathtt{1} \} \rightarrow (\Gamma \union \{ 0, 1, \bot \} )^4$ be a function that maps each $\fc$ variable $x \in \vars$ to a unique tuple of variables $\vec x \in \Gamma^4$ (where $\Gamma$ is a countably infinite set of variables that is disjoint from $\vars$), maps $\mathtt{0}$ to $(1,\bot,\bot,\bot)$ and maps $\mathtt{1}$ to $(0, \bot, \bot, 0)$.
Note that as $\ell(\mathtt 0)$ and $\ell(\mathtt 1)$ can easily be expressed with an $\mathsf{FO}[+,S]$-formula, we can treat them as new constant symbols.
Likewise, we treat $\oplus$ as a relational symbol as it can also be expressed in $\mathsf{FO}[+,S]$.
For $\varphi, \varphi' \in \fc$, let 
\begin{itemize}
\item $\mathcal{T}(Q x \colon \varphi) \df Q \vec x  \colon ( \psi_\mathsf{univ}(\vec x ) \land \mathcal{T}(\varphi) )$ for $Q \in \{\forall, \exists \}$ and $x \in \Xi$,
\item $\mathcal{T}(\varphi \land \varphi') \df \mathcal{T}(\varphi) \land \mathcal{T}(\varphi')$,
\item $\mathcal{T}(\varphi \lor \varphi') \df \mathcal{T}(\varphi) \lor \mathcal{T}(\varphi')$,
\item $\mathcal{T}(\neg \varphi) \df \neg \mathcal{T}(\varphi)$,
\item $\mathcal{T} (x \logeq y \cdot z) \df \bigl( \ell(x) \logeq \ell(y) \oplus \ell(z) \bigr)$.
\end{itemize}
Here, if $\vec x = (y_1, y_2, y_3, y_4)$ and $Q \in \{ \exists, \forall \}$, we use $Q \vec x$ as shorthand for $Q y_1, y_2, y_3, y_4$.

Then, for any $\varphi \in \fc$ such that $\bar w_A \models \varphi$ and $\bar w_B \not\models \varphi$, we have that $\Structure{A} \models \mathcal{T}(\varphi)$ and $\Structure{B} \not\models \mathcal{T}(\varphi)$.
Consequently, there exists some $\mathsf{FO}[+,S]$-formula $\psi$ that separates $\Structure{A}$ from $\Structure{B}$, which means that $\Structure{A} \not\equiv_{r(k)} \Structure{B}$ where $r(k) = \mathsf{qr}(\psi)$. 
This completes the proof of \autoref{lemma:evenOnes}.
\end{proof}

By combining \autoref{thm:lynch} and \autoref{lemma:evenOnes}, we obtain that there are regular languages that cannot be expressed in $\fc$.

\begin{thm}
$L\deff\setc{w\in \set{\mathtt{0},\mathtt{1}}^*}{|w|_{\mathtt{1}} \text{ is even}}$ is a regular language that is not $\fc$-definable. 
\end{thm}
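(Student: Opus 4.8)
The plan is to chain together Lynch's Theorem (\autoref{thm:lynch}), the word-encoding translation lemma (\autoref{lemma:evenOnes}), and the \ef characterization of first-order definability (\autoref{thm:ehrenfeucht}, which applies to $\fc$-structures since they form a class of structures over $\signature_\Sigma$). First note that $L$ is trivially regular: its minimal DFA has two states, where reading $\mathtt 1$ toggles between an accepting and a rejecting state and reading $\mathtt 0$ loops, analogously to \autoref{fig:even-a}. So the whole content of the statement is the inexpressibility part.

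By \autoref{thm:ehrenfeucht}, it suffices to show that for every $k \in \NN$ there are words $u \in L$ and $v \notin L$ with $u \equiv_k v$. Fix such a $k$ and set $k' \df r(k)$, where $r$ is the function from \autoref{lemma:evenOnes}. The next step is to fix a $k'$-Lynchian set $P_{k'}$ in the sense of \autoref{defn:Lynchian}; such a set exists because one can greedily build an increasing sequence $(p_i)_{i\geq 1}$ that lies entirely in a single residue class modulo $f(k')!$ and satisfies the lower-bound recurrence~\eqref{eq:1} (this is the one place a small routine verification is needed). Since $P_{k'}$ is infinite, we may choose finite subsets $A, B \subseteq P_{k'}$ of any cardinality we like; take $|A| \df 2\,d(k') + 1$ (odd) and $|B| \df 2\,d(k') + 2$ (even), so in particular $d(k') < |A|$ and $d(k') < |B|$, and both sets are nonempty because $d(k') \geq 5$.

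Now \autoref{thm:lynch} gives $(\NN, +, A) \equiv_{k'} (\NN, +, B)$, and because $k' = r(k)$, \autoref{lemma:evenOnes} yields $\bar w_A \equiv_k \bar w_B$. The decisive observation is the parity bookkeeping in the encoding of \autoref{defn:setWord}: $\bar w_A$ contains exactly $|A|+1$ occurrences of $\mathtt 1$, so $|\bar w_A|_{\mathtt 1} = 2\,d(k')+2$ is even and $\bar w_A \in L$, whereas $|\bar w_B|_{\mathtt 1} = |B|+1 = 2\,d(k')+3$ is odd and $\bar w_B \notin L$. Thus $u \df \bar w_A$ and $v \df \bar w_B$ are the desired pair for this $k$; since $k$ was arbitrary, \autoref{thm:ehrenfeucht} lets us conclude that $L$ is not $\fc$-definable.

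The argument is essentially a direct composition of the cited results, so there is no single hard step; the only points requiring care are the existence of a $k'$-Lynchian set of sufficient size (routine, by a greedy construction respecting the congruence constraint) and the fact that the encoding $A \mapsto \bar w_A$ shifts the $\mathtt 1$-count by exactly one, which is what lets us flip membership in $L$ while keeping the two automaton-free arithmetic structures $\equiv_{k'}$-equivalent.
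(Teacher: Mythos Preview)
Your proof is correct and follows essentially the same route as the paper: fix $k$, set $k'\df r(k)$, pick finite $A,B\subseteq P_{k'}$ of opposite-parity cardinalities exceeding $d(k')$, apply Lynch's Theorem and \autoref{lemma:evenOnes} to get $\bar w_A\equiv_k\bar w_B$, and use the $|S|{+}1$ count of $\mathtt 1$'s to place one word in $L$ and the other outside. The only cosmetic difference is that the paper takes $|A|$ even and $|B|$ odd (so $\bar w_B\in L$, $\bar w_A\notin L$), whereas you swap the parities; this is immaterial.
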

\begin{proof}
  Clearly, $L$ is a regular language. In order to show that $L$ is not $\fc$-definable,
  by \autoref{thm:ehrenfeucht} it suffices to construct, for every $k\in\NN$, two words $u_k\in L$ and $v_k\not\in L$ such that $u_k\equiv_k v_k$.

  Let us fix an arbitrary $k\in\NN$, and let $k'\deff r(k)$ be the number provided by
  \autoref{lemma:evenOnes}.
  Let $P_{k'} \subseteq \mathbb{N}$ be a $k'$-Lynchian set.
 From \autoref{thm:lynch} we know that
for finite sets $A, B \subseteq P_{k'}$ where $d(k') < |A|, |B|$ we have $(\mathbb{N}, +, A) \equiv_{k'} (\mathbb{N}, +, B)$.
Thus, the only restriction on $|A|$ and $|B|$ is that they are large enough.
Here, we choose finite sets $A,B\subseteq P_{k'}$ where where $|A|$ is even and $|B|$ is odd and where $d(k') < |A|, |B|$.
It follows from \autoref{thm:lynch} that $(\mathbb{N}, +, A) \equiv_{k'} (\mathbb{N}, +, B)$.
As $k'=r(k)$, by \autoref{lemma:evenOnes} we obtain: $\bar w_A \equiv_{k} \bar w_B$, where $\bar w_A, \bar w_B \in \{\mathtt 0, \mathtt 1 \}^*$ are the canonical encodings of $A$ and $B$ respectively.

Notice since $|A|$ is even and $|B|$ is odd, 
$|\bar w_A|_{\mathtt 1}$ is odd and $|\bar w_B|_{\mathtt 1}$ is even.
Thus, $\bar w_B \in L$, $\bar w_A \not\in L$ and $\bar w_B\equiv_k \bar w_A$. 
\end{proof}

The technique used in the above proof (for different choices of the sets $A,B$)
actually provides us with a range of further regular languages that are not $\fc$-definable.
In particular, 
the connection between addition and concatenation provided by
\autoref{lemma:evenOnes} 
is a key step that we will also use in the following subsections in order to prove that the language of \emph{any} minimal DFA that has a loop-step cycle is not $\fc$-definable.

\subsection{Generalizing With Morphisms}
Our next step is to generalize~\autoref{lemma:evenOnes} using morphisms of the form $h \colon \{ \mathtt 0, \mathtt 1 \}^* \rightarrow \Sigma^*$.
Informally, our goal is to show that $\bar w_A \equiv_k \bar w_B$ implies $h(\bar w_A) \equiv_{k'} h(\bar w_B)$, for a specific type of morphism $h$.

We call a morphism $h \colon \{ \mathtt 0, \mathtt 1 \}^* \rightarrow \Sigma^*$ a \emph{uniform morphism} if $h(\mathtt 0)$ and $h(\mathtt 1)$ are two distinct primitive words of equal length.

Before moving on, let us state a folklore result, see Section 2.2, Chapter 6 of~\cite{handbookOfFL} for more details of~\autoref{lemma:internal}:
\begin{lem}[Folklore]\label{lemma:internal}
A word $w \in \Sigma^+$ is an internal factor of $ww$ if and only if $w$ is imprimitive.
\end{lem}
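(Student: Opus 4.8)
# Proof Proposal for Lemma~\ref{lemma:internal}

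The plan is to prove both directions by reasoning about the structure of occurrences of $w$ inside $ww$.

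First, for the direction ``imprimitive $\Rightarrow$ internal factor'': suppose $w = u^k$ with $u \in \Sigma^+$ and $k \geq 2$. Then $ww = u^{2k}$, and the occurrence of $w = u^k$ starting at position $|u|$ (i.e., shifted by one copy of $u$) is an occurrence with a nonempty prefix $u$ before it and a nonempty suffix $u^{k-1}$ after it, since $1 \leq k \leq 2k-1$. Hence $w$ is an internal factor of $ww$. This direction is a short direct calculation.

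The harder direction is ``internal factor $\Rightarrow$ imprimitive.'' Suppose $w$ is an internal factor of $ww$, so $ww = x w y$ with $x, y \in \Sigma^+$. Since $|xwy| = 2|w|$, we have $|x| + |y| = |w|$, so both $0 < |x| < |w|$; let $p \df |x|$. The occurrence of $w$ starting at position $p$ in $ww$ gives us, by comparing overlapping letters, the equation $w = w[p{+}1..|w|]\, w[1..p]$ when $p \leq |w|$; more precisely, writing $w = w_1 w_2$ with $|w_1| = p$, the fact that $x w$ is a prefix of $ww$ forces $w_1 w = x w$, hence $w$ starting at offset $p$ matches, which yields $w = w_2 w_1$ together with $w_1 = w_2$... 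I should instead invoke the standard Fine–Wilf / conjugacy argument: the occurrence at offset $p$ means $w$ has period $p$ (in the sense $w[i] = w[i+p]$ whenever both defined within the length-$2|w|$ window), and reading off the two decompositions gives $w_1 w_2 = w_2 w_1$ where $w = w_1 w_2$, $|w_1| = p$. Since $w_1, w_2 \in \Sigma^+$, commutation $w_1 w_2 = w_2 w_1$ implies (by Proposition~1.3.2 of~\cite{lothaire1997combinatorics}, the same commuting-words result used elsewhere in this paper) that $w_1 = t^a$ and $w_2 = t^b$ for a common word $t$ and $a, b \geq 1$. Then $w = t^{a+b}$ with $a + b \geq 2$, so $w$ is imprimitive.

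The main obstacle is being careful in the second direction about extracting the clean equation $w_1 w_2 = w_2 w_1$ from ``$w$ occurs at a strictly internal position of $ww$'': one must check that $x$ being a proper nonempty prefix of $w$ (equivalently $p < |w|$) lets us align $x w$ as a prefix of $w w$ and peel off $w_1 = x$, then read $w = w_2 w_1$ from the remaining letters. Since everything reduces to the already-cited commuting-words proposition, the argument is short once this alignment bookkeeping is done; I would state it cleanly rather than grinding through indices.
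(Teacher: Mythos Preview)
Your proof is correct and follows the standard argument for this folklore result. The paper itself does not provide a proof of this lemma; it merely states the result and refers the reader to Section~2.2, Chapter~6 of~\cite{handbookOfFL} for details. Your write-up, once cleaned of the exploratory false starts, gives exactly the classical proof: for the nontrivial direction, from $ww = xwy$ with $x,y\in\Sigma^+$ one factors $w = w_1w_2$ with $w_1 = x$, reads off $w = w_2w_1$ by comparing prefixes of length $|w|$ on both sides of $w_2w_1w_2 = wy$, and then applies the commuting-words proposition (Proposition~1.3.2 of~\cite{lothaire1997combinatorics}) to conclude $w$ is a nontrivial power. This is precisely the argument one finds in standard references, so there is nothing to contrast.
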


Next, we introduce the following concept:

\begin{defi}
Let $\Sigma$ be any alphabet where $|\Sigma| \geq 2$.
Let $h \colon \{ \mathtt 0, \mathtt 1 \}^* \rightarrow \Sigma^*$ be a uniform morphism and let $w \in \{\mathtt 0, \mathtt 1 \}^*$.
Then, for any $u \sqsubseteq h(w)$, we say that $(x, y, z) \in \Sigma^* \times \{\mathtt 0, \mathtt 1 \}^* \times \Sigma^*$
is a \emph{core factorization of $u$ with respect to $w$ and $h$} if $u = x \cdot h(y) \cdot z$ and:
\begin{itemize}
\item $x \ssuff h(c)$ for some $c \in \{\mathtt 0, \mathtt 1\}$, 
\item $y \sqsubseteq w$, and
\item $z \spref h(c')$ for some $c' \in \{ \mathtt 0, \mathtt 1 \}$.
\end{itemize}
If $(x, y, z)$ is a core factorization of $u$ with respect to $w$ and $h$, then $y \sqsubseteq w$ is a \emph{pre-image core of $u$} with respect to $w$ and~$h$.
\end{defi}
When $w \in \Sigma^*$ and $h \colon \{ \mathtt 0, \mathtt 1 \}^* \rightarrow \Sigma^*$ are clear from context, we simply say $(x,y,z)$ is a core factorization of $u$, and $y$ is a pre-image core of $u$.

\begin{exa}
Let $h \colon \{ \mathtt{0}, \mathtt{1} \}^* \rightarrow \{ \mathtt{a}, \mathtt{b} \}^*$ with $h(\mathtt{0}) = \mathtt{aba}$ and $h(\mathtt{1}) = \mathtt{bb}$.
Then, $h(\mathtt{0100}) = \mathtt{aba} \cdot \mathtt{bb} \cdot \mathtt{aba} \cdot \mathtt{aba}$.
It follows that $(\mathtt{a}, \mathtt{1}, \mathtt{a})$ is a core factorization of $\mathtt{a bb a}$.
\end{exa}

Recall~\autoref{defn:setWord} of a canonical encoding of a finite set $A \subset \mathbb{N}$ as a word $\bar w_A \in \{ \mathtt 0, \mathtt 1 \}^*$.
As a quick example, the set $A = \{1,2,4 \}$ is encoded as $ \bar w_A = \mathtt 1 \cdot \mathtt{0} \cdot \mathtt {1} \cdot \mathtt{0}^2 \cdot \mathtt{1} \cdot \mathtt{0}^4 \cdot \mathtt{1}$.

\begin{lem}\label{lemma:unique-core}
Let $\Sigma$ be any alphabet where $|\Sigma| \geq 2$.
Let $h \colon \{ \mathtt 0, \mathtt 1 \}^* \rightarrow \Sigma^*$ be a uniform morphism.
Let $\bar w_A \in \{\mathtt 0, \mathtt 1 \}^*$ be the canonical encoding of a finite set $A \subset \mathbb{N}$ where $\mathsf{min}(A) \geq 2 |h(\mathtt 1)|$.
Then, for any $u \sqsubseteq h(\bar w_A)$ where $|u|> 10 \cdot |h(\mathtt 1)|$, the pre-image core of $u$ is unique.
\end{lem}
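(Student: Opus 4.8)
The plan is to argue by contradiction. Suppose $u \sqsubseteq h(\bar w_A)$ with $|u| > 10 \cdot \mathsf{max}(|h(\mathtt 0)|,|h(\mathtt 1)|)$ has two core factorizations $(x_1,y_1,z_1)$ and $(x_2,y_2,z_2)$ with $y_1 \neq y_2$. Write $g_0 \df h(\mathtt 0)$, $g_1 \df h(\mathtt 1)$, and $\ell \df \mathsf{max}(|g_0|,|g_1|)$. First I would record the basic facts that drive everything: by \autoref{lemma:code-obs} and \autoref{prop:morphCode} the set $\{g_0,g_1\}$ is a code and $h$ is injective; $\bar w_A$ has the shape $\mathtt 1 \mathtt 0^{a_1} \mathtt 1 \cdots \mathtt 1 \mathtt 0^{a_n} \mathtt 1$ with every $a_i \geq \mathsf{min}(A) \geq 2|g_1|$, so in $h(\bar w_A)$ every $g_1$ stemming from a genuine $\mathtt 1$ is flanked on each side (except the two outermost) by a run of at least $2|g_1|$ copies of $g_0$; and in any core factorization $u = x \cdot h(y) \cdot z$ one has $|x|,|z| < \ell$, hence $|h(y)| > 8\ell$, so $y$ is long and nonempty. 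By symmetry assume $|x_1| \leq |x_2|$.

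The technical core is a phase-rigidity statement obtained from \autoref{lemma:internal}: since $g_0$ is primitive, $g_0$ occurs in a power $g_0^k$ only at positions that are multiples of $|g_0|$. From this I would establish a landmark lemma: if $y' \sqsubseteq \bar w_A$ contains at least two occurrences of $\mathtt 1$, then $h(y')$ is not a factor of any $g_0^k$ --- any such embedding would place some occurrence of $g_1$ at a multiple of $|g_0|$ inside $g_0^\omega$, which makes $g_0$ a prefix or suffix of $g_1$ when $|g_1| \geq |g_0|$, and $g_1$ a prefix or suffix of $g_0$ when $|g_1| < |g_0|$, in every case contradicting the bifix hypothesis. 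A sharper version, proved by a Fine--Wilf-style periodicity argument on an overlap window of length at least $|g_0| + |g_1|$, shows that a $g_1$ can sit inside a $g_0$-run only when $\proot(g_1)$ is a cyclic rotation of $g_0$, and even then cannot be extended to include an adjacent full $g_0$-block of length $\geq 2|g_1|$. The upshot is that a genuine $g_1$-landmark together with its two surrounding $g_0$-runs occurs in $h(\bar w_A)$ only at the expected positions; this is the step where $g_0$ primitive, $\proot(g_1) \neq g_0$, the bifix property, and $\mathsf{min}(A) \geq 2|h(\mathtt 1)|$ are all used at once.

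With these tools in hand I would split into two cases. Case A, $u$ lies entirely inside one $g_0$-run of $h(\bar w_A)$: then $h(y_1)$ and $h(y_2)$ are factors of a power $g_0^k$, so by the landmark lemma neither $y_i$ contains two $\mathtt 1$'s; a $y_i$ with exactly one $\mathtt 1$ is forced to equal $\mathtt 1$ (again by the bifix property plus phase rigidity) and would satisfy $|h(y_i)| = |g_1| \leq \ell$, contradicting $|h(y_i)| > 8\ell$; hence $y_1, y_2 \in \mathtt 0^*$, and a short argument combining phase rigidity, the bifix property, and Fine--Wilf forces the two exponents to coincide, so $y_1 = y_2$. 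Case B, $u$ meets a genuine $g_1$-landmark: then each $h(y_i)$ contains at least one genuine $g_1$-block flanked by full $g_0$-blocks, and by the sharper landmark lemma both factorizations must cut $u$ at exactly the same landmark positions inside $u$; the $g_0$-exponents between consecutive landmarks are then determined by lengths (using primitivity of $g_0$), and the two ragged ends agree because $x_1$ and $x_2$ are the unique proper suffixes of $g_0$ or of $g_1$ of the relevant lengths compatible with $u$, and dually for $z_1$ and $z_2$. In either case $y_1 = y_2$, contradicting our assumption.

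I expect the landmark lemma, and especially its sharper version, to be the main obstacle: carefully ruling out that a genuine $g_1$-landmark together with the $2|h(\mathtt 1)|$ copies of $h(\mathtt 0)$ guaranteed on each side re-embeds in $h(\bar w_A)$ at a shifted position. That is where primitivity of $h(\mathtt 0)$, $\proot(h(\mathtt 1)) \neq h(\mathtt 0)$, the bifix hypothesis, and $\mathsf{min}(A) \geq 2|h(\mathtt 1)|$ must be invoked simultaneously, and where the slack in the threshold $|u| > 10 \cdot \mathsf{max}(|h(\mathtt 0)|,|h(\mathtt 1)|)$ is consumed; the argument itself is a periodicity computation of Fine--Wilf type resting on \autoref{lemma:internal}.
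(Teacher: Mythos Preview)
Your plan is a genuinely different route from the paper's, and Case~A is essentially right (the last step does go through: phase rigidity gives $|x_1|\equiv|x_2|\pmod{|g_0|}$, and bifix rules out $x_i=s\cdot g_0^j$ with $j\geq 1$ since that would make $g_0$ a suffix of $g_1$). The paper instead splits by whether $|x_1|=|x_2|$. When equal, bifix alone gives the contradiction. When unequal, the paper aligns the two block decompositions of $u$, shows (via bifix) that their block-boundary positions are disjoint, and then introduces the notion of a boundary being ``$c$-covered'' by a block of the other decomposition: a $\mathtt{0}\mathtt{0}$-boundary that is $\mathtt{0}$-covered exhibits $g_0$ as an internal factor of $g_0g_0$, contradicting primitivity via \autoref{lemma:internal}. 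Finally the long $\mathtt 0$-runs from $\min(A)\geq 2|h(\mathtt 1)|$ force such a $\mathtt 0$-covered boundary within the first $9\ell$ characters, which is where the constant~$10$ is spent.

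Your Case~B, however, has a real gap. First, your split is phrased via an \emph{occurrence} of $u$ in $h(\bar w_A)$, but a core factorization is intrinsic to the \emph{word} $u$; nothing you have said prevents $y_2\in\mathtt 0^*$ even when the chosen occurrence of $u$ straddles a genuine landmark, so the assertion ``each $h(y_i)$ contains at least one genuine $g_1$-block flanked by full $g_0$-blocks'' needs proof. Second, your sharper landmark lemma is misstated: $g_1$ can sit inside a $g_0$-run without $\proot(g_1)$ being a rotation of $g_0$ (take $g_0=\mathtt{abc}$, $g_1=\mathtt{bcab}$, which is bifix with $g_0$ primitive and $\proot(g_1)\neq g_0$, yet $g_1\sqsubseteq g_0^2$). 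The correct statement---and what you actually need---is that neither $g_0 g_1$ nor $g_1 g_0$ is a factor of any $g_0^k$; this follows from phase rigidity plus bifix, and is essentially your basic landmark lemma. With that in hand Case~B can be completed, but the step ``both factorizations must cut $u$ at exactly the same landmark positions'' still needs an argument linking the landmark seen in the occurrence to a $\mathtt 1$ in each $y_i$; the paper's covering argument handles this uniformly without your A/B dichotomy.
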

\begin{proof}
Let $a \df h(\mathtt 0)$, $b \df h(\mathtt 1)$, and $n \df |a|=|b|$.
Since $h(\mathtt 0) \neq h(\mathtt 1)$ but $|h(\mathtt 0)|= |h(\mathtt 1)|$, the morphism $h$ is injective.
We shall show that assuming $x \cdot h(y) \cdot z=x' \cdot h(y') \cdot z'$ are two distinct core factorizations of $u$ results in a contradiction. 
Since $x,x'$ are proper suffixes of elements of $\{ a, b \}$, and $z,z'$ are proper prefixes of elements of $\{a,b \}$, we have
$0\leq |x|,|x'|,|z|,|z'|<n$.

\paragraph{Case 1, $x = x'$:}
It follows that $h(y) \cdot z = h(y') \cdot z'$ and thus $|h(y)| - |h(y')| = |z'| - |z|$. 
The right-hand side has an absolute value which is strictly smaller than $n$, while the left-hand side is a multiple of $n$. 
This only holds when both are zero and thus $|z| = |z'|$.
Therefore, $x=x'$, $z=z'$, and $h(y)=h(y')$.
Since $h$ is injective, we obtain $y=y'$.

\paragraph{Case 2, $x \neq x'$:} 
Suppose, without loss of generality, that $x' = x \cdot x_2$ for some $x_2 \in \Sigma^+$.
Since $x$ and $x'$ are proper suffixes of $h(0)$ or $h(1)$, we have that 
$1 \leq |x_2| < n$.
Thus, $h(y) \cdot z = x_2 \cdot h(y') \cdot z'$.
Since $|u| > 10n$, we have that $|y| \geq 9$.
Moreover, $y$ contains an occurrence of $\mathtt{0}^3$ since $\mathsf{min}(A) \geq 2n$.
Therefore, $y = y_1 \cdot 0^3 \cdot y_2$ for some $y_1, y_2 \in \{ \mathtt{0}, \mathtt{1} \}^*$.

\begin{figure}
    \centering
    \begin{tikzpicture}
        \draw[draw=black] (-4, 0) rectangle ++(8,1);
        \draw[draw=black] (-4, -1) rectangle ++(8,1);

        \draw[draw=black] (-0.5, 0) -- (-0.5, 1);
        \draw[draw=black] (0.5, 0) -- (0.5, 1);
        \draw[draw=black] (-1.5,0) -- (-1.5, 1);
        \draw[draw=black] (1.5,0) -- (1.5, 1);
        
        \node(1) at (-1, 0.5) {$h(\mathtt 0)$};
        \node(2) at (1, 0.5) {$h(\mathtt 0)$};
        \node(3) at (0, 0.5) {$h(\mathtt 0)$};

        \node(5) at (-2.5, 0.5) {$h(y_1)$};

        \node(6) at (2.5, 0.5) {$\cdots$};        

        \draw[draw=black] (0, -1) -- (0,0);
        \draw[draw=black] (-1, -1) -- (-1,0);
        \draw[draw=black] (1, -1) -- (1,0);

        \draw[draw=black] (-3,-1) -- (-3,0);

        \node(4) at (-0.5, -0.5) {$h(s)$};
        \node(5) at (0.5, -0.5) {$h(t)$};
        \node(6) at (-3.5, -0.5) {$x_2$};
        \node(7) at (-2, -0.5) {$h(y_1')$};
        \node(6) at (2.5, -0.5) {$\cdots$};

    \end{tikzpicture}   
    \caption{Illustration of the second case of~\autoref{lemma:unique-core}}
    \label{fig:proof-illustration}
\end{figure}

Comparing lengths in $h(y) \cdot z = x_2 \cdot h(y') \cdot z'$, we get 
$|h(y)| - |h(y')| = |x_2| + |z'| - |z|$.
Notice that $|h(y)| - |h(y')| = n |y| - n |y'|$ and thus $n |y| - n |y'| = |x_2| + |z'| - |z|$.
Since $0 < |x_2| < n$ and $0 \leq |z|, |z'| < n$, it follows that
$n |y| - n |y'| \leq  2n-2$ and thus $|y'| \geq |y| -1 $.

Moreover, since $|y| \geq |y_1| + 3$, we get
$|y'| \geq |y_1| + 2$.
Thus, we may factorize $y'$ as $y_1' \cdot s \cdot t \cdot y_2'$ where $s, t \in \{ \mathtt{0}, \mathtt{1} \}$ and $|y_1'| = |y_1|$. 
Hence,
\[
  \underbrace{h(y_1) \cdot h(0)^3 \cdot h(y_2)}_{= h(y)} \cdot z
  \ = \
  x_2 \cdot \underbrace{h(y_1') \cdot h(s) \cdot h(t) \cdot
    h(y_2')}_{= h(y')} \cdot z' .
\]
Note that on the right hand side, the factor $h(s)\cdot h(t)$ starts
directly right to the $|x_2 \cdot h(y_1')|$-th letter of $x_2 \cdot
h(y') \cdot z'$, and on the left hand side, the factor $h(0)^3$ starts
directly right to the $|h(y_1)|$-th letter of $h(y)$.
Since $|x_2\cdot h(y_1')|=|x_2|+n|y_1'|=|x_2|+n|y_1|=|x_2|+|h(y_1)|$
and $1\leq|x_2|<n$, we obtain that $h(s)\cdot h(t)$ is an internal
factor of $h(0)^3$;  see~\autoref{fig:proof-illustration} for an illustration.
Moreover, $\bar w_A$ does not contain $\mathtt{11}$ as a factor,
and hence we have
$s = \mathtt 0$ or $t = \mathtt 0$.
This implies that $h(\mathtt 0)$ is an internal factor of $h(\mathtt
0) \cdot h(\mathtt 0)$ which contradicts the fact that $h(\mathtt 0)$
is primitive (cf.~\autoref{lemma:internal}).
\end{proof}

For our next step, we utilise a basic lemma from~\cite{thompson2023generalized}.
Informally, this result states that if Spoiler picks a short factor (with respect to the number of remaining rounds), then Duplicator must respond with the identical factor (or lose).

\begin{lem}[Thompson and Freydenberger~\cite{thompson2023generalized}]\label{lemma:consistentStrats}
Let $\Structure{A}_w$ and $\Structure{B}_v$ be $\signature_\Sigma$-structures that represent $w \in \Sigma^*$ and $v \in \Sigma^*$, where $\Structure{A}_w \equiv_k \Structure{B}_v$.
Let $\vec a = (a_1, a_2, \dots, a_{k+|\Sigma|+1})$ and $\vec b= (b_1, b_2, \dots, b_{k+|\Sigma|+1})$ be the tuple resulting from a $k$-round game over $\Structure{A}_w$ and $\Structure{B}_v$ where Duplicator plays their winning strategy.
If $|a_r|< k - r + 1$ or $ |b_r| < k - r + 1$ for some $r \in [k]$, then $b_r = a_r$. 
\end{lem}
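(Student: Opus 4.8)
The plan is to argue by contradiction. Suppose, without loss of generality, that $|a_r| < k-r+1$ (the case where instead $|b_r| < k-r+1$ is symmetric, because the \ef game and the notion of partial isomorphism are invariant under exchanging the roles of $\Structure{A}_w$ and $\Structure{B}_v$, and a winning strategy for Duplicator transfers accordingly). Assume moreover, towards a contradiction, that $a_r \neq b_r$. The key structural fact I would use is that, since Duplicator follows a winning strategy, the position $(a_1,\dots,a_r)$ versus $(b_1,\dots,b_r)$ reached after round $r$ is one from which Duplicator still wins the remaining $k-r$ rounds against every continuation by Spoiler. Hence it suffices to exhibit a sequence of Spoiler moves for rounds $r+1,\dots,k$ after which no response of Duplicator can yield a partial isomorphism; this contradiction then forces $a_r = b_r$.

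Before describing Spoiler's continuation I would record one preliminary observation: $\Structure{A}_w \equiv_k \Structure{B}_v$ forces $w$ and $v$ to contain exactly the same terminal symbols. Indeed, if some $\mathtt{a}\in\Sigma$ occurred in $w$ but not in $v$, then $\mathtt{a}^{\Structure{A}_w}=\mathtt{a}$ would be a genuine factor while $\mathtt{a}^{\Structure{B}_v}=\perp$; but $\concrel^{\Structure{A}_w}(\mathtt{a},\mathtt{a},\emptyword)$ holds (as $\mathtt a = \mathtt a\cdot\emptyword$) whereas $\concrel^{\Structure{B}_v}(\perp,\perp,\emptyword)$ fails because $\perp\notin\facts(v)$, so the constant part of any resulting tuple would violate the relation clause of \autoref{defn:partialiso}, contradicting $\Structure{A}_w\equiv_k\Structure{B}_v$. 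Consequently $\mathtt{c}^{\Structure{A}_w}=\mathtt c=\mathtt{c}^{\Structure{B}_v}$ for every $\mathtt c\in\Sigma$, and $\emptyword^{\Structure{A}_w}=\emptyword=\emptyword^{\Structure{B}_v}$.

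Now write $a_r=c_1c_2\cdots c_\ell$ with $\ell\deff|a_r|\leq k-r$ and each $c_i\in\Sigma$ (taking $a_r=\emptyword$ if $\ell=0$). If $\ell\leq 1$, then $a_r$ is the interpretation of a constant symbol ($\emptyword$, or the letter $c_1$), so --- since the given tuple $(\vec a,\vec b)$ is already a partial isomorphism --- the constant clause of \autoref{defn:partialiso} at coordinate $r$ directly forces $b_r$ to equal the same constant, i.e.\ $b_r=a_r$, contradicting our assumption. If $\ell\geq 2$, we use that the position after round $r$ is Duplicator-winning for the $(k-r)$-round subgame, and exhibit a Spoiler strategy that defeats every Duplicator response: for $j=1,\dots,\ell-1$, in round $r+j$ Spoiler picks in $\Structure{A}_w$ the prefix $u_{\ell-j}\deff c_1\cdots c_{\ell-j}$ of $a_r$ (which is a factor of $w$), and picks arbitrarily in the remaining rounds --- these exist since $\ell-1\leq k-r-1$. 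Writing $b_{r+1},\dots$ for Duplicator's responses and assuming for contradiction that the resulting play is a partial isomorphism, the constant clause at coordinate $r+\ell-1$ (where $a_{r+\ell-1}=u_1=c_1=\mathtt{c_1}^{\Structure{A}_w}$) gives $b_{r+\ell-1}=\mathtt{c_1}^{\Structure{B}_v}=c_1$, while for each $j=1,\dots,\ell-1$ the atom $\concrel^{\Structure{A}_w}(a_{r+j-1},a_{r+j},c_{\ell-j+1})$ holds --- because $a_{r+j-1}=u_{\ell-j+1}=u_{\ell-j}\cdot c_{\ell-j+1}=a_{r+j}\cdot c_{\ell-j+1}$ with $a_r=u_\ell$ --- so the relation clause yields $b_{r+j-1}=b_{r+j}\cdot c_{\ell-j+1}$. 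Chaining these equalities from $j=\ell-1$ down to $j=1$ produces $b_{r+\ell-1}=c_1$, $b_{r+\ell-2}=c_1c_2$, \dots, $b_r=c_1c_2\cdots c_\ell=a_r$, contradicting $a_r\neq b_r$. Hence Duplicator has no winning response, contradicting that the after-round-$r$ position was Duplicator-winning. This completes the contradiction, so $b_r=a_r$.

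The only genuine subtlety I anticipate is the handling of the sink element $\perp$ (one must rule out spurious ``matches'' routed through $\perp$), which is why I isolate the ``same alphabet'' observation as a preliminary step so that every letter constant is interpreted as a real factor in both structures; everything else is routine bookkeeping in the game, together with the standard idea that a short factor can be ``spelled out'' letter by letter, each step being a single $\concrel$-atom relating a factor to one of its prefixes.
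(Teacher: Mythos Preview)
The paper does not give its own proof of this lemma: it is quoted verbatim from Thompson and Freydenberger~\cite{thompson2023generalized} and used as a black box. So there is no in-paper argument to compare against. That said, your proof is correct and is exactly the standard ``spell out a short factor letter by letter'' argument one would expect for such a statement.

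A couple of minor presentational points. First, the double contradiction (``assume $a_r\neq b_r$; then assume for contradiction the resulting play is a partial isomorphism; derive $b_r=a_r$; hence Duplicator has no winning response; contradiction'') is unnecessarily convoluted. Since the position after round~$r$ \emph{is} Duplicator-winning, Duplicator \emph{does} have a response that yields a partial isomorphism for Spoiler's prefix continuation; your chain then gives $b_r=a_r$ directly, and you are done --- no nested reductio needed. Second, you reuse the symbols $b_{r+1},b_{r+2},\ldots$ for Duplicator's responses in the \emph{hypothetical} continuation, which clashes with the $b_{r+1},\ldots,b_k$ already fixed by the actual play described in the lemma statement; using primed symbols for the hypothetical branch would avoid ambiguity. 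Neither point affects correctness.
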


Let $\Sigma$ be any alphabet where $| \Sigma | \geq 2$ and let $h \colon \{ \mathtt 0, \mathtt 1 \}^* \rightarrow \Sigma^*$ be a uniform morphism.
Let $\bar w_A \in \{ \mathtt 0, \mathtt 1 \}^*$ be the canonical encoding of a set $A \subset \mathbb{N}$ where the $\mathsf{min}(A) > 2 \cdot |h(\mathtt 1)|$.
Then, from~\autoref{lemma:unique-core}, we know that the core-factorization of $u \sqsubseteq h(\bar w_A)$ is unique.
This gives us a way to associate every long enough factor of $h(\bar w_A)$ with some factor of $\bar w_A$, and hence gives a tool for generalizing Duplicator's strategy over words of the form $\bar w_A$.

\begin{lem}\label{lemma:morphGeneralize}
Let $h \colon \{ \mathtt 0, \mathtt 1 \}^* \rightarrow \Sigma^*$ be a uniform morphism, where $| \Sigma | \geq 2$.
If $\bar w_A \equiv_{k+20} \bar w_B$ for sets $A, B \subseteq \mathbb{N}$ where $\mathsf{min}(A \union B) > 2 \cdot |h(\mathtt 0)| + 10$, then $h(\bar w_A) \equiv_{k} h( \bar w_B)$. 
\end{lem}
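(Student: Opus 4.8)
The plan is to prove $h(\bar w_A)\equiv_k h(\bar w_B)$ directly, by exhibiting a winning strategy for Duplicator in the $k$-round \ef game on $h(\bar w_A)$ and $h(\bar w_B)$ that simulates a fixed winning Duplicator strategy in the $(k+20)$-round game on $\bar w_A$ and $\bar w_B$ (which exists by hypothesis); the conclusion then follows from \autoref{thm:ehrenfeucht}. The bridge between the two boards is the core factorization of \autoref{lemma:unique-core}. Its hypotheses are met: $h(\mathtt 0)$ is primitive, $\proot(h(\mathtt 1))\neq h(\mathtt 0)$, and $\mathsf{min}(A),\mathsf{min}(B)\geq 2|h(\mathtt 1)|$ since $\mathsf{min}(A\cup B)>2\cdot\mathsf{max}(|h(\mathtt 0)|,|h(\mathtt 1)|)+10$. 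Hence every factor $u$ of $h(\bar w_A)$ or $h(\bar w_B)$ with $|u|>10\cdot\mathsf{max}(|h(\mathtt 0)|,|h(\mathtt 1)|)$ has a unique pre-image core $y$, with $u=x\cdot h(y)\cdot z$, where $x$ is a proper suffix of $h(c)$ and $z$ a proper prefix of $h(c')$ for the letters $c,c'\in\{\mathtt 0,\mathtt 1\}$ bordering that occurrence of $y$ in the underlying word.

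I would split Spoiler's moves into a \emph{short} regime ($|u|\leq 10\cdot\mathsf{max}(|h(\mathtt 0)|,|h(\mathtt 1)|)$) and a \emph{long} regime. First a preliminary claim: since $\bar w_A\equiv_{k+20}\bar w_B$ (for small $k$ the result is handled directly, as the constants already match because $h(\bar w_A)$ and $h(\bar w_B)$ use exactly the letters of $h(\mathtt 0)h(\mathtt 1)$) and $\mathsf{min}(A\cup B)$ is large, $\bar w_A$ and $\bar w_B$ have the same factors up to a bounded length, and therefore $h(\bar w_A)$ and $h(\bar w_B)$ have the same short factors; Duplicator answers a short move by the identical word, which is automatically consistent with equality, the constants, and concatenation among short factors. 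For a long move $u=x\cdot h(y)\cdot z$ of $h(\bar w_A)$ (the $h(\bar w_B)$ side is symmetric), Duplicator runs, in parallel, the $(k+20)$-round game on $\bar w_A,\bar w_B$: it spends a bounded initial batch of the $20$ spare rounds pinning the boundary information of $\bar w_A,\bar w_B$ near their first and last $\mathtt 1$'s (and, using that $\mathtt 0,\mathtt 1,\emptyword$ are constants together with \autoref{lemma:consistentStrats}, forcing the bordering letters to be copied verbatim), and then feeds to the simulated game the pre-image core $y$, extended by its bordering letter on each side on which the overhang ($x$, respectively $z$) is nonempty. Reading off the simulated Duplicator's response $y'$ with its preserved bordering letters, Duplicator answers in the image game with $u'\df x'\cdot h(y')\cdot z'$, where $x'$ is the proper suffix of $h(c)$ of length $|x|$ and $z'$ the proper prefix of $h(c')$ of length $|z|$; since the bordering letters were preserved, $y'$ occurs in $\bar w_B$ bordered by $c$ and $c'$, so $u'\sqsubseteq h(\bar w_B)$.

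The remaining and, I expect, hardest step is to verify that the two resulting $(k+|\Sigma|+1)$-tuples form a partial isomorphism, i.e.\ that equality, the constants, and the ternary relation $\concrel$ are preserved. Equality and the constant conditions follow from injectivity of $h$ (\autoref{lemma:code-obs} with \autoref{prop:morphCode}) together with uniqueness of the core factorization, which lets one transfer these conditions to the pre-image cores, where the simulated game preserves them. Preservation of $\concrel$ is the delicate point: writing $u_i=x_ih(y_i)z_i$, one must show by a finite case analysis on how the overhangs meet — mirroring Cases~1--2 in the proof of \autoref{lemma:unique-core} — that $u_i=u_j\cdot u_\ell$ in $h(\bar w_A)$ forces either $y_i=y_j\cdot y_\ell$ (a seam at a block boundary, with $x_i=x_j$, $z_i=z_\ell$, $z_j=x_\ell=\emptyword$) or $y_i=y_j\cdot t\cdot y_\ell$ for the unique letter $t$ with $h(t)=z_j x_\ell$ (an interior seam, again with $x_i=x_j$, $z_i=z_\ell$), and conversely; here it is essential that $h$ is a bifix morphism with $h(\mathtt 0)$ primitive, which is exactly what rules out any misaligned splitting and pins down $t$. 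In the interior-seam case the relation among $y_j,t,y_\ell,y_i$ that must be preserved is the one the context-extended moves and the pinned constant $t$ were designed to carry through the simulated game, while the mixed cases, where a short factor participates in a concatenation, reduce to the preliminary claim. Putting this together, Duplicator wins the $k$-round game, hence $h(\bar w_A)\equiv_k h(\bar w_B)$.
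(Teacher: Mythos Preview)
Your proposal is essentially the paper's approach: split Spoiler's moves into short (answer identically) and long (use the unique core factorization from \autoref{lemma:unique-core} to translate into a move of the simulated $(k{+}20)$-round game on $\bar w_A,\bar w_B$, then translate the response back keeping the same overhangs), and use the spare rounds to force the boundary data and the concatenation relations to line up.

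One organizational difference is worth flagging. You feed the \emph{extended} core $c\cdot y\cdot c'$ to the simulated game and rely on the constants $\mathtt 0,\mathtt 1$ plus \autoref{lemma:consistentStrats} to pin the bordering letters of the response. The paper instead feeds only the bare core $y$ in round $i$, and then---after all $k$ rounds---argues that for \emph{each fixed} triple $(l,i,j)$ separately, Spoiler could spend two of the spare rounds in $\mathcal G'$ (e.g.\ playing $c_i\cdot\lambda_l$ and then $c_i\cdot\lambda_l\cdot c_j$) to force $d_l=d_i\cdot\lambda_l\cdot d_j$; similarly two spare rounds suffice to certify that $b_i\sqsubseteq h(\bar w_B)$. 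Your description of the interior-seam case (``the relation among $y_j,t,y_\ell,y_i$\ldots the context-extended moves\ldots were designed to carry through'') is the one place where the sketch is thin: with extended cores the needed relation is not a single concatenation among the simulated moves, so you still end up spending a couple of extra rounds to strip a bordering letter and re-assemble---which is exactly what the paper does. Either bookkeeping works, but the paper's ``bare core plus two post-hoc rounds per check'' makes the case analysis (its four cases: long/long, long/short, short/long, short/short) cleaner to write out.
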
 
\begin{proof}
Let $\Sigma$ be any alphabet where $| \Sigma | \geq 2$. 
Let $h \colon \{ \mathtt 0, \mathtt 1 \}^* \rightarrow \Sigma^*$ be a uniform morphism.
Let $\mathsf{m} \df |h(\mathtt 0)|$.
Assume that $\bar w_A \equiv_{k+20} \bar w_B$ for sets $A, B \subseteq \mathbb{N}$ where $\mathsf{min}(A \union B) > 2 \mathsf{m} + 10$.

We write $\mathcal{G}$ for the $k$-round game over $h(\bar w_A)$ and $h(\bar w_B)$.
Let $\mathcal{G}'$ be a $k+20$-round game over $\bar w_A$ and $\bar w_B$.
Using the fact that Duplicator plays $\mathcal{G}'$ using their winning strategy, we shall now show that $h(\bar w_A) \equiv_{k} h(\bar w_B)$.

\begin{figure}
    \centering
    \begin{tikzpicture}
    \tikzstyle{vertex}=[rectangle,fill=white!35,minimum size=12pt,inner sep=3pt,outer sep=1.5pt]
   \tikzstyle{vertex2}=[rectangle,fill=white!35,draw=gray,rounded corners=.1cm]
   
    \node[vertex] (1) at (-1.5, 1) {$u_i \cdot h(c_i) \cdot u_i'$};
    \node[vertex] (2) at (1.5, 1) {$c_i$};
    
    \node[vertex] (3) at (1.5,-0.5) {$d_i$};
    \node[vertex] (4) at (-1.5,-0.5) {$u_i \cdot h(d_i) \cdot u_i'$};
    
    \path[-latex,dashed] (1) edge node [above] {} (2);
    \path[-latex] (2) edge node [right] {$\mathcal{G}'$} (3);
    \path[-latex,dashed] (3) edge node [below] {} (4);
    \path[-latex] (1) edge node [left] {$\mathcal{G}$} (4);
    \end{tikzpicture}
    \caption{Duplicator's strategy when Spoiler chooses some $u$ where $|u| > 10 \mathsf{m}$.}
    \label{fig:dup-strategy}
\end{figure}

\paragraph*{Duplicator's Strategy}
Without loss of generality, let Spoiler chose some $a_i \sqsubseteq h(\bar w_A)$ in round $i$.
Now we define Duplicator's response:
\begin{enumerate}
\item If $|a_i| > 10 \mathsf{m}$, then let $(u_i , c_i , u_i')$ be the core factorization of $a_i$. Let Spoiler choose $c_i$ in round $i$ of $\mathcal{G}'$ and let $d_i$ be Duplicator's response in $\mathcal{G}'$. Then, Duplicator's response in $\mathcal{G}$ is $b_i \df u_i \cdot h(d_i) \cdot u_i'$. See~\autoref{fig:dup-strategy}.
\item If $|a_i| \leq 10 \mathsf{m}$ then let Duplicator respond with $b_i \df a_i$. We assume that Spoiler chooses $\emptyword$ in $\mathcal{G}'$ for round $i$; the structure Spoiler chooses here is not important.
\end{enumerate}

The case where Spoiler chose some $b_i \sqsubseteq h(\bar w_B)$ in round $i$ is defined symmetrically.

Let $\kappa = k+|\Sigma|+1$.
Let $\vec a = (a_1, \dots, a_{\kappa})$ and $\vec b = (b_1, \dots, b_{\kappa})$ be the resulting tuples from $\mathcal{G}$ where Duplicator plays their defined strategy.
Let $\vec c = (c_1, \dots, c_{\kappa})$ and $\vec d = (d_1, \dots, d_{\kappa})$ be the resulting tuples from the first $k$-round of $\mathcal{G}'$ where Duplicator plays their winning $k+20$-round strategy.

Note that Duplicator can still survive $\mathcal{G}'$ for an extra $20$ rounds, and therefore must play accordingly.
We shall utilize this fact to ensure Duplicator's strategy for $\mathcal{G}$ is indeed a winning strategy.
That is, we will look at possible choices for Spoiler in $\mathcal{G}'$ for rounds $k+1, \dots, k+20$ and show that since Duplicator must win $\mathcal{G}'$, this enforces Duplicator plays a way which translates to winning $\mathcal{G}$.
For a round $k+i$ of $\mathcal{G}'$ where $i \geq 1$, we denote the choice from $\bar w_A$ as $c_{\kappa + i}$ and the choice from $\bar w_B$ as $d_{\kappa + i}$.
As an example (which we use later on in this proof), from~\autoref{lemma:consistentStrats}, we know that if $|c_i| \leq 20$ for any $i \in [k]$, then $c_i = d_i$ must hold.
Likewise, if $|d_i| \leq 20$ for any $i \in [k]$, then $d_i = c_i$ must hold.

\paragraph*{Correctness}
In this section, we shall first prove that Duplicator's strategy is well-defined. 
Then, we shall show that it is indeed a winning strategy.

Clearly, for any choice Spoiler makes, there is a unique response defined for Duplicator, see~\autoref{lemma:unique-core}.
Thus, to show that Duplicator's strategy is well-defined, we must now show that Duplicator always responds with a factor of the corresponding word.
Assume without loss of generality, that in round $i$ of $\mathcal{G}$ Spoiler chooses $a_i \sqsubseteq h(\bar w_A)$.
We shall now prove that $b_i \sqsubseteq h(\bar w_B)$ does indeed hold, where $b_i$ is Duplicator's response as per their defined strategy.

First we show that if $a_i \leq 10 \mathsf{m}$ then $a_i \sqsubseteq h(\bar w_B)$.
Let $u \sqsubseteq \bar w_A$ where $|u|\leq 10$, then either:
$u = \mathtt{0}^r$ where $r \leq 10$, or
$u = \mathtt{0}^r \cdot \mathtt{1} \cdot \mathtt{0}^s$ with $r+s < 10$.
Since $\mathsf{min}(A) > 2 \mathsf{m} + 10$ and $\mathsf{min}(B) > 2 \mathsf{m} + 10$  we have that $u \sqsubseteq \bar w_B$.
Thus, for such an $a_i$, we can find some $u \sqsubseteq \bar w_A$ such that $a_i \sqsubseteq h(u)$ which implies $a_i \sqsubseteq h(\bar w_B)$.

Next consider $a_i > 10 \mathsf{m}$.
Then $( w_i , c_i , w_i')$ is the core factorization of $a_i$, where $c_i$ is the $i$-th element chosen from $\bar w_A$ in $\mathcal{G}'$.
Notice that by the definition of the core factorization, we have that $p_i \cdot c_i \cdot s_i \sqsubseteq \bar w_A$ where $p_i, s_i \in \{ \emptyword, \mathtt 0, \mathtt 1 \}$, and where $w_i \ssuff h(p_i)$ and $w_i' \ssuff h(s_i)$.
Moreover, $p_i = \emptyword$ if and only if $w_i = \emptyword$, and likewise $s_i = \emptyword$ if and only if $w_i' = \emptyword$.

Let $( w_i , d_i , w_i' )$ be the core factorization of $b_i$, where $d_i$ is the $i$-th element chosen from $\bar w_B$ in $\mathcal{G}'$. 
We claim that $p_i \cdot d_i \cdot s_i \sqsubseteq \bar w_B$.
To establish this claim, we show that assuming $p_i \cdot d_i \cdot s_i \sqsubseteq \bar w_B$ does not hold results in a contradiction.
Assuming $p_i \cdot d_i \cdot s_i \sqsubseteq \bar w_B$ does not hold, Spoiler can use the extra rounds of $\mathcal{G}'$ to choose the words $p_i \cdot c_i$, and $p_i \cdot c_i \cdot s_i$.
Since $p_i \cdot c_i$ is the concatenation of a terminal symbol (or the empty word) with a previous choice, Duplicator must respond with $p_i \cdot d_i$.
Analogously, Duplicator must respond to $p_i \cdot c_i \cdot s_i$ with $p_i \cdot d_i \cdot s_i$.
Since we know Duplicator plays $\mathcal{G}'$ using their winning strategy, $p_i \cdot d_i \cdot s_i \sqsubseteq \bar w_B$ and thus we obtain
$h(p_i ) \cdot h(d_i) \cdot h(s_i) \sqsubseteq h( \bar w_B)$.
It thus follows that $w_i \cdot h(d_i) \cdot w_i' \sqsubseteq h(\bar w_B)$.

We have proven that Duplicator's strategy is well defined.
Our next focus is to prove that Duplicator's strategy is a winning strategy.
The following claim is sufficient for showing Duplicator's strategy is a winning strategy.

\begin{clm}\label{claim:morphGenCorrect}
For all $l,i,j \in [\kappa]$ we have $a_l = a_i \cdot a_j$ if and only if $b_l = b_i \cdot b_j$.
\end{clm}

The remainder of this proof establishes~\autoref{claim:morphGenCorrect}.
Assume, without loss of generality, that $a_l = a_i \cdot a_j$ for some $l,i,j \in [\kappa]$. 
We shall show that $b_l = b_i \cdot b_j$. 
The case where $b_l = b_i \cdot b_j$ implies $a_l = a_i \cdot a_j$ for $l,i,j \in [\kappa]$ follows symmetrically. 

To prove $a_l = a_i \cdot a_j$ implies $b_l = b_i \cdot b_j$, we use the following case distinction.
\begin{itemize}
\item \emph{Case 1:} $|a_i| > 10 \mathsf{m}$ and $|a_j| > 10 \mathsf{m}$.
\item \emph{Case 2:}  $|a_i| > 10 \mathsf{m}$ and $|a_j| \leq 10 \mathsf{m}$.
\item \emph{Case 3:} $|a_i| \leq 10 \mathsf{m}$ and $|a_j| > 10 \mathsf{m}$.
\item \emph{Case 4:} $|a_i| \leq 10 \mathsf{m}$ and $|a_j| \leq 10 \mathsf{m}$.
\end{itemize}

\emph{Case 1, $|a_i| > 10 \mathsf{m}$ and $|a_j| > 10 \mathsf{m}$:}
For $\gamma \in \{i,j, l\}$, let $a_\gamma = w_\gamma \cdot h(c_\gamma) \cdot w_\gamma'$ where $c_\gamma \sqsubseteq \bar w_A$ is the pre-image core of $a_\gamma$.
As $a_l = a_i \cdot a_j$, we have
\[ \underbrace{w_l \cdot h(c_l) \cdot w_l'}_{a_l} = \underbrace{w_i \cdot h(c_i) \cdot w_i'}_{a_i} \cdot \underbrace{w_j \cdot h(c_j) \cdot w_j'}_{a_j}. \]

We can take $w_l \cdot h(c_l) \cdot w_l'$ and factorize it as 
\begin{equation}\label{eq:factorize-al}
 a_l = w_l \cdot \underbrace{ h(c_{l,p}) \cdot w_{l,p} \cdot w_{l,s} \cdot h(c_{l,s}) }_{h(c_l)} \cdot w_l', 
 \end{equation}
where $w_{l,p} \cdot w_{l,s} = h(\lambda_l)$ for some $\lambda_l \in \{ \mathtt 0, \mathtt 1 \}$, and where $c_l = c_{l,p} \cdot \lambda_l \cdot c_{l,s}$.
Furthermore, with the appropriate factorization, it follows that $(w_l, h(c_{l,p}), w_{l,p}) $ is a core factorization of $a_i$. 
That is:
\[  a_l =  \underbrace{ w_l \cdot h(c_{l,p}) \cdot w_{l,p}}_{a_i} \cdot \underbrace{w_{l,s} \cdot h(c_{l,s})  \cdot w_l'. }_{a_j} \]
Since $|a_i| > 10 \mathsf{m}$, there is a unique core factorization, see~\autoref{lemma:unique-core}, and thus $w_l = w_i$, $c_{l,p} = c_i$, and $w_{l,p} = w_i'$.
Symmetrically, we know $w_{l,s} = w_j$, $c_{l,s} = c_j$, and $w_l' = w_j'$.

Now consider $b_i \cdot b_j$.
By the definiton of $\mathcal{G}'$ and Duplicator's strategy, we know that
\[ b_i \cdot b_j = w_i \cdot h(d_i) \cdot w_i' \cdot w_j \cdot h(d_j)  \cdot w_j'. \]
We also know that $w_i' \cdot w_j = h(\lambda_l)$, and therefore
\[ b_i \cdot b_j = w_i \cdot h(d_i) \cdot h(\lambda_l) \cdot h(d_j)  \cdot w_j'. \]
Likewise, we know that $w_i = w_l$ and $w_j' = w_l'$.
Hence
\[ b_i \cdot b_j = w_l \cdot h(d_i) \cdot h(\lambda_l) \cdot h(d_j)  \cdot w_l'. \]
As per the definition of the $\mathcal{G}'$ and Duplicator's strategy, we know that
\[ b_l = w_l \cdot h(d_l) \cdot w_l'. \]
Thus, if $d_l = d_i \cdot \lambda_l \cdot d_j$, then $b_l = b_i \cdot b_j$.
Note that $|h(\lambda_l) | \leq \mathsf{m}$, because $\lambda_l \in \{\mathtt 0, \mathtt 1\}$.

In round $k+1$, Spoiler can choose $h(\bar w_A)$ and $c_{\kappa+1} \df c_i \cdot \lambda_l$.
Hence, Duplicator must respond with $d_{\kappa+1} \df d_i \cdot \lambda_l$, since $\lambda_l$ is a constant.
Then, in round $k+2$, Spoiler can choose $c_{\kappa+2} \df c_{\kappa+1} \cdot c_j$ and Duplicator must respond with $d_{\kappa+2} \df d_{\kappa+1} \cdot d_j$.
As $c_{\kappa+2} = c_i \cdot \lambda_l \cdot c_j = c_l$, it must hold that $d_{\kappa+2} = d_i \cdot \lambda_l \cdot d_j = d_l$.
Thus, $d_l = d_i \cdot \lambda_l \cdot d_j$ and consequently $b_l = b_i \cdot b_j$.

\emph{Case 2,  $|a_i| > 10 \mathsf{m}$ and $|a_j| \leq 10 \mathsf{m}$:}
Note that (from the definition of Duplicator strategy) since $|a_j| \leq 10 \mathsf{m}$, we have that $b_j = a_j$.
Since $a_l = a_i \cdot a_j$, we can write
\[ w_l \cdot h(c_l) \cdot w_l' = w_i \cdot h(c_i) \cdot w_i' \cdot a_j,  \]
where $(w_l,c_l,w_l')$ is the unique core factorization of $a_l$ and $(w_i, c_i, w_i')$ is the unique core factorization of $a_i$.
	
Using the same reasoning as in Case 1 along with~\autoref{lemma:unique-core}, we have that $w_l = w_i$ and $c_i \pref c_l$.
Refer back to~\autoref{eq:factorize-al} for some intuition.
Thus, there is some $\lambda_l \sqsubseteq \bar w_A$ such that
\[ w_l \cdot \underbrace{h(c_i) \cdot h(\lambda_l)}_{h(c_l)} \cdot w_l' = w_l \cdot h(c_i) \cdot w_i' \cdot a_j. \]
Therefore,
$ h(\lambda_l) \cdot w_l' = w_i' \cdot a_j$.
Since $w_i' \leq \mathsf{m}$ and $a_j \leq  10 \mathsf{m}$, we know $|h(\lambda_l)| \leq 11 \mathsf{m}$, and hence $\lambda_l \leq 11$.
Spoiler can choose $c_{\kappa + 1} \df \lambda_l$ in round $k+1$ of $\mathcal{G}'$, and therefore Duplicator must respond with $d_{\kappa + 1} \df \lambda_l$, see~\autoref{lemma:consistentStrats}.
Furthermore, in round $k+2$ of $\mathcal{G}'$, Spoiler can choose $c_l = c_i \cdot \lambda_l$ and thus $d_l = d_i \cdot \lambda_l$ must hold.

Now let us consider 
\[ b_l = w_l \cdot h(d_l) \cdot w_l' . \]
As $d_l = d_i \cdot \lambda_l$, we have
\[ b_l = w_l \cdot h(d_i) \cdot h(\lambda_l) \cdot w_l' . \]

Due to the fact that $ h(\lambda_l) \cdot w_l' = w_i' \cdot a_j$ and $b_j = a_j$, we know $h(\lambda_l) \cdot w_l' = w_i' \cdot b_j$.
Using this along with the fact that $w_l = w_i$, we get 
\[ b_l = w_i \cdot h(d_i) \cdot w_i' \cdot b_j. \]
Due to the definition of the $\mathcal{G}'$ and Duplicator's strategy, we know that $b_i = w_i \cdot h(d_i) \cdot w_i'$ and hence we arrive at $b_l = b_i \cdot b_j$.

\emph{Case 3,  $|a_i| \leq 10 \mathsf{m}$ and $|a_j| > 10 \mathsf{m}$:}
This follows symmetrically from Case 2. 

\emph{Case 4,$|a_i| \leq 10 \mathsf{m}$ and $|a_j| \leq 10 \mathsf{m}$:}
Note that since $|a_i| \leq 10 \mathsf{m}$ and $|a_j| \leq 10 \mathsf{m}$, we have that $a_i = b_i$ and $a_j = b_j$, by the definition of Duplicator's strategy.
If $|a_l| \leq 10 \mathsf{m}$ then it is trivial that $a_l = a_i \cdot a_j$ implies $b_l = b_i \cdot b_j$, as $a_l = b_l$.
Therefore, we continue this case under the assumption that $|a_l| > 10 \mathsf{m}$.
It follows that $a_l = w_l \cdot h(c_l) \cdot w_l'$ where $(w_l , c_l , w_l')$ is the core factorization, and $c_l$ is the element chosen from $\bar w_A$ in round $l$ of $\mathcal{G}'$.
Thus, $b_l = w_l \cdot h(d_l) \cdot w_l'$ where $d_l$ is the element chosen from $\bar w_B$ in round $l$ of $\mathcal{G}'$.
Since $|a_i| \leq 10 \mathsf{m}$ and $|a_j| \leq 10 \mathsf{m}$ it follows that $|a_l| \leq 20 \mathsf{m}$, and thus $c_l < 20$.
Thus, from~\autoref{lemma:consistentStrats} along with the fact that Duplicator must survive the extra $20$ rounds, we know that $d_l = c_l$ and hence $a_l = b_l$.
Consequently, $a_l = a_i \cdot a_j$ implies $b_l = b_i \cdot b_j$.

Thus, we have given a winning strategy for Duplicator for $\mathcal{G}$ using Duplicator's winning strategy for $\mathcal{G}'$.
This completes the proof of \autoref{lemma:morphGeneralize}.
\end{proof}

\autoref{lemma:morphGeneralize} is rather technical and specific to our purposes.
It gives us a tool to generalize Duplicator's strategy from~\autoref{lemma:evenOnes} using specific morphisms~$h \colon \{ \mathtt 0 , \mathtt 1 \}^* \rightarrow \Sigma^*$.
We shall see in the subsequent section that this is a key step in proving the language of a minimal DFA is not $\fc$-definable if it has a loop-step cycle.

\subsection{Loop-Step Cycles and Non-Definability in $\fc$}
This section gives the actual proof that if a minimal DFA $\mathcal{M}$ has a loop-step cycle, then $\lang(\mathcal{M}) \notin \lang(\fc)$.
We rely on the following result:

\begin{lem}[Thompson and Freydenberger~\cite{thompson2023generalized}]\label{lemma:pseudoCongruence}
Let $w_1, w_2, v_1, v_2 \in \Sigma^*$ where $\facts(w_1) \intersect \facts(w_2)$ is equal to $\facts(v_1) \intersect \facts(v_2)$, and let $r$ be the length of the longest word in $\facts(w_1) \intersect \facts(w_2)$. 
If $w_1 \equiv_{k+ r + 2} v_1$ and $w_2 \equiv_{k+ r + 2} v_2$ for some $k \in \mathbb{N}_+$, then $w_1 \cdot w_2 \equiv_k v_1 \cdot v_2$.
\end{lem}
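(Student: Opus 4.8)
The plan is to prove this \emph{composition lemma} for \ef games over $\fc$-structures by combining Duplicator's winning strategies for the two subgames into a winning strategy for the game on the concatenations. Write $\mathcal{G}$ for the $k$-round game on $\A_{w_1w_2}$ and $\A_{v_1v_2}$. I would have Duplicator maintain, in the background, a $(k{+}r{+}2)$-round game $\mathcal{G}_1$ on $\A_{w_1},\A_{v_1}$ and a $(k{+}r{+}2)$-round game $\mathcal{G}_2$ on $\A_{w_2},\A_{v_2}$, in both of which Duplicator follows the assumed winning strategies. The essential difficulty is that $\facts(w_1w_2)$ is \emph{not} the disjoint union of $\facts(w_1)$, $\facts(w_2)$, and the set of words straddling the boundary, so moves in $\mathcal{G}$ cannot simply be routed to one subgame.

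When Spoiler picks a factor $u$ of $w_1w_2$ (say, from the left structure), Duplicator classifies it: if $u \sqsubseteq w_1$, treat it as a Spoiler move in $\mathcal{G}_1$ and copy back the response; else if $u \sqsubseteq w_2$, treat it as a Spoiler move in $\mathcal{G}_2$; otherwise $u$ \emph{straddles}, and I would fix the canonical decomposition $u = s\cdot t$ with $s \suff w_1$, $t \pref w_2$ both nonempty and $|s|$ minimal. A short argument shows any two such decompositions differ by a word in $\facts(w_1)\cap\facts(w_2)$, hence of length at most $r$, so this choice is well defined; Duplicator then plays $s$ in $\mathcal{G}_1$ and $t$ in $\mathcal{G}_2$, obtains $s'\suff v_1$ and $t'\pref v_2$, and answers $s'\cdot t' \sqsubseteq v_1v_2$. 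Thus each round of $\mathcal{G}$ consumes at most one round in each subgame (using at most $k$), and the remaining $r{+}2$ rounds of budget are spent reactively: whenever certifying a concatenation fact requires knowing the image of a bounded-length ``connector'' word at the boundary (necessarily a word in $\facts(w_1)\cap\facts(w_2)$, of length $\le r$), or of $\emptyword$ or a single-letter constant, Duplicator first feeds that word into the relevant subgame as a probe move; bounding the number of probes needed by $r{+}2$ is itself part of the argument.

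The main work, and the main obstacle, is then to verify that the resulting tuple is a partial isomorphism, i.e. that $a_\ell = a_i\cdot a_j$ in $\A_{w_1w_2}$ iff $b_\ell = b_i\cdot b_j$ in $\A_{v_1v_2}$, together with the equality and constant conditions. This splits into cases according to whether each of $u_i,u_j,u_\ell$ lies in $w_1$, lies in $w_2$, or straddles. The ``pure'' cases follow immediately because $\mathcal{G}_1$ and $\mathcal{G}_2$ are won, and very short factors are copied verbatim by \autoref{lemma:consistentStrats}; but the mixed cases (for instance $u_i$ inside $w_1$ and $u_j$ straddling, so $u_\ell$ straddles with a cut that need not coincide with ``$u_i$ followed by the cut of $u_j$'') require relating the canonical cuts of $u_i,u_j,u_\ell$ through short connector words and transferring those relations through the (probed) subgames. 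The hypothesis $\facts(w_1)\cap\facts(w_2) = \facts(v_1)\cap\facts(v_2)$ is exactly what guarantees that a connector occurring in $w_1w_2$ has a matching occurrence in $v_1v_2$, so that the cut chosen on the $v$-side is consistent with the one on the $w$-side; without it the straddling moves on the two sides could be decomposed incompatibly. I expect the bookkeeping of these mixed cases, and the verification that the probe budget $r{+}2$ always suffices, to be the most delicate part of the proof.
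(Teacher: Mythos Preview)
The paper does not prove this lemma: it is stated with attribution to Thompson and Freydenberger~\cite{thompson2023generalized} and used as a black box in the proof of \autoref{thm:fc-lsc}. So there is no proof here to compare your proposal against.

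That said, your plan is the natural Feferman--Vaught-style composition argument and is almost certainly the shape of the original proof. Two remarks. First, your routing rule (``if $u\sqsubseteq w_1$ route to $\mathcal{G}_1$, else if $u\sqsubseteq w_2$ route to $\mathcal{G}_2$, else straddle'') is based on \emph{membership in a factor set}, not on a specific occurrence; a word can be a factor of $w_1$ and \emph{also} occur straddling the boundary, and the concatenation witnesses $a_\ell=a_i\cdot a_j$ you must certify need not respect the occurrences implicit in your routing. You will need the observation that any two straddling decompositions of the same word differ by an element of $\facts(w_1)\cap\facts(w_2)$ (length $\le r$) more pervasively than you indicate, together with \autoref{lemma:consistentStrats}, to reconcile these mismatches. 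Second, your ``probe budget $r{+}2$'' is not really a budget of moves to be spent: there can be $\Theta(k^3)$ concatenation triples to verify, so you cannot literally play a fresh probe per triple. The extra $r{+}2$ rounds are there so that \autoref{lemma:consistentStrats} forces every word of length $\le r{+}1$ to be copied verbatim at any point within the first $k$ rounds, which lets you argue about short connector words \emph{without} actually playing them. Framing the surplus as ``slack for \autoref{lemma:consistentStrats}'' rather than as a move budget will make the mixed-case verification cleaner.
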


We are now ready for the main result of \autoref{sec:fc-gp}.

\begin{thm}\label{lemma:lynch_automata}\label{thm:fc-lsc}
Let $\mathcal{M} \df (Q, \Sigma, \delta, q_0, F)$ be a minimal DFA. 
If $\mathcal{M}$ has a loop-step cycle, then $\lang(\mathcal{M}) \notin \lang(\fc)$.  
\end{thm}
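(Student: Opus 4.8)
The plan is to apply the \ef-game characterization from \autoref{thm:ehrenfeucht}: assuming $\mathcal{M}$ has a loop-step cycle, it suffices to exhibit, for every $k\in\NN$, words $U_k\in\lang(\mathcal{M})$ and $V_k\notin\lang(\mathcal{M})$ with $U_k\equiv_k V_k$. I would take $U_k$ and $V_k$ of the shape $x\cdot h(\bar w_A)\cdot z$ and $x\cdot h(\bar w_B)\cdot z$, where $x,z\in\Sigma^*$ depend only on $\mathcal{M}$, where $h\colon\{\mathtt 0,\mathtt 1\}^*\to\Sigma^*$ is a fixed bifix morphism built from the loop-step cycle, and where $A,B\subset\NN$ are large finite sets chosen via Lynch's Theorem so that $(\NN,+,A)\equiv_{k'}(\NN,+,B)$ for a sufficiently large $k'$, while $|A|+1\equiv 0\pmod{n}$ and $|B|+1\equiv j\pmod{n}$ for a suitable $j\in\{1,\dots,n-1\}$. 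Reading $h(\bar w_A)=h(\mathtt 1)\,h(\mathtt 0)^{a_1}\,h(\mathtt 1)\cdots h(\mathtt 1)\,h(\mathtt 0)^{a_{|A|}}\,h(\mathtt 1)$ from state $p_0$ passes through $p_1,p_2,\dots$ and ends in $p_{(|A|+1)\bmod n}=p_0$ (the $h(\mathtt 0)$-blocks not moving the state), whereas $h(\bar w_B)$ ends in $p_j\neq p_0$; the words $x$ and $z$ turn this into a membership difference, while the \ef-equivalence is transported down from $(\NN,+,A)\equiv_{k'}(\NN,+,B)$ to $U_k\equiv_k V_k$ by chaining \autoref{lemma:evenOnes}, \autoref{lemma:morphGeneralize}, and two applications of \autoref{lemma:pseudoCongruence}.

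More concretely, I would first fix the loop-step data $p_0,\dots,p_{n-1}$ (pairwise distinct, $n\geq 2$) and $w,v\in\Sigma^+$ with $\proot(w)\neq\proot(v)$, $\delta^*(p_i,w)=p_i$, $\delta^*(p_i,v)=p_{i+1\bmod n}$. Since $\mathcal{M}$ is minimal, every state is reachable, so I pick $x$ with $\delta^*(q_0,x)=p_0$; also no $p_i$ is a dead state (otherwise every $p_i$ would be dead and hence all mutually equivalent, contradicting minimality as $n\geq 2$), and $p_0$ is distinguishable from each $p_j$ with $j\neq 0$, so I fix such a $j$ together with a distinguishing word $z$; after possibly swapping the roles of $A$ and $B$ I may assume $\delta^*(p_0,z)\in F$ and $\delta^*(p_j,z)\notin F$. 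For the morphism I would take $h(\mathtt 0)\df w\,v^{Kn}$ and $h(\mathtt 1)\df v^{Mn}\,w\,v$ for large $K,M$. Both $w$ and $v^n$ fix every $p_i$, so $h(\mathtt 0)$ fixes every $p_i$ while $h(\mathtt 1)$ advances every $p_i$ by one; writing $w=\proot(w)^a$, $v=\proot(v)^b$, a Fine--Wilf argument (together with \autoref{lemma:internal}) shows that for $K$ large $h(\mathtt 0)=\proot(w)^a\proot(v)^{bnK}$ is primitive and is not a prefix or a suffix of any power of $\proot(v)$, from which one obtains, for $K,M$ large, that $h$ is a bifix morphism with $\proot(h(\mathtt 1))\neq h(\mathtt 0)$ --- exactly the hypotheses of \autoref{lemma:morphGeneralize}. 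Set $\mathsf m\df\mathsf{max}(|h(\mathtt 0)|,|h(\mathtt 1)|)$; this, together with $x$ and $z$, is a constant depending only on $\mathcal{M}$.

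For the parameters: given $k$, let $k_1\df k+c$ where $c$ (depending only on $|x|,|z|,h$) is large enough that two successive applications of \autoref{lemma:pseudoCongruence} lift $h(\bar w_A)\equiv_{k_1}h(\bar w_B)$ to $x\cdot h(\bar w_A)\cdot z\equiv_k x\cdot h(\bar w_B)\cdot z$; the $\facts$-hypotheses of that lemma hold because, once $\mathsf{min}(A\cup B)$ exceeds $\mathsf{max}(|x|,|z|)$, the length-$\leq\mathsf{max}(|x|,|z|)$ factors occurring in $h(\bar w_A)$ and in $h(\bar w_B)$ form one and the same set, determined by $h$ alone. Let $k_2\df k_1+20$ and $k_3\df r(k_2)$ with $r$ from \autoref{lemma:evenOnes}. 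Fix a $k_3$-Lynchian set $P_{k_3}$ with $\mathsf{min}(P_{k_3})>2\mathsf m+10+|x|+|z|$ (possible since in \autoref{defn:Lynchian} its least element may be taken to be an arbitrarily large multiple of $f(k_3)!$), and pick $A,B\subseteq P_{k_3}$ with $d(k_3)<|A|,|B|$, $|A|+1\equiv 0\pmod{n}$, $|B|+1\equiv j\pmod{n}$. Then \autoref{thm:lynch} gives $(\NN,+,A)\equiv_{k_3}(\NN,+,B)$; \autoref{lemma:evenOnes} gives $\bar w_A\equiv_{k_2}\bar w_B$; \autoref{lemma:morphGeneralize} gives $h(\bar w_A)\equiv_{k_1}h(\bar w_B)$ (its size hypothesis holds as $\mathsf{min}(A\cup B)\geq\mathsf{min}(P_{k_3})>2\mathsf m+10$); and \autoref{lemma:pseudoCongruence} gives $U_k\equiv_k V_k$. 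Finally $\delta^*(q_0,U_k)=\delta^*(\delta^*(p_0,h(\bar w_A)),z)=\delta^*(p_0,z)\in F$ and $\delta^*(q_0,V_k)=\delta^*(p_j,z)\notin F$, so $U_k\in\lang(\mathcal{M})$ and $V_k\notin\lang(\mathcal{M})$; \autoref{thm:ehrenfeucht} then yields $\lang(\mathcal{M})\notin\lang(\fc)$.

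The main obstacle is the morphism step: one must realize the loop-step behaviour on all of $p_0,\dots,p_{n-1}$ while simultaneously meeting the word-combinatorial hypotheses of \autoref{lemma:morphGeneralize} --- $h(\mathtt 0)$ primitive, $\proot(h(\mathtt 1))\neq h(\mathtt 0)$, and $\{h(\mathtt 0),h(\mathtt 1)\}$ bifix --- which is where the padding exponents $K,M$ and the Fine--Wilf/\autoref{lemma:internal} arguments are needed. A secondary but necessary point is the bookkeeping that keeps every quantity ($x$, $z$, $h$, $\mathsf m$, $c$, $j$) independent of $k$, so that the parameter chain $k\mapsto k_1\mapsto k_2\mapsto k_3$ and the resulting choice of $A,B$ genuinely works for all $k$.
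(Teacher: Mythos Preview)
Your proof follows the same architecture as the paper's: reach $p_0$ via a fixed prefix, encode $\bar w_A,\bar w_B$ through a morphism $h$, append a distinguishing suffix, and chain \autoref{thm:lynch} $\to$ \autoref{lemma:evenOnes} $\to$ \autoref{lemma:morphGeneralize} $\to$ \autoref{lemma:pseudoCongruence}. The one substantive difference is the choice of $h$. You take $h(\mathtt 0)=w\,v^{Kn}$ (fixing each $p_i$) and $h(\mathtt 1)=v^{Mn}w\,v$ (advancing by one), which is semantically natural and spares you the modular condition on the Lynchian set; but since $|h(\mathtt 0)|\neq|h(\mathtt 1)|$, the bifix property is not free, and the primitivity of $h(\mathtt 0)=\proot(w)^a\proot(v)^{bKn}$ needs the Fine--Wilf argument you allude to (the exponent on $\proot(w)$ may be $1$, so a ready-made ``product of large powers is primitive'' result does not apply directly). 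The paper instead sets $h(\mathtt 0)=w^{n|v|}v^{n|w|+n+1}$ and $h(\mathtt 1)=w^{2n|v|}v^{n+1}$, arranged so that $|h(\mathtt 0)|=|h(\mathtt 1)|$: bifix is then automatic, and primitivity of both images follows in one line from the fact that $x^iy^j$ is primitive for distinct primitive $x,y$ and $i,j\geq 2$. The price is that now \emph{both} $h(\mathtt 0)$ and $h(\mathtt 1)$ advance the state by one, so the paper additionally requires every element of the Lynchian set to be $\equiv 0\pmod n$, which \autoref{defn:Lynchian} easily accommodates. Both routes are correct; the paper trades a mild extra constraint on the Lynchian set for a much shorter verification of the hypotheses of \autoref{lemma:morphGeneralize}.
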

\begin{proof}
Let $\mathcal{M} \df (Q, \Sigma, \delta, q_0, F)$ be a minimal DFA
that has a loop-step cycle, i.e.,
there exist $n \geq 2$ pairwise distinct states $p_0, p_1,
\dots, p_{n-1}$ and words $w,v \in \Sigma^+$ where $\proot(w) \neq
\proot(v)$ and where: 
\begin{itemize}
\item $\delta^*(p_i,w) = p_i$ for all $i \in \{0, \dots, n-1 \}$,
\item $\delta^*(p_i, v) = p_{i+1 \pmod{n}}$.
\end{itemize}
Note that since $\mathcal{M}$ is minimal, all states must be
reachable, and thus there exists some word $p \in \Sigma^*$ such that $\delta^*(q_0, p) = p_0$.
Furthermore, there must exist some word $s \in \Sigma^*$ such that
$\delta^*(p_j, s) \in F$ for some $j \in \{0, \dots, n-1\}$ and
$\delta^*(p_\ell, s) \notin F$ for some $\ell \in \{0, \dots, n-1\}$
(the reason is that
otherwise, we could combine states to get a smaller automaton that
accepts the same language, contradicting the minimality of
$\mathcal{M}$; for details see Section~4.3 in Chapter~2 of~\cite{handbookOfFL}).

We choose such words $p$ and $s$ and indices
$0 \leq j,\ell < n$ and keep them fixed throughout the
remainder of this proof. That is, $\delta^*(q_0, p) = p_0$ and
$\delta^*(p_j, s) \in F$ and $\delta^*(p_\ell, s) \not\in F$.

Let $h \colon \{ \mathtt 0, \mathtt 1 \}^* \rightarrow \Sigma^*$ where:
\begin{itemize}
\item $h(\mathtt 0) \df w^{n  |v|} \cdot v^{n  |w| + n+1}$, and 
\item $h(\mathtt 1) \df w^{2n|v|} \cdot v^{n+1}$.
\end{itemize}
Observe that $|h(\mathtt 0)| = |h(\mathtt 1)|$, because
\[
|h(\mathtt 0)| = 2  n  |v|  |w| + n  |v| + |v|, \quad \text{and} \quad
|h(\mathtt 1)| = 2  n  |v| |w| + n |v| + |v|. 
\]
Notice that
$\delta^*(p_i, h(\mathtt 0)) = \delta^*(p_i, h(\mathtt 1)) = p_{i +1
  \pmod{n}}$
  for any $i \in \{ 0, \dots, n{-}1 \}$.

Theorem~16 of~\cite{lischke2011primitive} states that for two
primitive words $x,y \in \Sigma^+$ where $x \neq y$, the word $x^n
y^m$ is primitive for any $n,m \geq 2$.
Note that Theorem 16 of~\cite{lischke2011primitive} is a direct consequence of Theorem 9.2.4 of~\cite{lothaire1997combinatorics}.

From $\proot(w)\neq\proot(v)$ we thus obtain that
$h(\mathtt 0)$ and $h(\mathtt 1)$ are two distinct primitive words of equal length.
Thus, $h$ is a uniform morphism.
Therefore, we can apply \autoref{lemma:morphGeneralize} for any $k\in\NN$
and any choice of finite sets $A,B\subset\NN$ whose minimal elements
are sufficiently large.

Recall that $\delta^*(p_i,h(\mathtt 1))= p_{i+1 \pmod{n}}$ for every $i$ where $0 \leq i < n$. 
Furthermore, note that $\delta^*(p_i,h(\mathtt 0)^x)=p_i$
for every
$x\in\NN$ with $x\equiv 0 \pmod{n}$.

For any finite set $S\subset \NN$ such that $x\equiv
0\pmod{n}$ holds for all $x\in S$, consider the word
\[
 \alpha_S \df p \cdot h(\bar w_S) \cdot s.
\]
Note that the word $\bar w_S$ contains $|S|{+}1$ occurrences of the
letter $\mathtt{1}$, and between any two occurrences of the letter
$\mathtt{1}$, there are $x$ occurrences of the letter $\mathtt{0}$,
for some $x\in S$. Thus, for every $r\in\set{0,\ldots,n{-}1}$ the
following is true:
\begin{center}
If $|S|{+}1 \equiv r \pmod{n}$, then 
$\delta^*(p_0,h(\bar w_S))= p_r$.
\end{center}
Due to our particular choice of the words $s,p$ and the indices $j,\ell$,
we therefore obtain:
\begin{itemize}
\item
If $|S|{+}1\equiv j \pmod{n}$, then $\delta^*(q_0,\alpha_S)\in F$, and therefore
$\alpha_S\in \lang(\Aut)$.
\item
If $|S|{+}1\equiv \ell \pmod{n}$, then $\delta^*(q_0,\alpha_S)\not\in
F$, and therefore $\alpha_S\not\in\lang(\Aut)$.
\end{itemize}

Recall that our goal is to show that $L\deff\lang(\Aut)$ is
not $\fc$-definable.
Due to \autoref{thm:ehrenfeucht} it suffices to construct for every
$k\in\NN$ two words $u\in L$ and $u'\not\in L$ such that $u\equiv_k u'$.

Let us fix an arbitrary $k\in\NN$, choose a suitable number $m$, and let $k'\deff r(k{+}m{+}20)$ be the
number provided by \autoref{lemma:evenOnes}.
Let $P_{k'} \subseteq \mathbb{N}$ be a $k'$-Lynchian set, which
satisfies the additional condition that $x\equiv 0 \pmod{n}$ for all $x\in P_{k'}$,
note that according to \autoref{defn:Lynchian}, this condition can
easily be satisfied.

From \autoref{thm:lynch} we know that $(\mathbb{N}, +, A) \equiv_{k'} (\mathbb{N}, +, B)$
for all finite sets $A, B \subseteq P_{k'}$ where $d(k') < |A|, |B|$.
As $k'$ is $r(k{+}m{+}20)$, by \autoref{lemma:evenOnes} we obtain: $\bar w_A \equiv_{k+m+20} \bar w_B$.
Provided that the minimal elements in $A$ and $B$ are sufficiently large,
\autoref{lemma:morphGeneralize} yields that $h(\bar w_A) \equiv_{k+m}
h(\bar w_B)$.
Using \autoref{lemma:pseudoCongruence}, and having chosen $m$
sufficiently large, we obtain that
$p\cdot h(\bar w_A)\cdot s \equiv_{k} p\cdot h(\bar w_B)\cdot s$,
i.e., we have $\alpha_A \equiv_k \alpha_B$.

Finally, we complete this proof by choosing two finite sets $A,B\subseteq P_{k'}$
where $d(k')<|A|,|B|$, and the minimal elements in $A$ and $B$ are large enough for
\autoref{lemma:morphGeneralize}, and where we have $|A|{+}1 \equiv j
\pmod{n}$ and $|B|{+}1\equiv \ell \pmod{n}$. Then, for the words
$u\deff \alpha_A$ and $u'\deff \alpha_B$ we have: $u\in L$, $u'\not\in
L$, and (provided that we have chosen $m$ large enough for applying \autoref{lemma:pseudoCongruence})  $u\equiv_k u'$. This proves that $L$ is not definable in $\fc$,
and it completes the proof of \autoref{thm:fc-lsc}.
\end{proof}

The contraposition of~\autoref{thm:fc-lsc} states: $\lang(\mathcal{M}) \in \lang(\fc)$ implies $\mathcal{M}$ does \emph{not} have a loop-step cycle.
Thus, we have the final step of~\autoref{thm:main}, that being $\ref{item:mainthm:fc}\Rightarrow\ref{item:mainthm:lsc}$.
Consequently, this concludes the proof of~\autoref{thm:main}.

\section{Conclusions}\label{sec:conclusion}
In this paper, we have provided a decidable characterization of the $\fc$-definable regular languages in terms of (generalized) regular expressions, automata, and algebra (\autoref{thm:main}).
Moreover, we have shown that this characterization is decidable, and is in fact $\mathsf{PSPACE}$-complete for minimal DFAs (\autoref{thm:PSPACE}).

A promising next step would be to study the expressive power of fragments of $\fc$. 
The existential-positive fragment is particularly interesting due to its tight connection with core spanners~\cite{frey2019finite} and word equations (see Section~6.2 of~\cite{DBLP:journals/mst/Freydenberger19}).

\section*{Acknowledgements}
The authors would like to thank the anonymous reviewers for their valuable feedback and suggestions.
In particular, regarding the comment that vastly simplified Section 7.4 of this work.
This work was supported by the EPSRC grant EP/T033762/1.
The first author would like to thank Joel Day for interesting discussions regarding this paper and related open problems.

\section*{Data Availability}
No data was analysed, captured or generated for this work.

\bibliographystyle{alphaurl}
\bibliography{ref}

\end{document}